\newenvironment{pf}{\noindent\textbf{Proof.}\quad}{\hfill{$\Box$}}
\newtheorem{lem}{Lemma}
\newtheorem{thm}{Theorem}
\newtheorem{conj}{Conjecture}
\newtheorem{cor}{Corollary}
\newtheorem{ex}{Example}
\newcommand{\F}{\mathbb{F}}
\newcommand{\Z}{\mathbb{Z}}
\newcommand{\beg}{\begin{equation}}
\newcommand{\eeg}{\end{equation}}
\newcommand{\m}{\mbox}
\newcommand{\mf}{\hspace{5mm} \mbox{ }}
\newcommand{\mz}{\hspace{2mm} \mbox{ }}
\newcommand{\bite}{\begin{itemize}}
\newcommand{\eite}{\end{itemize}}
\newcommand{\al}{\alpha}
\newcommand{\cev}{\overset{{}_{{\bf\leftarrow}}}}
\newenvironment{proof}{{\em Proof:}}
\def\bra#1{\mathinner{\langle{#1}|}}
\def\ket#1{\mathinner{|{#1}\rangle}}
\newcommand{\ketbra}[2]{|#1\rangle\langle#2|}
\begin{document}

\title{Mixed graph states}
\author{
Constanza Riera, Ramij Rahaman, Matthew G. Parker
\thanks{C. Riera is with Bergen University College, Norway. E-mail: csr@hib.no.\
M.G. Parker is with University of Bergen, Norway. E-mail: matthew@ii.uib.no.\ R. Rahaman is with  University of Allahabad, India. E-mail: ramijrahaman@gmail.com}
}

\date{\today}
\maketitle

\section{Introduction}

The quantum {\em graph state}\cite{graphstate} is a pure quantum $n$-qubit state whose {\em joint stabilizer} can be written
using a symmetric matrix whose elements are Pauli matrices. The Pauli matrices are:
$$ I = \left ( \begin{array}{rr} 1 & 0 \\ 0 & 1 \end{array} \right ), \mf
X = \left ( \begin{array}{rr} 0 & 1 \\ 1 & 0 \end{array} \right ), \mf
Z = \left ( \begin{array}{rr} 1 & 0 \\ 0 & -1 \end{array} \right ), \mf
Y = iXZ, $$
where $i = \sqrt{-1}$.
More precisely, in \cite{graphstate}, graph states in $n$ qubits are defined as the unique, common eigenvector in $({\mathbb{C}}^2)^n$ to the set of independent commuting observables:
$$K_a=X_aZ_{{\cal{N}}_a}=X_a\prod_{b\in{\cal{N}}_a} Z_b$$
where $A_a=I\otimes I\otimes\cdots\otimes A\otimes I\otimes\cdots\otimes I$, ${\cal{N}}_a$ are the neighbours of $a$ in the associated graph (see below), and where the eigenvalues to the observables $K_a$ are +1 for all $a=0,\ldots,n-1$.
%Specifically, and without loss of
%generality, it has been shown that we can put $X$ and/or $Y$ on the diagonal and $Z$ and/or $I$ off the diagonal. 
A natural generalisation is given by $$K_a=\sigma_aZ_{{\cal{N}}_a}=\sigma_a\prod_{b\in{\cal{N}}_a} Z_b$$
where $\sigma\in\{X,Y\}$. For 
the {\em mixed graph states} of this paper we require this extra generality.
%as $U_v=X_vZ_{{\cal{N}}_v}$.
%However, for  expand this to $U_v=M_vZ_{{\cal{N}}_v}$, where $M=X$ or $Y$. The need for this %definition is that in our definition of {\em mixed graph states} (see next section) we will %encounter such stabilizers.

We represent a pure quantum graph state by a simple graph, hence the name {\em graph state}.
For instance, for $n = 3$, the simple graph with vertices $V = \{0,1,2\}$ and edges $E = \{01,02\}$ has an
adjacency matrix $A = \left ( \begin{small}
\begin{array}{lll} 0 & 1 & 1 \\ 1 & 0 & 0 \\ 1 & 0 & 0 \end{array}
\end{small} \right )$ and can be used to represent the pure quantum state of 3 qubits that is
jointly stabilized (with eigenvalue 1) by $X \otimes Z \otimes Z$, $Z \otimes X \otimes I$,
and $Z \otimes I \otimes X$. This set of $n = 3$ joint stabilizers can be written in matrix form as
${\cal A} = i^{I_3 \star A} \star X^{I_3} \star Z^A =
\left ( \begin{small}
\begin{array}{lll} 1 & 1 & 1 \\ 1 & 1 & 1 \\ 1 & 1 & 1 \end{array}
\end{small} \right ) \star \left ( \begin{small}
\begin{array}{lll} X & I & I \\ I & X & I \\ I & I & X \end{array}
\end{small} \right ) \star \left ( \begin{small}
\begin{array}{lll} I & Z & Z \\ Z & I & I \\ Z & I & I \end{array}
\end{small} \right ) =  \left ( \begin{small}
\begin{array}{lll} X & Z & Z \\ Z & X & I \\ Z & I & X \end{array}
\end{small} \right )$, where `$\star$ is the element-wise product,
$I_3$ is the $3 \times 3$ identity, and
the connection with the adjacency matrix, $A$, should be clear. As another example, for
$A = \left ( \begin{small}
\begin{array}{lll} 0 & 1 & 1 \\ 1 & 1 & 1 \\ 1 & 1 & 1 \end{array}
\end{small} \right )$, we get ${\cal A} = i^{I_3 \star A} \star X^{I_3} \star Z^A =
\left ( \begin{small}
\begin{array}{lll} 1 & 1 & 1 \\ 1 & i & 1 \\ 1 & 1 & i \end{array}
\end{small} \right ) \star \left ( \begin{small}
\begin{array}{lll} X & I & I \\ I & X & I \\ I & I & X \end{array}
\end{small} \right ) \star \left ( \begin{small}
\begin{array}{lll} I & Z & Z \\ Z & Z & Z \\ Z & Z & Z \end{array}
\end{small} \right ) =  \left ( \begin{small}
\begin{array}{lll} X & Z & Z \\ Z & Y & Z \\ Z & Z & Y \end{array}
\end{small} \right )$.

As the system is stabilized by any of the $n$ matrix rows, it is therefore also stabilized by
any product of the rows. In general, this set of stabilizers forms a stabilizer code with $2^n$
elements, i.e. a stabilizer group, $S$. For
instance, for $A = \left ( \begin{small}
\begin{array}{lll} 0 & 1 & 1 \\ 1 & 0 & 0 \\ 1 & 0 & 0 \end{array}
\end{small} \right )$, we have that
$(X \otimes Z \otimes Z)(Z \otimes X \otimes I) = (Y \otimes Y \otimes I)$
is also a joint stabilizer of our graph state, as is
$(Z \otimes I \otimes X)(X \otimes Z \otimes Z)(Z \otimes X \otimes I) = -(X \otimes Y \otimes Y)$.

\vspace{2mm}

Given the matrix ${\cal A}$ of stabilizers the implicit assumption above
is that these elements are stabilizing a
{\em pure} quantum state, i.e. that the $n$ qubits are not entangled with an environment. Given
this assumption then the rows of ${\cal A}$ must commute, otherwise they cannot
define (stabilize) an $n$-qubit pure state. Let $\ket{\psi}$ be this pure state and let
$U$ be any row of ${\cal A}$. Then $U\ket{\psi}=\ket{\psi}$, where $\ket{\psi}$ is unique, the density matrix of the graph state is
$\rho=\ket{\psi}\bra{\psi}$, and $U\rho U^\dagger=\rho$.

%We have that $U_v\ket{\psi}=\ket{\psi}$, where $\ket{\psi}$ is unique. The density matrix of the %graph state is then
%$\rho=\ket{\psi}\bra{\psi}$. We have then that $U_v\rho U_v^\dagger=\rho$ for all $v$.

\begin{lem} Let $\rho=\ket{\psi}\bra{\psi}$ be the density matrix of a pure graph state,
%\begin{footnote}{If $\rho$ is not pure, then it is not uniquely determined as, for instance, %$X\otimes Z$, $Z\otimes X$  will stabilize $\left|(-1)^{x_0x_1}\right>\left<(-1)^{x_0x_1}\right|$ %but also %$1/2(\left|(-1)^{x_0x_1}\right>\left<(-1)^{x_0x_1}\right|+\left|(-1)^{x_0x_1+x_0+x_1}\right>\left<(-1)^{x_0x_1+x_0+x_1}\right|)$, %and $1/4(\left|(-1)^{x_0x_1}\right>\left<(-1)^{x_0x_1}\right|+
%\left|(-1)^{x_0x_1+x_0+x_1}\right>\left<(-1)^{x_0x_1+x_0+x_1}\right|+
%\left|(-1)^{x_0x_1+x_0}\right>\left<(-1)^{x_0x_1+x_0}\right|+
%\left|(-1)^{x_0x_1+x_1}\right>\left<(-1)^{x_0x_1+x_1}\right|)$.}\end{footnote}
and $U$ as above. Then, as $U$ is taken over all rows of ${\cal A}$,  then $U\rho U^\dagger=\rho$ determines $\rho$ uniquely up to local Pauli unitary equivalence. \end{lem}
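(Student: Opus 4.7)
The plan is to split the argument into three steps: first reduce the density-matrix equation $U\rho U^\dagger=\rho$ to a joint eigenvalue equation for the underlying pure state $\ket\psi$; then invoke the standard stabilizer-formalism uniqueness statement inside each sign sector; finally check that the $2^n$ possible sign sectors are permuted transitively by single-qubit Pauli operators, so that any two candidate density matrices are related by a local Pauli unitary.

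For the first step, each row $K_a$ of $\mathcal A$ is, up to the overall phase coming from the $i^{I_n\star A}$ factor, a tensor product of Paulis, so it is Hermitian and satisfies $K_a^2=I$. For a pure $\rho=\ket\psi\bra\psi$, the relation $K_a\rho K_a^\dagger=\rho$ says that $K_a\ket\psi$ is proportional to $\ket\psi$, and applying $K_a$ once more together with $K_a^2=I$ forces the proportionality constant to lie in $\{\pm1\}$. Hence the full system of equations $U\rho U^\dagger=\rho$ is equivalent to a joint eigenvalue problem $K_a\ket\psi=\epsilon_a\ket\psi$ for some sign vector $\epsilon\in\{\pm1\}^n$.

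For the second step, commutation of the rows of $\mathcal A$ is the pure-state hypothesis, while independence of $K_0,\ldots,K_{n-1}$ in the Pauli group modulo phases is immediate: $K_a$ is the only generator whose action on qubit $a$ is non-diagonal (an $X$ or a $Y$, not an $I$ or a $Z$), so no nontrivial product of the $K_a$'s can be a scalar. Thus $\{K_a\}$ generates a maximal abelian subgroup of the $n$-qubit Pauli group, and standard stabilizer theory yields, in each sign sector $\epsilon$, a unique (up to global phase) pure eigenvector $\ket{\psi_\epsilon}$ and hence a unique $\rho_\epsilon$.

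For the final step, $Z_a$ commutes with $K_b$ for every $b\neq a$ (because $K_b$ either does not touch qubit $a$ at all, or does so only through a $Z$ factor) and anticommutes with $K_a$ (because $\sigma_a\in\{X,Y\}$ anticommutes with $Z$). Therefore $Z_a\ket{\psi_\epsilon}$ lies in the sector obtained from $\epsilon$ by flipping the $a$-th sign, and iterating shows that all $2^n$ sectors are related by products of single-qubit Pauli operators, which is exactly local Pauli unitary equivalence. This gives the claimed uniqueness. The only delicate point I anticipate is verifying the independence of the $K_a$'s modulo phase, without which the joint eigensectors could be larger than one-dimensional; this is handled by the structural observation above that $K_a$ is uniquely marked by carrying the only off-diagonal Pauli on qubit $a$.
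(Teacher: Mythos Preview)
Your proof is correct and follows the same overall line as the paper's: both begin by reducing $U\rho U^\dagger=\rho$ for a pure $\rho$ to the joint eigenvalue statement $K_a\ket\psi=\pm\ket\psi$, and both then argue that the resulting ambiguity is exactly a local Pauli orbit.

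The difference is one of completeness rather than strategy. The paper's argument is very brief: after obtaining the $\pm 1$ eigenvector condition it simply observes that for any Pauli tensor $P$ one has $U(P\rho P^\dagger)U^\dagger = P\rho P^\dagger$, i.e.\ the solution set is closed under local Pauli conjugation, and then asserts the conclusion. This tacitly relies on the uniqueness of the joint $+1$ eigenvector already quoted in the introduction (from \cite{graphstate}), and leaves implicit why every sign sector is reached. Your proof makes both of these points explicit: you verify independence of the $K_a$ (via the ``only $K_a$ carries an off-diagonal Pauli on qubit $a$'' observation) to get one-dimensionality of each sign sector, and you exhibit $Z_a$ as the operator that flips exactly the $a$-th sign, giving transitivity of the local Pauli action on the $2^n$ sectors. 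So your argument is self-contained where the paper's leans on prior context; nothing in your write-up is wrong or superfluous.
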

\begin{pf} %Since $U$ is unitary, $U^\dagger=U^{-1}$. Furthermore, if $U\in\{X,Z,I\}^{\otimes n}$, then since %$X^{-1}=X$ and $Z^{-1}=Z$, then $U^{-1}=U$. PROOF IN PROGRESS...
Since $U$ is a tensor product of local Pauli matrices, $U\ket{\psi}\bra{\psi} U^\dagger=\ket{\psi}\bra{\psi}$ implies $U\ket{\psi}=\pm\ket{\psi}$. This means that $\ket{\psi}$ is a common eigenvector of the set of matrices, $U$, with eigenvalue $\pm1$. Let $P$ be an $n$-fold tensor product of Pauli matrices. Then it is straightforward that
$UP\rho P^\dagger U^\dagger= PU\rho U^\dagger P^\dagger = P \rho P^\dagger$, so $\ket{\psi}$ is defined by $U$ up to local Pauli unitary equivalence.
\end{pf}

\vspace{2mm}

The symmetric matrix ${\cal A}$ is completely characterised, up to, but not including, commutation,
by ${\cal A}_4 = A + \omega I_n$, where ${\cal A}_4$ is an additive matrix over $\F_4$, $I_n$ is the $n \times n$ identity matrix, $\omega$ is a primitive generator of $\F_4$, i.e.
$\omega^3 = 1$, and where $I \rightarrow 0$,
$Z \rightarrow 1$, $X \rightarrow \omega$, $Y \rightarrow \omega^2$. For instance, for $A = \left ( \begin{small}
\begin{array}{lll} 0 & 1 & 1 \\ 1 & 1 & 1 \\ 1 & 1 & 1 \end{array}
\end{small} \right )$,
${\cal A}_4 = \left ( \begin{small}
\begin{array}{lll} \omega & 1 & 1 \\ 1 & \omega^2 & 1 \\ 1 & 1 & \omega^2 \end{array}
\end{small} \right )$. The set of $2^n$ stabilizers then become the set of $2^n$ codewords of an
$\F_4$ additive code (where multiplication of rows of ${\cal A}$ becomes $\F_4$ addition of rows of
${\cal A}_4$). This code is self-dual with respect to the Hermitian inner product because
the matrix is symmetric with $\omega$ or $\omega^2$ on the diagonal.
Observe, however, that this map from stabilizers to $\F_4$ does
not take into account matrix commutation as it does not
capture the minus sign of, for instance, $XZ = -ZX$, as this translates to $w.1 = 1.w$.

\vspace{3mm}

In this paper we define and characterise mixed graph states, by considering square matrices ${\cal A}$ that are
not, in general, symmetric. In previous
work \cite{LEdirect} the resulting $\F_4$-additive codes (not, in general, self-dual) have been
classified and referred to as {\em directed graph codes}. For instance, the adjacency matrix
$A = \left ( \begin{small}
\begin{array}{lll} 0 & 1 & 0 \\ 0 & 0 & 1 \\ 0 & 1 & 0 \end{array}
\end{small} \right )$
is not symmetric and describes a mixed graph
\begin{footnote}{A {\em mixed graph} is a graph where some of the edges may be directed.
In contrast, all edges of a {\em directed graph} are directed. %In (cite previous paper) we referred
%to mixed graphs as directed graphs, but in fact the codes that we constructed there
%should have been referred to as {\em mixed graph codes} as they were obtained, more generally, from
%mixed graphs. In this paper, mixed graphs include as special cases, the situations where no edges
%are directed, and where all edges are directed.
}\end{footnote}
comprising an undirected edge $12$, and
a directed edge $0 \rightarrow 1$.

As before we can map such a matrix to a matrix of stabilizers,
thus:

$$ {\cal A} = \left ( \begin{small}
\begin{array}{lll} X & Z & I \\ I & X & Z \\ I & Z & X \end{array}
\end{small} \right ). $$

But what is a natural quantum interpretation of such a matrix? Rows 0 and 2, and 1 and 2 commute, but
rows 0 and 1 anti-commute. A quantum interpretation is only possible if all rows pairwise commute, so we choose to
extend the rows by extra columns until the rows pairwise commute. Implicit in this column extension choice is the assumption that we add environmental qubits as opposed to, more generally, qudits or continuous variables.

\vspace{2mm}

In this paper we show how such matrices can define a class of mixed states, which we call
{\em mixed graph states}. The products of the rows of the matrix still form a group, $S$, up to
a global constant, but now
the group does not have a joint eigenvector, i.e. we don't have a pure state, since some of the rows anti-commute.

Due to anti-commutativity,
some of the elements of the group are only defined up to a global constant of $\pm 1$ (there is
no natural ordering on the rows of ${\cal A}$).
For instance, for the matrix ${\cal A}$ associated with our example above,
$(X \otimes Z \otimes I)(I \otimes X \otimes Z) = -(I \otimes X \otimes Z)(X \otimes Z \otimes I)$, as the two rows anti-commute. So $S$ contains $\pm i(X \otimes Y \otimes Z)$.

We can than define a density matrix, $\rho$, which is stabilized by all members of $S$. Specifically,
$$ s\rho s^{\dag} = \rho \mf \mf \forall s \in S. $$
More generally, $\rho$ is stabilized by any $\al s$, where $|\al| = 1$, which is why we don't have to concern ourselves with global phase constants in members of $S$.
%Mixed graphs (that allows directed and undirected edges) however are related to non-commuting stabilizers and will not be associated with a pure state. However, we can associate them with a mixed quantum state:
%We will then represent the arrowed edge $u\rightarrow v$ locally by the non-commuting stabilizers $X_uZ_v$, $I_uX_v$, while the non-arrowed edge $u-v$ is represented by the commuting stabilizers $X_uZ_v$, $Z_uX_v$.  We define the mixed graph state associated with the graph $G$ as   the quantum state given by $\rho$ such that $U\rho U^\dagger=\rho\ \forall U\in S$, where $S$ is the stabilizer group generated by the  $n$ unitary matrices  $X_jZ_{\vec{{\cal{N}}}_j},\,j=0,\ldots,n-1$, where ${\vec{\cal{N}}}_j$ are the neighbours of $j$ in the graph.

\vspace{3mm}

Throughout this paper we define the pure state vectors using a Boolean function notation
\cite{thesis}. Specifically, we associate with our $n$-qubit graph state
an $n$-variable homogeneous generalised quadratic Boolean function $p : \F_2^n \rightarrow {\mathbb{Z}}_4$,
where
$p(x_0,x_1,\ldots,x_{n-1})=\sum_{j<k} 2A_{jk}x_jx_k + \sum_j A_{jj}x_j$, and $A$ is the modified adjacency matrix of our graph state with elements $A_{jk}$, defined as $A_{jk}=\left\{\begin{array}{lc}
\Gamma_{jk}&,j\neq k\\
0&,j= k \mbox{ and }\sigma_j=X\\
1&,j= k \mbox{ and } \sigma_j=Y
\end{array}\right.$,
with $\Gamma$ the adjacency matrix of the graph. Then one can write the graph state $\ket{\psi}$ as $\ket{\psi}= 2^{-\frac{n}{2}}i^p$.

In later sections, we refer to the nodes where $\sigma_j=X$ as white nodes, and to the nodes where $\sigma_j=Y$ as red nodes.

\vspace{3mm}

\begin{ex}
%{\em Example 1:}
We show here the mixed graph state associated with the directed triangle:
%\begin{figure}
\begin{center}
\scalebox{0.5}{\begin{picture}(0,0)%
\includegraphics{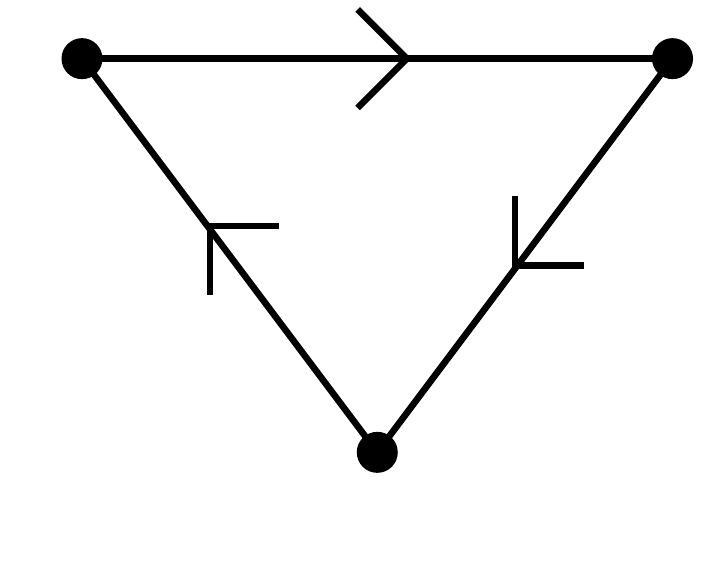}%
\end{picture}%
\setlength{\unitlength}{4144sp}%
\begingroup\makeatletter\ifx\SetFigFont\undefined%
\gdef\SetFigFont#1#2#3#4#5{%
  \reset@font\fontsize{#1}{#2pt}%
  \fontfamily{#3}\fontseries{#4}\fontshape{#5}%
  \selectfont}%
\fi\endgroup%
\begin{picture}(3270,2577)(2776,-2380)
\put(2791,-151){\makebox(0,0)[lb]{\smash{{\SetFigFont{20}{24.0}{\familydefault}{\mddefault}{\updefault}{\color[rgb]{0,0,0}$0$}%
}}}}
\put(6031,-151){\makebox(0,0)[lb]{\smash{{\SetFigFont{20}{24.0}{\familydefault}{\mddefault}{\updefault}{\color[rgb]{0,0,0}$1$}%
}}}}
\put(4411,-2266){\makebox(0,0)[lb]{\smash{{\SetFigFont{20}{24.0}{\familydefault}{\mddefault}{\updefault}{\color[rgb]{0,0,0}$2$}%
}}}}
\end{picture}%
}
\end{center}
%\end{figure}
\end{ex}
 The stabilizer group $S$ is generated, up to $\pm 1$ constants, by:
$$\begin{array}{c}
%S\mbox{ generated by:}\\
X\otimes Z\otimes I\\
I\otimes X\otimes Z\\
Z\otimes I\otimes X\end{array}$$
Because some of the rows of ${\cal A}$ anti-commute, instead of commute, there is no pure quantum state $\left|\psi\right>$ such that $(X\otimes Z\otimes I)\left|\psi\right>=(I\otimes X\otimes Z)\left|\psi\right>=(Z\otimes I\otimes X)\left|\psi\right>=\left|\psi\right>$.

Instead we can associate $S$ with the mixed graph state whose density matrix is:
%\column{.40\textwidth}
$$ \rho_0=%\frac{1}{2}\left|(-1)^{x_0x_1}\right>\left<(-1)^{x_0x_1}\right|+\frac{1}{2}\left|(-1)^{x_0x_1+x_1}\right>\left<(-1)^{x_0x_1+x_1}\right|=
\frac{1}{8}\left(\begin{array}{cccccccc}
1&0&0&i&1&0&0&-i\\
0&1&i&0&0&-1&i&0\\
0&-i&1&0&0&i&1&0\\
-i&0&0&1&-i&0&0&-1\\
1&0&0&i&1&0&0&-i\\
0&-1&-i&0&0&1&-i&0\\
0&-i&1&0&0&i&1&0\\
i&0&0&-1&i&0&0&1
\end{array}\right).$$
We will discuss in later sections how we obtain this density matrix.
It is easy to check that 
$$ (X\otimes Z\otimes I)\rho_0(X\otimes Z\otimes I)^\dagger=
(I\otimes X\otimes Z)\rho_0(I\otimes X\otimes Z)^\dagger=
(Z\otimes I\otimes X)\rho_0(Z\otimes I\otimes X)^\dagger=\rho_0. $$

%(X\otimes Z\otimes I)\rho_0(X\otimes Z\otimes I)^\dagger=\rho_0\\
%(I\otimes X\otimes Z)\rho_0(I\otimes X\otimes Z)^\dagger=\rho_0\\
%(Z\otimes I\otimes X)\rho_0(Z\otimes I\otimes X)^\dagger=\rho_0\end{array}$$

$\rho_0$ is not the unique mixed graph state associated with $S$, however: Other density matrices $\rho_j$ such that
$$(X\otimes Z\otimes I)\rho_j(X\otimes Z\otimes I)^\dagger=
(I\otimes X\otimes Z)\rho_j(I\otimes X\otimes Z)^\dagger=
(Z\otimes I\otimes X)\rho_j(Z\otimes I\otimes X)^\dagger=\rho_j$$
are: %such that $(X\otimes Z)\rho(X\otimes Z)^\dagger=\rho$, and $(I\otimes X)\rho(I\otimes X)^\dagger=\rho$?
 $$\rho_1=\frac{1}{8}\left(\begin{small}\begin{array}{cccccccc}
 1&  0&1&0&0&i&0&-i  \\
    0&1&0&1&-i&0&i&0\\
    1&0&1&0&0&i&0&-i\\
    0&1&0&1&-i&0&i&0\\
    0&i&0&i&1&0&-1&0\\
    -i&0&-i&0&0&1&0&-1\\
    0&-i&0&-i&-1&0&1&0\\
    i&0&i&0&0&-1&0&1
      \end{array}\end{small}\right), \rho_2=\frac{1}{8}\left(\begin{small}\begin{array}{cccccccc}
1&1&0&0&0&0&i&-i\\
1&1&0&0&0&0&i&-i \\
0&0&1&-1&i&i&0&0\\
 0&0&-1&1&-i&-i&0&0\\
0&0&-i&i&1&1&0&0\\
0&0&-i&i&1&1&0&0\\
-i&-i&0&0&0&0&1&-1\\
i&i&0&0&0&0&-1&1  \end{array}\end{small}\right)$$

There are also another 3 matrices which are locally equivalent  to these 3  (see lemma \ref{equiv}). Thus we define
   the mixed graph state associated to the graph as $$\rho=\sum_{i=0}^5c_j\rho_j$$ where $c_j\geq0,\ \sum_{j=0}^5c_j=1$, i.e. the convex sum of these
  density matrices. Note that, although only at most 3 of them are locally inequivalent, we include them all in the general sum $\rho$, since changing $\rho_j$ for a  locally equivalent $\rho_j'$ does not necessarily give local equivalence in $\rho$.

  Each of the $\rho_j$ can be associated with a pure graph state in 4 qubits. For instance
  $$\rho_0=\frac{1}{2}\left|i^{2(x_0x_2+x_1x_2)+x_2}\right>\left<i^{2(x_0x_2+x_1x_2)+x_2}\right|+\frac{1}{2}\left|i^{2(x_0x_2+x_1x_2+x_1+x_2)+x_2}\right>\left<i^{2(x_0x_2+x_1x_2+x_1+x_2)+x_2}\right|$$ is the density matrix resulting from a measurement of qubit 3 of the pure graph state $i^{2(x_0x_2+x_1x_2+x_1x_3+x_2x_3)+x_2}$ (in terms of Boolean functions, this would correspond to the evaluation of $x_3$ at 0 and 1). Similarly, %(generalised\footnote{See next section.})
  $\rho_1$ and $\rho_2$ are the density matrices resulting from a measurement of qubit 3 of the pure graph states $i^{2(x_0x_1+x_0x_3+x_2x_3+x_0)+x_2}$ and $i^{2(x_0x_1+x_1x_2+x_0x_3+x_1x_3)+x_1}$, while the remaining $\rho_j$ will be given by the addition of linear terms to the polynomial.

\section{Mixed graph states}
\label{mixeddef}

To recap, a matrix such as ${\cal A} = \left ( \begin{small}
\begin{array}{lll} X & Z & I \\ I & X & Z \\ I & Z & X \end{array}
\end{small} \right )$ has rows that are not fully pairwise
commuting, but can still be interpreted as a quantum object
by making it part of a larger fully commuting matrix, i.e. where we choose the environment
appropriately, and this will imply that our quantum object is a mixed state, in fact a {\em mixed graph
state} (in this paper, we limit our study to the addition of environmental qubits as opposed to, more generally, qudits).

Let $G$ be the mixed graph defined by adjacency matrix $A$, and $G_b$ the undirected graph that has adjacency matrix $\Gamma=A+A^T$. Thus
$G_b$ is the simple graph obtained from $G$ by erasing all undirected edges of $G$ and making all directed edges of $G$ undirected.

What is the minimum number of columns, $e$, necessary to append to
${\cal A}$ so as to make all rows pairwise commute? We can then add the same number, $e$, of rows to
the matrix so as to make it square again, making sure that all resultant rows pairwise commute with the previous rows and with each other.
For instance, we can extend our example by $e = 1$ columns and rows to get
${\cal A}' = \left ( \begin{small}
\begin{array}{cccc}
\multicolumn{3}{c}{{\cal A}} & \begin{array}{c} Z \\ X \\ I \end{array} \\
Z & I & I & X
\end{array}
\end{small} \right ) =
\left ( \begin{small} \begin{array}{llll}
X & Z & I & Z \\ I & X & Z & X \\ I & Z & X & I \\ Z & I & I & X
\end{array}
\end{small} \right )$, which is fully commuting.
So for this example the minimum number of columns and rows necessary to make the matrix fully commuting
was $e = 1$. Finally, as this matrix is fully commuting we can,
by suitable row multiplications, recover its graph form
${\cal A}^e = \left ( \begin{small} \begin{array}{llll}
X & Z & I & Z \\ Z & X & Z & I \\ I & Z & X & I \\
Z & I & I & X \end{array} \end{small} \right )$.

We consider the pure graph state described by ${\cal A}^e$ to be a {\em parent} graph state, being
a parent of the mixed graph state described by ${\cal A}$. We then denote the resultant mixed state given by tracing out the environmental qubits as its {\em child}.

%Notation: In graph theory, if an edge $e=(u,v)$ is directed from $u$ to $v$, $u$ is called a {\em directed predecessor} of $v$, and $v$ a {\em directed successor} of $u$.
%Note that $(v,u)$ might also be an edge in the mixed graph, in which case the edge is undirected %(not arrowed) edge $u- v$) or not be an edge (in which case we have an arrowed edge $u\rightarrow %v$);
%The set of direct successors of $u$ is referred to  as the out-neighbourhood of $u$, while the set of direct predecessors of $u$ is referred to  as the in-neighbourhood of $u$.
% In this paper,
%we will denote the out-neighbourhood of $u$ by $\vec{{\cal{N}}}_u$, and the in-neighbourhood of $u$ by %$\overset{{}_{\leftarrow}}{{\cal{N}}_u}$
%$\cev{{\cal{N}}_u}$. We define  ${\cal{N}}_u:=\vec{{\cal{N}}}_u\cap\cev{{\cal{N}}_u}$, the undirected neighbourhood of $u$.

%\vspace{2mm}

For our example, ${\cal A} = \left ( \begin{small}
\begin{array}{lll} X & Z & I \\ I & X & Z \\ I & Z & X \end{array}
\end{small} \right )$, we have
$A = \left ( \begin{small}
\begin{array}{lll} 0 & 1 & 0 \\ 0 & 0 & 1 \\ 0 & 1 & 0 \end{array}
\end{small} \right )$, and the minimum number of columns and rows necessary to
add to ${\cal A}$ to make it fully commuting is
$$e = \frac{1}{2}{\m{rank}}(\Gamma) = \frac{1}{2}{\m{rank}}
\left ( \begin{small}
\begin{array}{lll} 0 & 1 & 0 \\ 1 & 0 & 0 \\ 0 & 0 & 0 \end{array}
\end{small} \right ) = 1$$

In general:

\begin{lem} [special case of \cite{Brun}]
The minimum number, $e$, of columns and rows required to be added to
${\cal A}$ to make its rows pairwise fully commuting is given by
\beg e = \frac{1}{2}{\m{rank}}(\Gamma). \label{dirrank} \eeg
We refer to $e$ as the {\em mixed rank}.
\label{rankres}
\end{lem}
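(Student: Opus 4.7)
The plan is to rephrase pairwise commutativity of the rows of $\mathcal{A}$ as a rank condition on an alternating form over $\F_2$. Identify each $n$-qubit Pauli operator with a pair $(x,z)\in\F_2^n\times\F_2^n$ via $I\leftrightarrow(0,0)$, $X\leftrightarrow(1,0)$, $Z\leftrightarrow(0,1)$, $Y\leftrightarrow(1,1)$ applied factor-wise; two Paulis with symplectic vectors $(x,z)$ and $(x',z')$ commute iff $x\cdot z' + x'\cdot z = 0\pmod 2$. For our matrix, row $a$ has $x$-part $e_a$ (the single $X$ or $Y$ at position $a$) and $z$-part equal to the $a$-th row of $A$, so a direct computation gives, for $a\neq b$, commutator value $A_{ab}+A_{ba}=\Gamma_{ab}$. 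Thus the rows commute pairwise iff $\Gamma\equiv 0\pmod 2$, and the ``commutator matrix'' of $\mathcal{A}$ is exactly $\Gamma$.

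Appending $e$ new columns amounts to choosing, for each row $a$, an extra symplectic vector $(u_a,v_a)\in\F_2^{2e}$. Letting $U,V$ be the $n\times e$ matrices with rows $u_a,v_a$, the new commutator matrix becomes $\Gamma + UV^T + VU^T$, so the minimization question is: what is the smallest $e$ such that $\Gamma = UV^T + VU^T$ over $\F_2$? Note that $\Gamma=A+A^T$ is symmetric with zero diagonal (since $2A_{aa}\equiv 0$), so $\Gamma$ is an alternating bilinear form on $\F_2^n$ and its rank is an even integer $2r$.

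The lower bound $e\geq r$ is immediate from $\m{rank}(\Gamma)\leq \m{rank}(UV^T)+\m{rank}(VU^T)\leq 2e$. For the upper bound, by the symplectic (Darboux) normal form over $\F_2$ there is an invertible $P$ such that $P\Gamma P^T$ is block-diagonal with $r$ copies of $\left(\begin{smallmatrix}0&1\\1&0\end{smallmatrix}\right)$ and zeros elsewhere; taking the obvious $\bar U,\bar V$ of width $r$ that realize this normal form and setting $U=P^{-1}\bar U$, $V=P^{-1}\bar V$ produces an exact decomposition with $e=r$. This handles the column count. Finally, with the extended columns the $n$ rows span an $n$-dimensional isotropic subspace of $\F_2^{2(n+e)}$, which by the standard extension theorem for symplectic spaces can be completed to a Lagrangian of dimension $n+e$, supplying the $e$ additional commuting rows that make the matrix square.

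The main obstacle is verifying the applicability of the symplectic normal form in characteristic 2: one needs that a zero-diagonal symmetric matrix over $\F_2$ (which is the natural $\F_2$ version of an alternating form) admits a Darboux basis. This is the standard inductive argument — pick any pair $u,v$ with $u^T\Gamma v = 1$ to split off a hyperbolic plane, then recurse on its symplectic complement — and it goes through unchanged provided the zero-diagonal hypothesis is maintained, which it is for $A+A^T$.
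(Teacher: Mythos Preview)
Your argument is correct. The paper itself does not prove this lemma: it simply cites \cite{Brun} and moves on, so there is no in-paper proof to compare against. Your symplectic reformulation---identifying row $a$ with $(e_a,A_a)\in\F_2^{2n}$, observing that the pairwise commutator matrix is exactly $\Gamma=A+A^T$, and then reducing to the question of the minimal $e$ with $\Gamma=UV^T+VU^T$---is the standard route and matches what one finds in \cite{Brun}. The lower bound $2e\ge\m{rank}(\Gamma)$ via subadditivity of rank, and the upper bound via the Darboux normal form for alternating forms over $\F_2$ (zero-diagonal symmetric matrices), are both sound; your care about characteristic $2$ is warranted and your hyperbolic-plane induction handles it. The final step, completing the $n$ isotropic rows to a Lagrangian of dimension $n+e$ in $\F_2^{2(n+e)}$, correctly supplies the extra $e$ commuting rows. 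In short, you have filled in what the paper leaves to the reference.
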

%\begin{pf}
%(ADD REFERENCE TO Quantum LDPC work by Brun .....etc..).
%\end{pf}

For our example we extended to a 4-qubit
pure graph parent state where the 4th qubit is part of the environment. This pure graph state can be written
using Boolean function notation as
$$ \ket{\psi}_e = \frac{1}{4}(-1)^{x_0x_1 + x_0x_3 + x_1x_2}
= \frac{1}{4}(1,1,1,-1,1,1,-1,1,1,-1,1,1,1,-1,-1,-1)^T. $$
Tracing out the 4th qubit, means summing the projectors obtained by fixing $x_3 = 0$ and $1$,
respectively. We obtain the projectors $\ketbra{\phi_0}{\phi_0}$ and
$\ketbra{\phi_1}{\phi_1}$, where
$\ket{\phi_0} = \frac{1}{\sqrt{8}}(-1)^{x_0x_1 + x_1x_2}$,
$\ket{\phi_1} = \frac{1}{\sqrt{8}}(-1)^{x_0x_1 + x_1x_2 + x_0}$, and
$$ \begin{array}{ll}
\rho & = \ketbra{\phi_0}{\phi_0} +  \ketbra{\phi_1}{\phi_1} \\
     &  =
\frac{1}{8}\left ( \begin{tiny} \begin{array}{rrrrrrrr}
1 & 1 & 1 & -1 & 1 & 1 & -1 & 1 \\
1 & 1 & 1 & -1 & 1 & 1 & -1 & 1 \\
1 & 1 & 1 & -1 & 1 & 1 & -1 & 1 \\
-1& -1& -1&  1 &-1 &-1 &  1 &-1 \\
1 & 1 & 1 & -1 & 1 & 1 & -1 & 1 \\
1 & 1 & 1 & -1 & 1 & 1 & -1 & 1 \\
-1& -1& -1&  1 &-1 &-1 &  1 &-1 \\
1 & 1 & 1 & -1 & 1 & 1 & -1 & 1
\end{array} \end{tiny} \right ) +
\frac{1}{8}\left ( \begin{tiny} \begin{array}{rrrrrrrr}
1 &-1 & 1 &  1 & 1 &-1 & -1 &-1 \\
-1& 1 &-1 & -1 &-1 & 1 &  1 & 1 \\
1 &-1 & 1 &  1 & 1 &-1 & -1 &-1 \\
1 &-1 & 1 &  1 & 1 &-1 & -1 &-1 \\
1 &-1 & 1 &  1 & 1 &-1 & -1 &-1 \\
-1& 1 &-1 & -1 &-1 & 1 &  1 & 1 \\
-1& 1 &-1 & -1 &-1 & 1 &  1 & 1 \\
-1& 1 &-1 & -1 &-1 & 1 &  1 & 1
\end{array} \end{tiny} \right ) \\
 & = \frac{1}{4}\left ( \begin{tiny} \begin{array}{rrrrrrrr}
1 & 0 & 1 & 0 & 1 & 0 & -1 & 0 \\
0 & 1 & 0 & -1& 0 & 1 & 0  & 1 \\
1 & 0 & 1 & 0 & 1 & 0 & -1 & 0 \\
0 &-1 & 0 & 1 & 0 & -1& 0  & -1 \\
1 & 0 & 1 & 0 & 1 & 0 & -1 & 0 \\
0 & 1 & 0 & -1& 0 & 1 & 0  & 1 \\
-1& 0 & -1& 0 & -1& 0 & 1  & 0 \\
0 & 1 & 0 & -1& 0 & 1 & 0  & 1
\end{array}  \end{tiny} \right ).
\end{array} $$

So we can interpret the stabilizer matrix ${\cal A}$ as representing the mixed quantum state $\rho$.
However this is not a complete interpretation for two reasons:
\bite
    \item One can extend by more than one column/row and still obtain a fully commuting matrix.
    \item The matrix ${\cal A}$ can be extended in multiple ways, leading to multiple, possibly
    inequivalent, mixed states
    $\rho$.
\eite

We shall, initially, avoid the first issue by stipulating that we only extend by the minimum possible number,
$e$, of columns/rows, as given by (\ref{dirrank}). This is compatible with the pure graph state
formulation as $e = 0$ forces the state to be pure. As an example of the second issue, observe that
it is equally valid to extend ${\cal A}$ to
${\cal A}' = \left ( \begin{small}
\begin{array}{cccc}
\multicolumn{3}{c}{{\cal A}} & \begin{array}{c} X \\ Z \\ I \end{array} \\
I & Z & I & X
\end{array}
\end{small} \right ) =
\left ( \begin{small} \begin{array}{llll}
X & Z & I & X \\ I & X & Z & Z \\ I & Z & X & I \\ I & Z & I & X
\end{array}
\end{small} \right )$
from which, by multiplicative row operations, we obtain
${\cal A}^e = \left ( \begin{small} \begin{array}{llll}
X & I & I & I \\ I & X & Z & Z \\ I & Z & X & I \\
I & Z & I & X \end{array} \end{small} \right )$. It is clear (since the former is a connected graph and the latter is not) that the parent graph state
described by this ${\cal A}^e$ is locally inequivalent to the previous parent. Therefore
the resultant mixed state,
$\rho = \ketbra{\phi_0}{\phi_0} +  \ketbra{\phi_1}{\phi_1}$, obtained by tracing out the 4th qubit,
is different from the previous mixed state. In general we shall have multiple parent graph states leading
to multiple density matrices, so our overall density matrix shall be a convex
sum of these density matrices.

\vspace{2mm}

Define the {\em codespace} (i.e. the stabilizer group $S$) of the $n \times n$ matrix, ${\cal A}$, to be the $2^n$ stabilizers formed by products of
one or more of the rows of ${\cal A}$ - remember that, as some of the rows of ${\cal A}$
anti-commute, some of the members of $S$ are only defined up to a global multiplicative
constant of $\pm 1$. Likewise, the codespace of ${\cal A}^e$ comprises
$2^{n+e}$ stabilizers, but now all rows of ${\cal A}^e$ commute so the global constant is always $1$.

\vspace{2mm}

We refer to $H = \frac{1}{\sqrt{2}}\left ( \begin{array}{rr} 1 & 1 \\ 1 & -1 \end{array} \right )$
as the {\em Hadamard} matrix, and to
$N = \frac{1}{\sqrt{2}}\left ( \begin{array}{rr} 1 & i \\ 1 & -i \end{array} \right )$ as the
{\em negaHadamard} matrix. Both are unitary.

\begin{lem}
The Pauli group, $\{I,X,Z,Y\}$, is conjugated (up to a factor of $\pm1$) by the group generated by $\{I,H,N\}$. Specifically,
$$ \begin{array}{r||r|r|r|r|r|r}
   & I & H & N & N^2 & NH & HN \\ \hline
 X & X & Z &-Y & Z   & X  & Y  \\
 Z & Z & X & X & -Y  & -Y & Z \\
 Y & Y & -Y& -Z& -X  &  Z & -X
\end{array} $$
\label{ConjP}
\end{lem}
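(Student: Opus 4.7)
The plan is to verify the table by direct $2\times 2$ computation on the two generators $H$ and $N$, and then deduce the remaining four columns by using that conjugation $P \mapsto UPU^{\dagger}$ is a group homomorphism on the Pauli group (up to sign). Concretely, for each $U \in \{H,N\}$ I would just multiply out the products $UXU^{\dagger}$ and $UZU^{\dagger}$ from the explicit matrices
$$H = \tfrac{1}{\sqrt{2}}\begin{pmatrix}1 & 1 \\ 1 & -1\end{pmatrix}, \qquad N = \tfrac{1}{\sqrt{2}}\begin{pmatrix}1 & i \\ 1 & -i\end{pmatrix},$$
which is four short $2\times 2$ calculations and gives the $X$ and $Z$ entries in the $H$ and $N$ columns.

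Once the action of $H$ and $N$ on $X$ and $Z$ is established, the action on $Y$ is forced: since $Y = iXZ$ and conjugation by a unitary is multiplicative, $UYU^{\dagger} = i(UXU^{\dagger})(UZU^{\dagger})$, so the $Y$-row in the $H$ and $N$ columns follows by one Pauli multiplication each. This handles the first three data columns.

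Next I would fill in the $N^2$, $NH$, and $HN$ columns by iterating the same rule, using the already-verified columns as the inner conjugation. For example, $N^2 P (N^\dagger)^2 = N(NPN^\dagger)N^\dagger$, so I read $NPN^\dagger$ off the $N$ column and then apply $N$-conjugation to that Pauli (again from the $N$ column, with the inherited sign). The mixed cases $NH$ and $HN$ are analogous: $NH\,P\,H^\dagger N^\dagger = N(HPH^\dagger)N^\dagger$, and similarly for $HN$. Each remaining table entry is thus a single lookup plus possibly a sign from multiplicativity. The statement that the Pauli group is closed under this conjugation (up to $\pm 1$) is exactly the observation that each entry of the table turns out to lie in $\{\pm I, \pm X, \pm Y, \pm Z\}$.

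The only thing to watch carefully is sign bookkeeping: entries like $NXN^\dagger = -Y$ produce explicit minus signs that must be propagated correctly when composing (for instance in computing $N^2 X (N^\dagger)^2$), and the factor of $i$ in $Y = iXZ$ combines with anti-commutation signs in products like $YX = -iZ$. This is the only place the argument is not entirely automatic, but it is routine arithmetic in the Pauli group rather than any conceptual difficulty.
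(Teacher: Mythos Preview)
Your proposal is correct and matches the paper's approach: the paper does not include a formal proof of this lemma at all, giving only the table and a few sample verifications (e.g.\ $X = HZH^{\dag}$, $-Z = NYN^{\dag}$), so your systematic organisation of the computation via generators and multiplicativity is, if anything, more complete than what the paper provides.
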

For instance, $X = HZH^{\dag}$, $-Z = NYN^{\dag}$, and $-X = N^2Y(N^2)^{\dag}$. Lemma
\ref{ConjP} implies that, by conjugation, to within a factor of $\pm 1$, we can permute the elements of the
stabilizer matrix, column-wise. There are $3!=6$ such permutations.

The following is well known (proof omitted):
\begin{lem}
Consider a quantum system comprising a local system, $L$, possibly entangled with an environment, $E$, and
let $\rho_L$ be the density matrix that defines the local system. Then unitary conjugation,
$U_E(L \times E)U_E^{\dag}$, on the environment leaves $\rho_L$ unchanged.
\label{envuni}
\end{lem}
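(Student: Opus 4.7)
The plan is to use the definition of the reduced density matrix as a partial trace, combined with unitarity of $U_E$ and the cyclicity of trace restricted to the environment subspace. Concretely, let $\rho_{LE}$ denote the full density matrix of the combined system and write $\rho_L = \text{Tr}_E(\rho_{LE})$, where $\text{Tr}_E$ denotes the partial trace over the environment. The claim is that replacing $\rho_{LE}$ by $(I_L \otimes U_E)\rho_{LE}(I_L \otimes U_E^{\dag})$ before tracing out $E$ produces the same $\rho_L$.

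First, I would fix an orthonormal basis $\{\ket{e_k}\}$ for the environment $E$ and expand the partial trace as $\text{Tr}_E(X) = \sum_k (I_L \otimes \bra{e_k}) X (I_L \otimes \ket{e_k})$. Substituting the conjugated state gives
\[
\text{Tr}_E\!\left[(I_L \otimes U_E)\rho_{LE}(I_L \otimes U_E^{\dag})\right]
= \sum_k (I_L \otimes \bra{e_k} U_E)\,\rho_{LE}\,(I_L \otimes U_E^{\dag}\ket{e_k}).
\]
Next, since $\{U_E^{\dag}\ket{e_k}\}_k$ is again an orthonormal basis of $E$ (because $U_E$ is unitary), one can relabel $\ket{f_k} := U_E^{\dag}\ket{e_k}$ and recognize the right-hand side as $\sum_k (I_L \otimes \bra{f_k})\rho_{LE}(I_L \otimes \ket{f_k})$, which is exactly $\text{Tr}_E(\rho_{LE}) = \rho_L$ computed in the rotated basis. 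Equivalently, one can appeal directly to the basis-independence of the partial trace and the cyclic property on the $E$ factor, which gives the same conclusion in one line: $\text{Tr}_E[(I_L \otimes U_E^{\dag}U_E)\rho_{LE}] = \text{Tr}_E(\rho_{LE}) = \rho_L$.

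There is really no obstacle here — the whole argument is essentially the observation that the partial trace is invariant under unitary change of basis on the traced-out factor. I would keep the proof to just a few lines and would not worry about subtleties beyond noting that $I_L \otimes U_E$ acts trivially on $L$, so that $(I_L \otimes U_E^{\dag})(I_L \otimes U_E) = I_{LE}$ can indeed be pulled through and cancelled inside the partial trace without affecting the $L$ factor.
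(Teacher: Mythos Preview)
Your proof is correct and entirely standard: the invariance of the partial trace under unitary conjugation on the traced-out factor is exactly the right tool, and your basis-change argument is clean. The paper itself omits the proof (it just says ``well known (proof omitted)''), so there is nothing to compare against; your write-up would serve perfectly well as the missing justification.
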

%\begin{pf} (omitted)
%Obviously well-known and fundamental but a proof would be good.
%See for instance \cite{course}.
%\end{pf}

\vspace{3mm}
\begin{lem} The extension columns and rows do not depend on the direction of the arrows, although the parent graphs depend on the direction of the arrows.

\end{lem}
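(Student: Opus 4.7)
The plan is to show that the data defining the extension depends only on the commutation structure among the rows of ${\cal A}$, which in turn depends only on $\Gamma=A+A^T$ (in $\F_2$), and is therefore insensitive to reversing any directed edge.

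First I would check precisely when rows $K_a$ and $K_b$ anti-commute. Off the positions $a$ and $b$ both rows carry Paulis in $\{I,Z\}$, which always commute, so the only sign contributions come from positions $a$ and $b$. At position $a$, row $a$ contributes $\sigma_a\in\{X,Y\}$ while row $b$ contributes $Z^{A_{ba}}$; symmetrically at position $b$. Summing mod $2$, rows $a,b$ anti-commute iff $A_{ab}+A_{ba}=\Gamma_{ab}=1$. Hence the pairwise commutation pattern of the rows of ${\cal A}$ is encoded exactly by $\Gamma$, and is invariant under reversing arrows.

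Next I pass to the extension. The appended $e$ columns $B$ are chosen so that the extra anti-commutations they introduce cancel those already present, i.e.\ for each pair $a\neq b$ the number of positions $k\in\{0,\ldots,e-1\}$ where $B_{ak}$ and $B_{bk}$ anti-commute equals $\Gamma_{ab}$ modulo $2$---a system of constraints involving only $\Gamma$. So $B$ can be chosen independently of arrow directions. For the $e$ additional rows one uses that in graph-state form their first $n$ entries lie in $\{I,Z\}$, which commute with every Pauli in $\{I,Z\}$. Hence the nontrivial commutation constraints between a new row $[C\,|\,D]$ and row $i$ of $[{\cal A}\,|\,B]$ reduce to conditions on $C_i$ against $\sigma_i\in\{X,Y\}$ and on $D$ against the $i$-th row of $B$, neither of which feels the direction of any arrow.

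For the second half of the statement, reversing $a\to b$ toggles ${\cal A}_{ab}$ and ${\cal A}_{ba}$ between $Z$ and $I$, and these changes persist through the row-product reduction to graph-state form ${\cal A}^e$; the adjacency of the parent graph therefore genuinely changes, as the worked example already illustrates. The main obstacle, I expect, is to argue cleanly that the extension rows can be simultaneously put in graph-state form and chosen independently of ${\cal A}$: once $B$ is fixed this should follow because the remaining constraints form a linear system over $\F_2$ determined by $B$ and by the diagonal choices $\sigma_a\in\{X,Y\}$, neither of which is affected by edge orientation.
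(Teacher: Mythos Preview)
Your proposal is correct and follows the same idea as the paper's proof: both rest on the observation that the extension is determined solely by the pairwise commutation pattern of the rows of ${\cal A}$, which is encoded by $\Gamma=A+A^T$ and hence invariant under arrow reversal. The paper's proof is a single sentence to this effect; you have simply unpacked it, making explicit the computation $\Gamma_{ab}=A_{ab}+A_{ba}\pmod 2$ for anti-commutation and tracing through why the constraints on both the extension columns and the appended rows involve only $\Gamma$, $B$, and the diagonal $\sigma_i$ (none of which see arrow direction).
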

\begin{pf}
The extension columns and rows depend exclusively on the commutativity or anti-commutativity of the stabiliser basis, which is not dependent on arrow direction.
\end{pf}

\section{General Density Matrix Stabilized by ${\cal A}$}\label{rho}

%One can always write a general density matrix of dimension $d$ in the form
%\beg \rho = \frac{1}{d}\left ( I + \Lambda \right ) =
% \frac{1}{d}\left ( I + \sum_{j=1}^{d^2-1} \al_j\lambda_j \right ),
%\label{gendense}
%\eeg
%where the $\lambda_j$ are traceless, orthogonal, and Hermitian, i.e. they are physical observables,
%and where $\al_j = \langle \rho \lambda_j \rangle = \Tr ( \rho \lambda_j )$, and
%$\sum_{j=1}^{d^2-1} |\al_j|^2 \le 1$, i.e. $|\al_j|^2$ is the probability of measuring $\lambda_j$.

%\vspace{3mm}

A density matrix of $n$ qubits can be written in a Pauli basis, i.e. where the basis elements
are tensor products of elements from $\{I,X,Y,Z\}$, i.e.
\beg \begin{array}{ll}
\rho = \sum_{k \in \Z_4^n} a_k {\tilde{\sigma_k}}, & \m{ where }
k = (k_0,k_1,\ldots,k_{n-1}) \in \Z_4^n, {\tilde{\sigma_k}} = \bigotimes_{j=0}^{n-1} \sigma_{k_j}, \\
     & \sigma_0 = I, \sigma_1 = X, \sigma_2 = Y, \sigma_3 = Z, \m{ and } \sum_{k \in \Z_4^n} |a_k|^2 \le 1,
\end{array}
\label{PauliBasis}
\eeg
where the $a_k$ are real. Necessary and sufficient conditions for $\rho$ to be a density matrix, i.e.
derived from a statistical sum of pure states by tracing out the environment, are:
\bite
    \item $\rho$ is Hermitian (i.e. equal to its transpose conjugate).
    \item $\rho$ is positive semi-definite.
    \item $\m{Tr}(\rho) = 1$.
\eite

%\vspace{2mm}

%Using (\ref{gendense}) we can re-express (\ref{PauliBasis}) as follows.

%\beg \begin{array}{l}
%\rho = \frac{1}{2^n} \left ( I^{\otimes n}
%  + \sum_{k \in \Z_4^n, k \ne \bf{0}} \al_k{\tilde{\sigma_k}} \right ),  \mf
%  \sum_{k \in \Z_4^n, k \ne \bf{0}} |\al_k|^2 \le 1.
%\end{array}
%\label{PauliBasisI}
%\eeg

%\vspace{3mm}

We refer to ${\bf P} = \{{\tilde{\sigma}}_k \mz | \mz k \in \Z_4^n \}$, as the set of Pauli codewords (stabilizers) of size $|{\bf P}| = 4^n$.
%Let $U_0 \otimes U_1 \otimes \ldots \otimes U_{n-1}$, $U_j \in \{I,X,Y,Z\}$ be referred to as a %length-$n$ Pauli codeword (or stabilizer). There are $4^n$ such codewords.
Let ${\cal A}$ be the $n \times n$ matrix for a mixed graph state. Then ${\cal A}^T$ is
the matrix for the mixed graph state obtained from the graph for
${\cal A}$ by reversing all arrows. As before, we say that ${\cal A}$ generates a Pauli subgroup, $S \subset {\bf P}$. We will
now show that ${\cal A}^T$ generates the dual group of $S$, namely $S^{\perp} \subset {\bf P}$.

\begin{thm}
The set of length-$n$ Pauli
words (stabilizers) that commute with all rows of ${\cal A}$ forms a multiplicative group, $S^{\perp}$, of size $2^n$, being precisely the words generated by ${\cal A}^T$, to within $\pm 1$ constants.
\label{revmix}
\end{thm}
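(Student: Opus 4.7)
The natural tool is the symplectic $\F_2^{2n}$ encoding of the Pauli group. Encode each member of ${\bf P}$ as a pair $(a|b)\in\F_2^n\times\F_2^n$ with $a_j=1$ iff the $j$-th tensor factor is $X$ or $Y$ and $b_j=1$ iff it is $Y$ or $Z$. Two Pauli words commute as matrices exactly when the symplectic form $a\cdot b'+a'\cdot b$ vanishes in $\F_2$. Under this dictionary, row $i$ of ${\cal A}$ becomes $(e_i|A^Te_i)$: only the diagonal position carries an $X$ or a $Y$, so the $X$-part is the standard basis vector $e_i$, while the $Z$-part lists row $i$ of $A$ (the ``red node'' convention covering the $\sigma_i=Y$ case). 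Dually, row $i$ of ${\cal A}^T$ encodes as $(e_i|Ae_i)$.

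With this dictionary in place, the argument splits into two short parts. First, I would verify by direct symplectic pairing that every row of ${\cal A}^T$ commutes with every row of ${\cal A}$: the pairing of $(e_i|Ae_i)$ with $(e_j|A^Te_j)$ is $e_i\cdot A^Te_j+e_j\cdot Ae_i=A_{ji}+A_{ji}\equiv0\pmod2$. Hence every product of rows of ${\cal A}^T$ commutes with every element of $S$, so the $\F_2$-span of the rows of ${\cal A}^T$, viewed as symplectic vectors, lies inside the commutant $S^\perp$. Second, I would count both sides. A Pauli $(a|b)$ belongs to $S^\perp$ iff $a\cdot A^Te_j+b\cdot e_j=0$ for every $j$, which rearranges to $b=Aa$; so $S^\perp$ has exactly $2^n$ symplectic representatives, namely the graph of $a\mapsto Aa$. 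On the other side, the rows of ${\cal A}^T$ have $X$-parts $e_1,\dots,e_n$, hence are $\F_2$-independent in $\F_2^{2n}$, and their span is precisely $\{(\lambda|A\lambda):\lambda\in\F_2^n\}$, again of size $2^n$. The two sets therefore coincide.

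The one point requiring care, and the main bookkeeping obstacle I foresee, is the $\pm 1$ ambiguity. Since some rows of ${\cal A}^T$ anti-commute amongst themselves, a product of rows is defined as a Pauli matrix only up to a global sign. The symplectic map is blind to signs, however, so each pair $\{+P,-P\}$ corresponds to a single vector in $\F_2^{2n}$, and the counting above is naturally a statement about these equivalence classes---exactly what the theorem asserts with its phrase ``to within $\pm 1$ constants''. Finally, $S^\perp$ being a multiplicative group is immediate: if $P_1$ and $P_2$ each commute with every row of ${\cal A}$ then so does $P_1P_2$, and group closure on the symplectic side reduces to $\F_2$-addition inside the graph of $a\mapsto Aa$.
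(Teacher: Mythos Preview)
Your proof is correct and follows essentially the same two-step strategy as the paper: first verify that every row of ${\cal A}^T$ commutes with every row of ${\cal A}$, then count to show that the commutant of the rows of ${\cal A}$ has exactly $2^n$ elements, matching the $2^n$ distinct products of rows of ${\cal A}^T$. The paper states the first step as ``straightforward'' and obtains the count by the halving argument (each independent row of ${\cal A}$ cuts the commutant in half), whereas you make both steps explicit via the symplectic encoding, even identifying $S^\perp$ directly as the graph $\{(a\,|\,Aa):a\in\F_2^n\}$; this is a cleaner and more informative version of the same proof rather than a genuinely different route.
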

\begin{pf} It is straightforward 
%note to self: should we include proof? it's easy, but still...
 to show that every row of ${\cal A}^T$ commutes with
every row of ${\cal A}$. To show that there are no more, observe that
each row of ${\cal A}^T$ can be viewed as a restriction. Only $4^n/2^1$
(i.e. half) of all possible length-$n$ Pauli words commute with a given row. The rows of
${\cal A}^T$ are independent so the $n$ rows of ${\cal A}^T$ jointly
commute with only $4^n/2^n = 2^n$ Pauli words. But the rows of ${\cal A}^T$ generate
$2^n$ words, which must be distinct and, therefore, must be all of them.
\end{pf}

See section \ref{AppA} for a discussion of the $\pm 1$ coefficients.
\vspace{1cm}

Graph states can be also described as the sum of its stabilisers, as seen in \cite{graphstates}. In order to state a similar result for mixed graph state, we need the following lemmas:

By the support of a matrix, $M$, we mean the set of non-zero element positions of $M$, i.e. $\m{support}(M) = \{(i,j) | M_{i,j} \ne 0\}$.

\begin{lem}\label{supp}The elements of the stabilizer group, $S^{\perp}$, of a mixed graph have non-intersecting support.
\label{supp}
\end{lem}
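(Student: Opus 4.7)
The plan is to track each stabiliser by its ``$X$/$Y$ pattern''. For any Pauli product $P = \bigotimes_{j=0}^{n-1} \sigma_{k_j}$, I assign the binary vector $v(P) \in \F_2^n$ with $v(P)_j = 1$ if $\sigma_{k_j} \in \{X, Y\}$ and $v(P)_j = 0$ if $\sigma_{k_j} \in \{I, Z\}$. Since $I,Z$ are supported on the diagonal of their $2\times 2$ blocks and $X,Y$ on the antidiagonal, the support of $P$ (viewed as a $2^n \times 2^n$ matrix) is exactly $\{(i,\, i \oplus v(P)) : i \in \F_2^n\}$, independently of any overall $\pm 1$ or phase factor. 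In particular, two Pauli products have intersecting support if and only if their $v$-vectors coincide, in which case their supports are identical.

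Next I would observe that $v$ is a homomorphism from the Pauli group (modulo phase) to $(\F_2^n, \oplus)$, because in a product of two Pauli factors the antidiagonal/diagonal character combines by XOR. By Theorem~\ref{revmix}, $S^{\perp}$ is generated, up to $\pm 1$ signs, by the rows of ${\cal A}^T$, i.e.\ the columns of ${\cal A}$. From the construction of a mixed graph stabiliser matrix ($K_a = \sigma_a \prod_{b \in {\cal N}_a} Z_b$ with $\sigma_a \in \{X,Y\}$), the diagonal of ${\cal A}$ consists solely of $X$'s and $Y$'s, while every off-diagonal entry lies in $\{I, Z\}$. Hence the $j$-th column of ${\cal A}$ has $v$-vector equal to the standard basis vector $e_j$.

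The $n$ generators of $S^{\perp}$ therefore map under $v$ to a basis of $\F_2^n$. Combined with the homomorphism property, this forces the restriction $v\colon S^{\perp} \to \F_2^n$ to be a bijection (both sides have $2^n$ elements). Thus any two distinct stabilisers in $S^{\perp}$ carry different $v$-vectors, so by the first observation their supports are disjoint, proving the lemma.

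The only nontrivial input is the structural claim that the diagonal of ${\cal A}$ lies in $\{X, Y\}$ and the off-diagonal in $\{I, Z\}$, but this is immediate from the definition of the mixed graph stabiliser matrix, so there is no real obstacle. Sign ambiguities on elements of $S^{\perp}$ never enter, since support is insensitive to a $\pm 1$ prefactor.
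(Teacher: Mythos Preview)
Your proof is correct and follows essentially the same approach as the paper: both arguments track the ``$X/Y$ pattern'' of each stabiliser (the paper phrases this as the subset $R$ of rows being multiplied), use the fact that ${\cal A}^T$ has $\{X,Y\}$ on the diagonal and $\{I,Z\}$ off it to show this pattern equals the indexing subset, and conclude that distinct subsets give disjoint supports. Your formulation via the homomorphism $v$ and the explicit support description $\{(i,\,i\oplus v(P))\}$ is a bit more formal than the paper's direct combinatorial argument, but the underlying idea is identical.
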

\begin{pf} By inspection $X$ and $Y$ have the same support, as do $I$ and $Z$. But the support of $X$ and $Y$ is non-intersecting
with that of $I$ and $Z$. Moreover if two matrices ${\tilde{\sigma}}_k$ and ${\tilde{\sigma}}_{k'}$ are both tensor products of Pauli matrices, and if they differ
in at least one tensor position, where one matrix has $X$ or $Y$, and the other matrix has $I$ or $Z$, then ${\tilde{\sigma}}_k$ and ${\tilde{\sigma}}_{k'}$ must also
have non-intersecting support.
The matrix ${\cal A}$ only has $X$ and/or $Y$ on the diagonal and $I$ and/or $Z$ off the diagonal.
%, so any two rows of ${\cal A}$, being tensor products of Pauli matrices, must be non-intersecting as any two rows must be %non-intersecting in precisely two tensor element positions.
Let ${\cal A}$ have rows numbered $0$ to $n-1$. Let $R$ and $R'$ be two subsets of
$\{0,1,\ldots,n-1\}$ and consider the matrices, ${\tilde{\sigma}}_{k_R}$ and
${\tilde{\sigma}}_{k_{R'}}$, being the product of the rows of ${\cal A}$ indexed by
$R$ and $R'$, respectively. Then ${\tilde{\sigma}}_{k_R}$ and ${\tilde{\sigma}}_{k_{R'}}$ have $X$ or $Y$ at tensor positions indexed by $R$ and $R'$, respectively, and $I$ or $Z$ at all other positions. So, unless $R$ and $R'$ are equal, they must be non-intersecting in at least one tensor position, so therefore
${\tilde{\sigma}}_{k_R}$ and ${\tilde{\sigma}}_{k_{R'}}$ have non-intersecting support. The elements of the stabilizer group of the mixed graph are obtained by ranging $R$ over
all subsets, i.e. over all stabilizer codewords. Therefore any two of them must be non-intersecting.
\end{pf}

\vspace{2mm}

For a given stabilizer group $S$, the condition $s\rho s^\dagger=\rho\ \forall s\in S$ is equivalent to the condition $s\rho=\rho s\ \forall s\in S$. The Pauli basis description of (\ref{PauliBasis}) imposes the condition that $\rho=\sum_k a_k\tilde{\sigma}_k$,
where the $\tilde{\sigma}_k$ are tensor products of Pauli matrices, such that $s\tilde{\sigma}_k =\tilde{\sigma}_k s\ \forall s\in S$. In other words, $\rho$ is the sum of tensors of Pauli matrices that commute with all $s\in S$. By theorem \ref{revmix}, the Pauli subgroup of matrices that commute with all elements in the stabilizer group $S$ is the stabilizer group, $S^{\perp}$, of the mixed graph with the arrows reversed. The intersection of $S$ and $S^{\perp}$ is then the row space of ${\cal A}$ associated with the vertices of the mixed graph, $G$, that are isolated in $G_b$. But what choices do we have for the angle of $a_k$ in (\ref{PauliBasis}), such that $\rho$ is also Hermitian (as a density matrix must be)? Let $a_k = \hat{a}_k\overline{a}_k$ such that $|\hat{a}_k| = 1$ and
$\overline{a}_k$ is real.
Lemma \ref{supp} implies that each component $a_k\tilde{\sigma}_k$ of $\rho$ must, itself, be Hermitian, so this fixes $\hat{a}_k$ precisely, to within $\pm 1$. For example, consider
${\cal A}^{\perp} = \left ( \begin{array}{cc} X & Z \\ I & X \end{array} \right )$. Then $S^{\perp}$ contains ${\tilde{\sigma}}_{12} = \pm i(X \otimes Y) =
\pm \left ( \begin{array}{rrrr} 0 & 0 & 0 & 1 \\ 0 & 0 & 1 & 0 \\ 0 & -1 & 0 & 0 \\ -1 & 0 & 0 & 0
\end{array} \right )$, which is not Hermitian, so we choose $\hat{a}_{12} = i$ or
$\hat{a}_{12} = -i$ in this case. In the sequel we deal with subgroups, $S$ and $S^{\perp}$, of the tensored Pauli group, of size $2^n$ (to within
$\pm 1$ coefficients). So instead of indexing by $k \in \Z_4^n$, we index using $K  \subset \Z_n =  \{0,1,\ldots,n-1\}$ relative to the rows of the group generating matrices ${\cal A}$ and ${\cal A}^T$, respectively. Specifically,
$$ \rho = \sum_{k \in \Z_4^n} a_k\tilde{\sigma}_k = \frac{1}{2^n}\sum_{K \subset \Z_n} b_K{\cal A}^T_K, \mf \sum_{K \subset \Z_n} |b_K|^2 \le 1, \mf {\cal A}^T_K \in S^{\perp}, $$
%\sum_{j \in \F_2^n} b_j{\cev{s}_j}, \mf \sum_j |b_j|^2 \le 1, \mf {\cev{s}_j} \in S^{\perp}, $$
where the $b_K \in \{\pm 1,\pm i\}$, and ${\cal A}^T_K = \prod_{h \in K} {\cal A}^T_h$,
%$j = (j_0,j_1,\ldots,j_{n-1})$, and ${\cev{s}_j} = \pm \prod_{h=0}^{n-1} j_h{\cal A}^T_h$,
where ${\cal A}^T_h$ is the $h$th row of ${\cal A}^T$, and $\prod_{h \in \{j,j'\}} {\cal A}^T_h = {\cal A}^T_{j'}{\cal A}^T_j$ if $j' > j$, i.e. we fix the product ordering with lowest indices on the right.
Moreover, for $V$ obtained from $A^T$, $b_K$ is unique if the elements in $K$ are pairwise commutative, and is otherwise only defined up to $\pm 1$. For instance, if $K = \{2,4,5\}$ then
$\prod_{h \in K} {\cal A}^T_h = {\cal A}^T_5{\cal A}^T_4{\cal A}^T_2$, and $b_K$  is unique only if the elements in $\{2,4,5\}$ are pairwise commutative.

\section{The maximum commutative subgroups of $S^\perp$}

We now show that any commutative subgroup of the group of operators generated by the rows of ${\cal{A}}^\perp = {\cal{A}}^T$ is contained in a commutative subgroup of maximum size
$2^{n-e}$. %In later sections,We then conjecture that any such maximum commutative subgroup is a basis for a density matrix arising from the partial trace of a pure parent graph state.
Firstly, we show how the commutativity of the Pauli operators of a mixed graph state can be expressed in terms of the associated simple undirected graph, $G_b$, and its corresponding adjacency matrix, $\Gamma = \Gamma_{G_b}$.
%This way, the problem of finding a commutative subgroup reduces to a binary linear code problem.
Then, in section \ref{sectionchildren}, we show that the density matrix of any child of a pure graph state parent can be represented as a weighted sum of the terms of a maximal commutative subgroup of $S^\perp$.
\vspace{2mm}

%Given a mixed graph $G$ and one of its  parents,  we know how to define the density matrix of its child in terms of the Pauli group, by theorem \ref{pauli sum}, and that the terms of the sum form a commutative subgroup of $S^\perp$;  we will show in this section that we can find all commutative subgroups it directly from the graph $G$, or alternatively from the adjacency matrix of $G_b$  (the simple undirected graph with adjacency matrix $\Gamma={\cal{A}}+{\cal{A}}^T$). %We know by the previous section that the density matrices corresponding to a mixed graph are given by maximal size commutative subgroups of the stabilizer group of the dual mixed graph (the ones with all arrows reversed), and this size is  $2^{n-e}$, and more generallty commutative subgroups of the stabilizer group of the dual mixed graph of any size (up to the maximal size, of course).

%Note that this is a generalisation of the problem of finding the members of $S^{\perp}$ associated with a $+1$ coefficient, given in section \ref{rho}.

% If $u$ is a neighbour of $v$ in $G_b$, it means that the corresponding rows of the stabilizer anti-commute. We know there exist extensions of the stabilizer of $G$ of minimal size $e=\frac{\mbox{rank}(\Gamma)}{2}$, to the stabilizer of a pure graph state. From theorem \ref{pauli sum}, we know that these extensions will each yield a commutative sugbroup of the stabilizer group of the dual mixed graph. We get these commutative sugbroups directly from the graph:

\begin{lem}\label{easyextension}
Let ${\cal{A}}_j^\perp$ be row $j$ of ${\cal A}^\perp$.
Let $J,K \subset \{0,1,\ldots,n-1\}$.
%\footnote{NB: notation is different from the one we used in the previous section.}.
Then ${\cal{A}}^\perp_j$ commutes with ${\cal{A}}^\perp_K$
%$\prod_{k\in K} {\cal{A}}^\perp_k$
iff the number of elements of $K\cap {\cal{N}}_j$ is even. Furthermore, ${\cal{A}}^\perp_J$
%$\prod_{j\in J}{\cal{A}}^\perp_j$
commutes with ${\cal{A}}^\perp_K$
%$\prod_{k\in K}{\cal{A}}^\perp_k$
iff the sum over $J$ of the number of elements of $K\cap {\cal{N}}_j$ is even.
\end{lem}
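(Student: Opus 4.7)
The plan is to reduce the lemma to a routine symplectic-bilinearity argument on the Pauli group, where the single-row analysis is carried out on the graph $G_b$ using the shape of the rows of ${\cal A}^\perp$.

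First I would analyse the commutation of two individual rows ${\cal A}^\perp_j$ and ${\cal A}^\perp_k$. Since ${\cal A}^\perp = {\cal A}^T$, the entries of row $j$ of ${\cal A}^\perp$ coincide with the entries of column $j$ of ${\cal A}$, so ${\cal A}^\perp_j$ has some $\sigma_j \in \{X,Y\}$ at position $j$, factors of $Z$ at positions $b$ with $A_{bj}=1$, and $I$ elsewhere. Because $\{I,Z\}$ is abelian and both $X$ and $Y$ anti-commute with $Z$, the only tensor positions at which ${\cal A}^\perp_j$ and ${\cal A}^\perp_k$ can possibly anti-commute are positions $j$ and $k$. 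A direct check shows that at position $j$ they anti-commute iff $A_{jk}=1$, and at position $k$ iff $A_{kj}=1$. Hence the total number of anti-commuting positions is $A_{jk}+A_{kj}\equiv \Gamma_{jk}\pmod 2$, so ${\cal A}^\perp_j$ and ${\cal A}^\perp_k$ anti-commute iff $k \in {\cal N}_j$ in $G_b$.

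Next I would extend this to a row versus a product. For any Pauli words $P,Q_1,\ldots,Q_m$ with $PQ_\ell = \epsilon_\ell Q_\ell P$, $\epsilon_\ell\in\{\pm1\}$, one iterates to obtain
\[
P\prod_{\ell}Q_\ell \;=\;\Bigl(\prod_{\ell}\epsilon_\ell\Bigr)\,\prod_{\ell}Q_\ell\,P,
\]
so $P$ commutes with $\prod_\ell Q_\ell$ exactly when the number of $\ell$ with $\epsilon_\ell=-1$ is even. Applying this with $P={\cal A}^\perp_j$ and $Q_k={\cal A}^\perp_k$ for $k\in K$, and feeding in the single-row result, gives the first claim: ${\cal A}^\perp_j$ commutes with ${\cal A}^\perp_K$ iff $|K\cap{\cal N}_j|$ is even.

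Finally, to handle two products ${\cal A}^\perp_J$ and ${\cal A}^\perp_K$, I would iterate the same $\epsilon$-collection once more over $j\in J$, obtaining that they commute iff $\sum_{j\in J}\bigl(\text{number of }k\in K\text{ with }{\cal A}^\perp_j,{\cal A}^\perp_k\text{ anti-commuting}\bigr)\equiv 0\pmod 2$, which by the previous step is exactly $\sum_{j\in J}|K\cap{\cal N}_j|\equiv 0\pmod 2$. The only step requiring any care is the single-row calculation, since one has to check that positions $\ell\notin\{j,k\}$ contribute nothing and that both $X$ and $Y$ behave the same way against $Z$; the rest is just the bilinearity of the symplectic form on the Pauli group, so I do not expect any genuine obstacle.
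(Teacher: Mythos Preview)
Your argument is correct and follows essentially the same approach as the paper: establish that ${\cal A}^\perp_j$ and ${\cal A}^\perp_k$ anti-commute precisely when $k\in{\cal N}_j$ in $G_b$, then count the $\pm 1$ signs when passing one factor through a product. In fact you give more detail than the paper does on the single-row step (the paper simply asserts that ${\cal A}^\perp_j$ anticommutes with ${\cal A}^\perp_k$ for $k\in{\cal N}_j$), so your version is a slightly more fleshed-out rendition of the same proof.
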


\begin{pf} ${\cal{A}}^\perp_j$ anticommutes with any ${\cal{A}}^\perp_k$ such that $k\in K\cap {\cal{N}}_j$, that is, any $k$ in the neighbourhood of $j$, thus
${\cal{A}}^\perp_j{\cal{A}}^\perp_k = -{\cal{A}}^\perp_k{\cal{A}}^\perp_j$. If there are an even number of neighbours then the minus signs cancel each other out so that ${\cal{A}}^\perp_j$ commutes with ${\cal{A}}^\perp_K$, otherwise ${\cal{A}}^\perp_j$ anticommutes with ${\cal{A}}^\perp_K$.
More generally,  ${\cal{A}}^\perp_J$ commutes with ${\cal{A}}^\perp_K$ iff, after passing, for all $j \in J$, ${\cal{A}}^\perp_j$ through ${\cal{A}}^\perp_K$, an even number of minus signs are generated. otherwise ${\cal{A}}^\perp_J$ anticommutes with ${\cal{A}}^\perp_K$.
\end{pf}

\begin{ex} Let $G_b$ be the simple graph 01,04,12,23,34,14. Then ${\cal{A}}^\perp_0$ commutes with ${\cal{A}}^\perp_3$ (since they are independent, so the intersection is empty), and both commute with ${\cal{A}}^\perp_K = {\cal{A}}^\perp_{\{1,2,4\}} = {\cal{A}}^\perp_4{\cal{A}}^\perp_2{\cal{A}}^\perp_1$, since 0 has 1 and 4 as neighbours, and 3 has 2 and 4 as neighbours, so both have two elements in the intersection with $K=\{1,2,4\}$. Note that e.g. ${\cal{A}}^\perp_0({\cal{A}}^\perp_4{\cal{A}}^\perp_2{\cal{A}}^\perp_1)=-{\cal{A}}^\perp_4{\cal{A}}^\perp_0{\cal{A}}^\perp_2{\cal{A}}^\perp_1=-{\cal{A}}^\perp_4{\cal{A}}^\perp_2{\cal{A}}^\perp_0{\cal{A}}^\perp_1=+{\cal{A}}^\perp_4{\cal{A}}^\perp_2{\cal{A}}^\perp_1{\cal{A}}^\perp_0$.
We also have that ${\cal{A}}^\perp_3{\cal{A}}^\perp_2$ commutes with ${\cal{A}}^\perp_4{\cal{A}}^\perp_1{\cal{A}}^\perp_0$, since both 2 and 3 have one neighbour in $K=\{0,1,4\}$, so the sum over $J=\{2,3\}$ is even. %Note that $(c_3c_2)(c_4c_1c_0)=c_3c_4c_2c_1c_0=-c_3c_4c_1c_2c_0=-c_3c_4c_1c_0c_2=+c_4c_3c_1c_0c_2=c_4c_1c_3c_0c_2=c_4c_1c_3c_0c_2=c_4c_1c_0c_3c_2$.
\end{ex}

\vspace{2mm}

Let $\Gamma_{k,j}$ be the $(k,j)$th element of $\Gamma = A + A^T$.
\begin{lem}
\label{comm}
${\cal{A}}^\perp_j$ commutes with ${\cal{A}}^\perp_K$
%$\prod_{k\in K} {{\cal{A}}^\perp_k}$
iff $\bigoplus_{k\in K} \Gamma_{k,j}=0$, where $'\oplus'$ is the binary sum. Furthermore, ${\cal{A}}^\perp_J$ commutes with ${\cal{A}}^\perp_K$ iff $\bigoplus_{j\in J}\bigoplus_{k\in K} \Gamma_{k,j}=0$.
Let $v_K = (v_0,v_1,\ldots,v_{n-1}) \in F_2^n$ be such that $v_k = 1$ if $k \in K$ and $v_k = 0$ otherwise. Then the property that ${\cal{A}}^\perp_J$ commutes with ${\cal{A}}^\perp_K$ translates to the condition $v_K\Gamma v_J^T = 0$.
\end{lem}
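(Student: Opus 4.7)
The plan is to convert the combinatorial neighbourhood-count of Lemma \ref{easyextension} into a bilinear $\F_2$-sum on the entries of $\Gamma$, and then to recognise that sum as the quadratic form $v_K \Gamma v_J^T$.

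First, since $\Gamma$ is the adjacency matrix of $G_b$, the condition $k \in {\cal N}_j$ is equivalent to $\Gamma_{k,j}=1$, so
$$|K \cap {\cal N}_j| \equiv \bigoplus_{k \in K} \Gamma_{k,j} \pmod{2}.$$
Plugging this identity into Lemma \ref{easyextension} immediately gives the first claim: ${\cal A}^\perp_j$ commutes with ${\cal A}^\perp_K$ iff $\bigoplus_{k \in K}\Gamma_{k,j}=0$.

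Next I would accumulate the argument over $J$. Writing ${\cal A}^\perp_J = \prod_{j\in J} {\cal A}^\perp_j$, each time we push one factor ${\cal A}^\perp_j$ (with $j\in J$) through ${\cal A}^\perp_K$ the previous step says we pick up a global sign $(-1)^{\oplus_{k\in K}\Gamma_{k,j}}$, and these scalar signs slide freely past the remaining Pauli factors. Multiplying over all $j \in J$, the total sign relating ${\cal A}^\perp_J {\cal A}^\perp_K$ to ${\cal A}^\perp_K {\cal A}^\perp_J$ is $(-1)^{\oplus_{j\in J}\oplus_{k\in K}\Gamma_{k,j}}$, which yields the second equivalence.

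Finally, the matrix form is a direct expansion in $\F_2$:
$$v_K \Gamma v_J^T \;=\; \bigoplus_{k,j}(v_K)_k\,\Gamma_{k,j}\,(v_J)_j \;=\; \bigoplus_{j\in J}\bigoplus_{k\in K}\Gamma_{k,j},$$
since $(v_K)_k=1$ exactly when $k\in K$ and similarly for $(v_J)_j$. Hence $v_K \Gamma v_J^T = 0$ over $\F_2$ is exactly the double-parity condition. The only point requiring a little care is the sign bookkeeping in the second step: one must check that the sign contributions from successive $j \in J$ really do combine multiplicatively, which works because each accumulated $\pm 1$ is a global scalar and commutes through everything that follows.
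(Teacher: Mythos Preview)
Your proof is correct and follows essentially the same route as the paper: both arguments translate the parity count of Lemma~\ref{easyextension} into the $\F_2$-sum $\bigoplus_{k\in K}\Gamma_{k,j}$ via the identification $k\in{\cal N}_j \Leftrightarrow \Gamma_{k,j}=1$, and then expand $v_K\Gamma v_J^T$ as the double sum. The only cosmetic difference is that for the $J$-case the paper quotes the second clause of Lemma~\ref{easyextension} directly, whereas you re-derive that clause by explicitly tracking the accumulated $(-1)$ signs as each ${\cal A}^\perp_j$ is pushed through ${\cal A}^\perp_K$; the content is the same.
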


\begin{pf}
We first prove that ${{\cal{A}}^\perp_j}$ commutes with ${\cal{A}}^\perp_K$ iff $\bigoplus_K \Gamma_{k,j}=0$:
%Due to the definition of $G_b$, $c_j$ commutes with $c_k$ iff $c_{k,j}=0$.
By lemma \ref{easyextension}, ${{\cal{A}}^\perp_j}$ commutes with ${\cal{A}}^\perp_K$ $\Leftrightarrow$ the number of elements of $K\cap {\cal{N}}_j$ is even, i.e. there are an even number of rows $k\in K$ with $\Gamma_{k,j}=1$, i.e. $\bigoplus_K \Gamma_{k,j}=0$.
Now we prove that ${\cal{A}}^\perp_J$ commutes with ${\cal{A}}^\perp_K$ iff $\bigoplus_{j\in J}\bigoplus_{k\in K} \Gamma_{k,j}=0$:
 % We have seen that each $c_j$ commutes with $\prod_{k\in K} c_k$ iff $\bigoplus_K r_{k,j}=0$
By lemma \ref{easyextension}, ${\cal{A}}^\perp_J$ commutes with ${\cal{A}}^\perp_K$ iff the sum over $J$ of the number of elements of $K\cap {\cal{N}}_j$ is even, i.e. there are an even number of rows $k\in K$ such that $\Gamma_{k,j}=1$, i.e. $\bigoplus_{j\in J}\bigoplus_{k \in K} \Gamma_{k,j}=0$.
Finally, note that $v_K\Gamma = (\bigoplus_{k \in K} \Gamma_{k,0},\bigoplus_{k \in K} \Gamma_{k,1},\ldots,\bigoplus_{k \in K} \Gamma_{k,n-1})$, from which it follows that
$v_K\Gamma v_J^T = \bigoplus_{j \in J}\bigoplus_{k \in K} \Gamma_{k,j}$.
\end{pf}

As $\Gamma$ is symmetric, we get that $v_K\Gamma v_J^T = v_J\Gamma v_K^T  = 0$, which ensures that commutativity is a symmetric relationship. %so if $\prod_{j\in A}c_j$ commutes with $\prod_{k\in K} c_k$, then $\prod_{k\in K}c_k$ commutes with $\prod_{j\in A} c_j$, as it should do.

\begin{ex}
For $\Gamma=\left(\begin{array}{ccccc}
0&1&0&0&1\\
1&0&1&0&1\\
0&1&0&1&0\\
0&0&1&0&1\\
1&1&0&1&0
\end{array}\right)$, ${{\cal{A}}^\perp_0}$ commutes with ${{\cal{A}}^\perp_3}$ since $\Gamma_{3,0}=0$, and both ${{\cal{A}}^\perp_0}$ and ${{\cal{A}}^\perp_3}$ commute with ${{\cal{A}}^\perp_4}{{\cal{A}}^\perp_2}{{\cal{A}}^\perp_1}$, since $\bigoplus_{\{1,2,4\}}\Gamma_{k,0}=1\oplus 0\oplus1=0$ and $\bigoplus_{\{1,2,4\}}\Gamma_{k,3}=0\oplus 1\oplus1=0$.
We also have that ${{\cal{A}}^\perp_3}{{\cal{A}}^\perp_2}$ commutes with ${{\cal{A}}^\perp_4}{{\cal{A}}^\perp_1}{{\cal{A}}^\perp_0}$, since $\bigoplus_{j\in \{2,3\}}\bigoplus_{k\in \{0,1,4\}} \Gamma_{k,j}=(0\oplus 1\oplus 0)\oplus (0\oplus 0\oplus 1)=0$, i.e. $v_K\Gamma v_J^T = 0$, for $v_K = (1,1,0,0,1)$ and $v_J = (0,0,1,1,0)$.
\end{ex}

If ${\cal{A}}^\perp_J$ commutes with ${\cal{A}}^\perp_K$ and ${\cal{A}}^\perp_{J'}$ commutes with ${\cal{A}}^\perp_K$, and  $J\cap J'=\emptyset$, then
${\cal{A}}^\perp_J{\cal{A}}^\perp_{J'}$
%$\prod_{j\in J,J'}{\cal{A}}^\perp_j$
commutes with ${\cal{A}}^\perp_K$. Similarly, if ${\cal{A}}^\perp_J$ anti-commutes with ${\cal{A}}^\perp_K$ and ${\cal{A}}^\perp_{J'}$ anti-commutes with ${\cal{A}}^\perp_K$, and  $J\cap J'=\emptyset$, then ${\cal{A}}^\perp_J{\cal{A}}^\perp_{J'}$ commutes with ${\cal{A}}^\perp_K$. If one of them commutes and the other anti-commutes, we get anti-commutativity. This implies that subgroups can be combined to generate other subgroups in an easy way.

\begin{lem} The subgroup structure of $S^{\perp}$ (and therefore also  of $S$) is independent of the direction of the arrows in $G$: those graphs, $G$, sharing the same $G_b$ will have the same commutative subgroup structure (although the precise group members differ).
\end{lem}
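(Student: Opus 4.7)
\begin{pf}[Proof plan.]
The plan is to reduce the claim entirely to Lemma \ref{comm} and observe that the right-hand side of the commutativity criterion depends only on $G_b$. First I would recall that, by Lemma \ref{comm}, ${\cal A}^\perp_J$ commutes with ${\cal A}^\perp_K$ if and only if $v_K \Gamma v_J^T = 0$, where $\Gamma = A + A^T$ is the adjacency matrix of $G_b$. Since $\Gamma$ is built by symmetrising $A$, it is invariant under reversing any arrow of $G$: if $G$ and $G'$ are two mixed graphs with $G_b = G'_b$, then their respective adjacency matrices $A$ and $A'$ satisfy $A + A^T = A' + (A')^T$, so the commutativity form $v \mapsto v\Gamma v^T$ on $\F_2^n$ is the same object in both cases.

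Next I would use this to identify the commutative subgroup lattice. A subgroup of $S^\perp$ generated by a family $\{{\cal A}^\perp_{K_i}\}_i$ is commutative iff $v_{K_i} \Gamma v_{K_j}^T = 0$ for all $i,j$; that is, commutativity of any such family is a condition on $\Gamma$ alone. Therefore the combinatorial structure of commuting subsets of indices, and hence the lattice of commutative subgroups together with their sizes and maximal elements, is determined by $G_b$. Two mixed graphs sharing the same $G_b$ thus have isomorphic commutative subgroup lattices in $S^\perp$, even though the underlying Pauli tensors populating the subgroup will in general differ, because the elementwise identity of each ${\cal A}^\perp_j$ does depend on the actual orientation of the edges through Pauli entries.

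For $S$ itself the conclusion is immediate, since by Theorem \ref{revmix} the group $S$ is the $S^\perp$ of the arrow-reversed graph $G^T$, and $G$ and $G^T$ have the same $G_b$. Equivalently, the same $\Gamma = A + A^T = A^T + (A^T)^T$ governs commutativity of the generators of $S$, so the argument applies verbatim with ${\cal A}$ in place of ${\cal A}^\perp$. There is no real obstacle here beyond being careful to note that the $\pm 1$ phase ambiguities inherent in products of anti-commuting generators do not affect the commutativity relations themselves and hence do not affect the subgroup structure.
\end{pf}
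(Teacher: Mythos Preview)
Your proposal is correct and follows essentially the same route as the paper: the paper's proof is the single sentence ``The subgroup structure is only dependent on $G_b$, which is the same regardless of arrow direction,'' which is exactly the observation you unpack explicitly via Lemma~\ref{comm}. Your added remark handling $S$ through Theorem~\ref{revmix} and the note about $\pm 1$ phases not affecting commutativity are reasonable elaborations but not present (or needed) in the paper's terse version.
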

\begin{pf}
The subgroup structure is only dependent on $G_b$, which is the same regardless of arrow direction.
\end{pf}

%If any row of $\Gamma$ is linearly dependent on other rows, we can delete the row and the corresponding column, and work with the reduced adjacency matrix,
%$\tilde{\Gamma}$,  induced by the remaining nodes. The corresponding subgraph will have the same commutativity/anti-commutativity properties, and the number of elements in any given maximal commutative subset will be halved. \end{lem}
%\begin{pf}
%Let $r_k$ be row $k$ of $\Gamma$.
%Let $r_j=\bigoplus_{k\in V}r_k$. Then ${\cal N}_j={\cal N}_V$, %where the neighbourhood of node $j$ is equal to the neighbourhood of the %composite node $\prod_{D}i$, 
%so the commutativity/anti-commutativity of ${\cev{s_j}}$ is the same as that of ${\cev{s_V}}=\prod_{k\in V}{\cev{s_k}}$. %This means that any element %commutes with ${\cev{s}_{\prod_{D}i}}$ iff it commutes with ${\cev{s}_j}$.
%\end{pf}

\vspace{2mm}

From lemma \ref{comm}, any two vectors $v,v'\in \F_2^n$ that satisfy $v\Gamma v'^T = 0$ define two commuting members of $S^{\perp}$. From here it follows that, if $\Gamma$ has rank $n-t$, then
%, from lemma \ref{iso},
$S^\perp$ comprises $2^t$ copies of a group that is isomorphic to ${\tilde{S^\perp}}$, where ${\tilde{S^\perp}}$ is generated from the rows of ${\tilde{\cal A}}^\perp$, where ${\tilde{\cal A}}^\perp$ is obtained from ${\cal A}^\perp$ by deleting $t$ dependent rows and corresponding columns of ${\cal A}^\perp$. We then consider the maximum commuting subgroups of ${\tilde{S^\perp}}$ instead of $S^\perp$.
Equivalently we reduce $\Gamma$ to ${\tilde{\Gamma}}$, being a $n-t \times n-t$ matrix of maximum rank, and look for two vectors $v,v'\in \F_2^{n-t}$ that satisfy $v{\tilde{\Gamma}} v'^T = 0$.
More generally, by linearity, there exist size $2^h$ subgroups of $F_2^{n-t}$ (linear subspaces), $L_B$, as generated by $h \times n$ binary matrices, $B$, that satisfy 
$B{\tilde{\Gamma}} B^T = 0$. Each such linear subspace, $L_B$, defines a commuting subgroup of ${\tilde{S^\perp}}$.

\begin{lem} \label{iso}
Let $\Gamma$ have rank $n-t$, and
let $\tilde{\Gamma}$ be a $n - t \times n - t$ maximum rank matrix obtained from $\Gamma$ by removing $t$ linearly dependent rows and corresponding columns. There are
multiple choices for $\tilde{\Gamma}$. Then each commuting subgroup, ${\tilde{P}}$ of ${\tilde{S^\perp}}$, as described by some $B$ satisfying $B{\tilde{\Gamma}} B^T = 0$, is in one-to-one correspondence with a commuting subgroup, $P$, of $S^\perp$, as described by some ${\hat{B}}$ satisfying ${\hat{B}}\Gamma{\hat{B}}^T = 0$, where $|P| = 2^t|{\tilde{P}}|$.
\end{lem}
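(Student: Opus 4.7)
The plan is to view commuting subgroups of $S^\perp$ (resp.\ $\tilde S^\perp$) as isotropic linear subspaces of $(\F_2^n,\Gamma)$ (resp.\ $(\F_2^{n-t},\tilde\Gamma)$) via Lemma \ref{comm}, and to construct the bijection by lifting to / projecting from the radical of $\Gamma$. Since $\Gamma$ has rank $n-t$ and is symmetric, its left and right kernels coincide in a $t$-dimensional subspace $C\subseteq\F_2^n$. By Lemma \ref{comm}, any ${\cal A}^\perp_K$ with indicator vector $v_K\in C$ commutes with every element of $S^\perp$; these form a central subgroup of $S^\perp$. Let $D\subseteq\{0,\dots,n-1\}$ be the set of $t$ dropped indices used to form $\tilde\Gamma$ and let $R$ be its complement. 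Because each row of $\Gamma$ indexed by $D$ is a linear combination of rows indexed by $R$, one can choose a basis $m_1,\dots,m_t$ of $C$ whose pivots lie in $D$; assemble these into a $t\times n$ matrix $M$ satisfying $M\Gamma=0$, and hence $\Gamma M^T=0$ by symmetry.

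For the forward direction, given $\tilde P$ described by an $h\times(n-t)$ matrix $B$ of rank $h$ with $B\tilde\Gamma B^T=0$, I lift $B$ to an $h\times n$ matrix $\tilde B$ by inserting zero columns at the $D$-positions, and then take $\hat B$ to be the $(h+t)\times n$ matrix obtained by stacking $\tilde B$ above $M$. A routine block computation of $\hat B\Gamma\hat B^T$ shows all four blocks vanish: the top-left equals $B\tilde\Gamma B^T=0$, because $\tilde B$ is supported in the $R$-columns and $\Gamma$ restricted to $R\times R$ is precisely $\tilde\Gamma$, and the other three blocks contain a factor of $M\Gamma$ or $\Gamma M^T$. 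The $h+t$ rows of $\hat B$ are linearly independent, since the rows of $M$ carry pivots in the $D$-columns while the rows of $\tilde B$ are zero there; hence $|P|=2^{h+t}=2^t|\tilde P|$.

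For the inverse direction, any commuting subgroup $P$ of $S^\perp$ may be uniquely enlarged to a commuting subgroup containing the central subgroup associated with $C$, by adjoining the rows of $M$ to its generator matrix (isotropy is preserved because $M\Gamma=0$). Restricting attention to $P$ that already contain $C$, one row-reduces its generator matrix using the $M$-pivots to clear the $D$-columns of the remaining rows, obtaining the same stacked form as above. Then $B$, the $R$-column restriction of $\tilde B$, satisfies $B\tilde\Gamma B^T=0$ by the reverse block computation, recovering $\tilde P$ of size $2^h$. The two constructions are mutually inverse on the class of $C$-containing commuting subgroups.

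The main technical obstacle is to check that this inverse projection is well defined independently of the row-reduction choice. The cleanest resolution is to recast everything invariantly: $\Gamma$ descends to a nondegenerate symmetric bilinear form on the quotient $\F_2^n/C$, canonically isomorphic to $(\F_2^{n-t},\tilde\Gamma)$, the isomorphism depending on the choice of $\tilde\Gamma$, which is the origin of the ``multiple choices'' clause. Under this identification, isotropic subspaces of $(\F_2^{n-t},\tilde\Gamma)$ correspond bijectively to isotropic subspaces of $(\F_2^n,\Gamma)$ containing the radical $C$, via pullback and quotient; each fiber has size $|C|=2^t$, so $|P|=2^t|\tilde P|$ as required.
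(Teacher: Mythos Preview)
Your proof is correct and follows essentially the same idea as the paper's: both exploit the radical (kernel) $C$ of $\Gamma$ and the passage to the full-rank quotient $\tilde{\Gamma}$. The paper's argument works element-wise, replacing each $u\in\F_2^n$ by the unique $w$ with $w\Gamma=u\Gamma$ and zeros on the dropped positions $Q$, then deleting those coordinates to obtain $v\in\F_2^{n-t}$; your argument is the subspace-level version of this, using generator matrices $B$, $\tilde B$, $\hat B$, $M$ and a block computation of $\hat B\Gamma\hat B^T$.

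The main added value in your write-up is that you make explicit what the paper leaves implicit: the bijection is really between commuting subgroups of $\tilde{S^\perp}$ and those commuting subgroups of $S^\perp$ \emph{that contain the central subgroup corresponding to $C$}, and your quotient-by-radical formulation cleanly explains both why the map is well defined and why the fibres have size $2^t$. This is an elaboration rather than a different route.
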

\begin{pf}
Let $Q$ be a set of $t$ linearly dependent rows of $\Gamma$.
For some $u,u' \in \F_2^n$ let $u\Gamma u'^T = 0$. Then $u,u' \in L_{\hat{B}}$ for some ${\hat{B}}$. Then $\exists w,w'$, uniquely, such that $u\Gamma = w\Gamma$ and $u'\Gamma = w'\Gamma$,
where $w$ and $w'$ are both zero at all positions defined by elements in $Q$. Then we can delete those $t$ rows of $\Gamma$, defined by the elements in $Q$, and the
corresponding columns, as $\Gamma$ is symmetric, and the corresponding elements of $w,w'$ to obtain and work with $\tilde{\Gamma}$ and $v,v'$, where $v,v' \in L_B$. Due to
the unique mapping from $u,u' \rightarrow w,w' \rightarrow v,v'$,  the commutative subgroups from $\tilde{\Gamma}$ are in one-to-one correspondence with those from
$\Gamma$, but $2^t$ times smaller due to the restriction to $0$ for $t$ positions, specified by $Q$, in $u,u'$.
\end{pf}

\vspace{2mm}

So $2^t$ copies of a commuting subgroup of ${\tilde{S^\perp}}$ comprise a commuting subgroup of  
$S^\perp$. We are particularly interested in the maximum size commuting subgroups of ${\tilde{S^\perp}}$ and $S^\perp$, i.e. in those subgroups that cannot be contained in larger commuting subgroups. In terms of the binary representation, this translates into finding matrices $B$ where $h$ is maximised, and such that $B{\tilde{\Gamma}} B^T = 0$.
Thus our question simplifies to finding the largest $h \times n-t$ $B$ matrices such that $B{\tilde{\Gamma}}B^T = 0$. Remember that $e = \frac{\m{rank}({\Gamma})}{2} = \frac{n-t}{2}$.

\begin{thm}
All maximum commuting subgroups of ${\tilde{S^\perp}}$ and $S^\perp$ are of size $2^e$ and $2^{n-e}$, respectively.
Moreover there are $\chi_e = \prod_{j=1}^e (2^j + 1)$ such groups in both cases, and each element is in $ \prod_{j=1}^{e-1} (2^j + 1)$ such groups.
\label{MaxCommSub}
\end{thm}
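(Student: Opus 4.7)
The plan is to recognize the statement as the classical enumeration of Lagrangian (maximal totally isotropic) subspaces of a non-degenerate symplectic $\F_2$-vector space. First observe that $\Gamma = A + A^T$ has zero diagonal over $\F_2$ (since $(A+A^T)_{jj}=2A_{jj}\equiv 0$), and so does $\tilde{\Gamma}$; being also symmetric and of full rank $2e$, $\tilde{\Gamma}$ is the Gram matrix of a non-degenerate alternating bilinear form on $V := \F_2^{2e}$. By Lemma~\ref{comm}, the linear subspaces $L_B \subseteq V$ that correspond to commuting subgroups of $\tilde{S^\perp}$ are exactly the totally isotropic subspaces for this form.

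Next I would invoke the standard symplectic-geometry fact that every totally isotropic subspace of a $2e$-dimensional symplectic $\F_2$-space has dimension at most $e$, with equality for Lagrangian subspaces. This immediately yields that the largest commuting subgroups of $\tilde{S^\perp}$ have order $2^e$, and Lemma~\ref{iso} promotes them to commuting subgroups of $S^\perp$ of size $2^{e+t} = 2^{n-e}$, with the same number of subgroups.

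To count the Lagrangians, let $\chi_e$ denote their number in $V$ and double-count pairs $(v,L)$ with $0 \ne v \in L$. On the one hand, each Lagrangian contributes $2^e - 1$ nonzero vectors. On the other, for each nonzero $v \in V$ (of which there are $2^{2e} - 1$) the Lagrangians through $v$ correspond bijectively with Lagrangians in the quotient $v^{\perp}/\langle v \rangle$, which is a non-degenerate symplectic space of dimension $2(e-1)$ (note $v \in v^{\perp}$ automatically because the form is alternating). This gives the recursion
\[ \chi_e\,(2^e - 1) = (2^{2e} - 1)\,\chi_{e-1}, \qquad \chi_0 = 1, \]
so $\chi_e = (2^e + 1)\chi_{e-1} = \prod_{j=1}^{e}(2^j + 1)$. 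Reading the same double-count the other way shows that each nonzero $v \in V$ lies in exactly $\chi_{e-1} = \prod_{j=1}^{e-1}(2^j+1)$ Lagrangians; Lemma~\ref{iso}'s bijection then transfers both counts verbatim to $S^\perp$, since the stated correspondence commutes with the natural projection on elements.

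The main obstacle is not conceptual but bookkeeping: once $\tilde{\Gamma}$ is identified as a non-degenerate alternating form, everything reduces to the short inductive Lagrangian count above. Care is mainly needed to verify that the bijection of Lemma~\ref{iso} respects per-element incidence, so that both the total count $\chi_e$ and the per-element count $\chi_{e-1}$ really do pass from $\tilde{S^\perp}$ to $S^\perp$ unchanged, and to note the harmless caveat that the identity element of $S^\perp$ lies in \emph{all} $\chi_e$ maximum commuting subgroups, so the statement ``each element is in $\prod_{j=1}^{e-1}(2^j+1)$ such groups'' is to be understood for non-identity elements.
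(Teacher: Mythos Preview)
Your argument is correct and reaches the same conclusion, but by a different route than the paper. The paper counts ordered bases of totally isotropic subspaces directly: it picks $B_0,B_1,\ldots,B_{e-1}$ one row at a time, observing that at step $j$ there are $2^{2e-j}-2^j$ admissible choices (vectors in the orthogonal complement of the span so far, minus those already in the span), and then divides the product $\prod_{j=0}^{e-1}(2^{2e-j}-2^j)$ by $|\mathrm{GL}(e,2)|=\prod_{j=0}^{e-1}(2^e-2^j)$ to obtain $\chi_e=\prod_{j=1}^e(2^j+1)$; the per-element count is done by fixing $B_0$ and repeating. You instead name the structure --- $\tilde\Gamma$ is the Gram matrix of a non-degenerate alternating form, commuting subgroups are totally isotropic subspaces, maximal ones are Lagrangians --- and then run the standard inductive count via the symplectic reduction $v^\perp/\langle v\rangle$, obtaining the recursion $\chi_e(2^e-1)=(2^{2e}-1)\chi_{e-1}$. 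Your approach is more conceptual and makes the connection to symplectic geometry explicit; the paper's is self-contained and avoids quoting outside facts. Your observation that the identity element sits in all $\chi_e$ subgroups (so the per-element count $\chi_{e-1}$ applies only to non-identity elements) is a useful caveat that the paper leaves implicit.
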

\begin{pf}
Let the rows, $B_j$, of $B$ generate the linear subspace, $L$.
Observe that ${\tilde{\Gamma}}$ has a zero diagonal. Therefore any $B_j \in F_2^{n-t}$ satisfies
$B_j{\tilde{\Gamma}}B_j^T = 0$, and therefore so does any member of $L$. The maximum size of $L$ is then derived as follows.
\bite
	\item Choose non-zero $B_0 \in F_2^{n-t}$ such that $B_0{\tilde{\Gamma}}B_0^T = 0$. There are $2^{n-t} - 1$ such choices.
	\item Choose non-zero $B_1 \in F_2^{n-t}$, $B_1 \ne B_0$, such that any member, $b$, of the linear space generated by $\{B_0,B_1\}$ satisfies
	$b{\tilde{\Gamma}}b^T = 0$. There are $2^{n-t} - 2^1$ such choices.
	\item Choose non-zero $B_2 \in F_2^{n-t}$, $B_2$ not in the linear space generated by $\{B_0,B_1\}$, such that any member, $b$, of the linear space generated by
	$\{B_0,B_1,B_2\}$ satisfies $b{\tilde{\Gamma}}b^T = 0$. There are $2^{n-t} - 2^2$ such choices.
	\item \ldots and so on.
\eite
We continue in this manner until there are no more choices. Thus $B$ always has a maximum of $e$ rows, i.e. a maximum commuting subgroup of ${\tilde{S^\perp}}$ is always size $2^e$. In total we generate $\prod_{j=0}^{e-1} (2^{n-t-j} - 2^j)$ linear subspaces.
But, for $M$ a maximum rank $e \times e$ binary matrix, $B' = MB$ generates the same linear subspace as $B$. We have to remove such repetitions. $M$ is chosen
from $\m{GL}(e,2)$ and $|\m{GL}(e,2)| = \prod_{j=0}^{e-1} (2^e - 2^j)$, so the total number of unique linear subspaces, i.e. of maximum commuting subgroups of ${\tilde{S^\perp}}$, is
$\chi_e = \frac{\prod_{j=0}^{e-1} (2^{n-t-j} - 2^j)}{\prod_{j=0}^{e-1} (2^e - 2^j)} = \prod_{j=1}^e (2^j + 1)$. For $S^\perp$ we simply multiply the maximum commutative subgroup size of  ${\tilde{S^\perp}}$
by $2^t$, as there are $t$ redundant rows/columns, to get $2^{n-e} = 2^e \times 2^t$. The number of maximum commutative subgroups remains at $\chi_e = \prod_{j=1}^e (2^j + 1)$.

Each element is in $\prod_{j=1}^{e-1} (2^{n-t-j} - 2^j)$ linear subspaces, since we fix the first element. There are $ \prod_{j=1}^{e-1} (2^e - 2^j)$ repetitions (the elements in GL$(e,2)$ with fixed first row), so each element is in $ \frac{\prod_{j=1}^{e-1} (2^{n-t-j} - 2^j)}{\prod_{j=1}^{e-1} (2^e - 2^j)} = \prod_{j=1}^{e-1} (2^j + 1)$ such groups.
\end{pf}

\vspace{2mm}

It is interesting to note that $\chi_e$  is also the total number of binary self-dual codes of length $2(e+1)$. This leads us to make a passing observation regarding the multiplicative order of ${\tilde{\Gamma}}$.

\begin{lem}
Let $u$ be the multiplicative order of ${\tilde{\Gamma}}$, i.e. let ${\tilde{\Gamma}}^u = I$ for some minimum positive $u$. Then $u$ is even and ${\tilde{\Gamma}}$ cannot
be factored as ${\tilde{\Gamma}} = \Omega \Omega^T$ for some $\Omega$.
\end{lem}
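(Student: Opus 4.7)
The plan is to deduce both claims from a single diagonal parity argument, handling the non-factorability statement first and then using it to force $u$ to be even.

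\textbf{Step 1 (no square factorization).} Recall that $\tilde{\Gamma}$ is a full-rank symmetric $\F_2$-matrix with zero diagonal, as it inherits these properties from the adjacency matrix $\Gamma$ of the simple graph $G_b$ (the zero-diagonal fact was already noted in the proof of Theorem \ref{MaxCommSub}). Suppose for contradiction that $\tilde{\Gamma} = \Omega \Omega^T$ for some square matrix $\Omega$ over $\F_2$. Comparing diagonals, $0 = (\tilde{\Gamma})_{ii} = (\Omega \Omega^T)_{ii} = \sum_j \Omega_{ij}^2 = \sum_j \Omega_{ij}$, using $x^2 = x$ in characteristic two. So every row of $\Omega$ has even weight, which means $\Omega \mathbf{1} = 0$ for $\mathbf{1}$ the all-ones column. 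Hence $\Omega$ is singular, so $\tilde{\Gamma} = \Omega \Omega^T$ is also singular, contradicting the full rank of $\tilde{\Gamma}$.

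\textbf{Step 2 (evenness of the order).} Suppose, for contradiction, that $u$ were odd and write $u = 2k + 1$. Then $\tilde{\Gamma}^{2(k+1)} = \tilde{\Gamma} \cdot \tilde{\Gamma}^u = \tilde{\Gamma}$, so setting $M = \tilde{\Gamma}^{k+1}$ yields $M^2 = \tilde{\Gamma}$. Because $M$ is a polynomial in the symmetric matrix $\tilde{\Gamma}$, $M$ is itself symmetric, and therefore $\tilde{\Gamma} = M^2 = M M^T$. This is exactly the (square) factorization ruled out in Step 1, so $u$ must be even.

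\textbf{Main obstacle / caveat.} The only technical subtlety is the interpretation of $\Omega$. For non-square $\Omega$ (with more columns than rows) a factorization $\tilde{\Gamma} = \Omega \Omega^T$ can genuinely exist in $\F_2$ — for example $\left(\begin{smallmatrix} 0 & 1 \\ 1 & 0 \end{smallmatrix}\right)$ admits a $2 \times 4$ factorization — so the lemma has to be read as restricting to square $\Omega$. Fortunately the matrix produced in Step 2, namely $M = \tilde{\Gamma}^{k+1}$, is automatically square, so the link between the two parts of the lemma still goes through cleanly. Beyond the identity $x^2 = x$ in $\F_2$ and the fact that polynomials in a symmetric matrix remain symmetric, no further algebra is required.
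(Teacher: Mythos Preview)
Your proof is correct and takes a genuinely different, more elementary route than the paper's. The paper argues by counting: assuming either $u$ odd or $\tilde\Gamma=\Omega\Omega^T$, it sets up a correspondence between the $e\times(n-t)$ matrices $B$ satisfying $B\tilde\Gamma B^T=0$ and generators $C$ of binary self-dual codes of length $2e$ (via $B=C\tilde\Gamma^{(u-1)/2}$ in the odd-order case, and $C=B\Omega$ in the factorization case), so that the number of maximal commuting subgroups would equal $\chi_{e-1}$, contradicting the value $\chi_e$ established in Theorem~\ref{MaxCommSub}. Your argument avoids any appeal to that theorem: you simply read off from the zero diagonal of $\tilde\Gamma$ that any square $\Omega$ with $\Omega\Omega^T=\tilde\Gamma$ must have all row sums zero over $\F_2$, hence be singular, contradicting full rank; and then you reduce odd $u$ to this case by observing that $M=\tilde\Gamma^{(u+1)/2}$ is a symmetric square root, so $\tilde\Gamma=MM^T$. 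The paper's argument ties the lemma back to the self-dual-code combinatorics running through the section, whereas yours is self-contained and arguably cleaner on the factorization half (the paper's bijection $C=B\Omega$ tacitly needs $\Omega$ invertible, which your Step~1 shows is impossible). Your caveat about non-square $\Omega$ is well taken and matches the intended reading.
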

\begin{pf}
If $u$ is odd or if $\exists \Omega$ such that ${\tilde{\Gamma}} = \Omega \Omega^T$ then we show that $\exists B$ matrices of size $\frac{n-t}{2} \times n-t$ such that $B\Gamma B^T = 0$, where the set of $B$
matrices is in one-to-one correspondence with the set of self-dual binary codes of length $2e$ and dimension $e$, implying that there are $\chi_{e-1}$ of them. But this is impossible as, by theorem \ref{MaxCommSub}, there are $\chi_e$ of them. The argument is as follows.
If $u$ is odd then $B{\tilde{\Gamma}}B^T = C{\tilde{\Gamma}}^{\frac{u-1}{2}}{\tilde{\Gamma}}{\tilde{\Gamma}}^{\frac{u-1}{2}}C^T = CC^T = 0$, where $B =  C{\tilde{\Gamma}}^{\frac{u-1}{2}}$.
If ${\tilde{\Gamma}} = \Omega \Omega^T$ then $B{\tilde{\Gamma}}B^T = CC^T = 0$, where $C = B\Omega$. In both cases $C$ is taken from the set of matrices that generate self-dual
binary codes of length $n-t$ and dimension $\frac{n-t}{2}$, where $B = C{\tilde{\Gamma}}^{\frac{u-1}{2}}$ in the first case and $B = C\Omega^{-1}$ in the second case. There are
$\chi_{e-1}$ such matrices, which contradicts theorem \ref{MaxCommSub}.
%Thus, given these conditions on ${\tilde{\Gamma}}$, all maximum commuting subgroups of ${\tilde{S^\perp}}$ and $S^\perp$ are of size $2^{\frac{n-t}{2}} = 2^e$ and $2^{n-e}$, respectively.
\end{pf}

\begin{ex} Let $G$ be the graph defining a mixed graph state, $\rho$, stabilised by the rows of ${\cal{A}} = \left ( \begin{array}{c}
X\otimes Z\otimes Z\otimes Z \\
\,I\otimes X\otimes I\otimes I \\
\,I\otimes Z\otimes X\otimes Z, \\
\,I\otimes I\otimes Z\otimes X \end{array} \right )$.
We want to find all maximal commutative subgroups of $S^\perp$. The adjacency matrix of $G_b$ is $\Gamma=\left(\begin{array}{cccc}
0&1&1&1\\
1&0&1&0\\
1&1&0&0\\
1&0&0&0
\end{array}\right)$, which has full rank, so ${\tilde{\Gamma}} = \Gamma$. Therefore $t = 0$, $e = \frac{\m{rank}(\Gamma)}{2} = \frac{n-t}{2} = 2$, and
the size of a maximal commutative subgroup of $S^{\perp}$ is $2^{n-e}=2^2=4$. Moreover, $\Gamma^u = \Gamma^4 = I$.
%The binary linear code generated by this matrix is $C=\{0000,0111,1010,1100,1000,1101,1011,1111,0110,0010,0100,0001,0011,0101,1110,1001\}={\mathbb{F}}_2^4$ (this is obvious since $\Gamma_{G_b}$ has full rank). %, The order within the set will determine who commutes with whom, it is not the same if, say, 0001 is the element corresponding to ${\cev{s}_0}$, or to ${\cev{s}_2}{\cev{s}_3}$.

From theorem \ref{MaxCommSub} there are $\chi_e = 15$ maximum commuting subgroups of $S^{\perp}$ and these can be generated by 15 matrices, $B$, satisfying
$B{\tilde{\Gamma}}B^T = 0$, where the $B$ can be chosen, (non-uniquely) from
\begin{scriptsize}
$$ \begin{array}{ccccc}
\left ( \begin{array}{c} 1000 \\ 0110 \end{array} \right ), & \left ( \begin{array}{c} 1000 \\ 0101 \end{array} \right ), & \left ( \begin{array}{c} 1000 \\ 0011 \end{array} \right ), &
\left ( \begin{array}{c} 0100 \\ 1010 \end{array} \right ), & \left ( \begin{array}{c} 0100 \\ 0001 \end{array} \right ), \\
\left ( \begin{array}{c} 0100 \\ 1011 \end{array} \right ), & \left ( \begin{array}{c} 1100 \\ 0010 \end{array} \right ), & \left ( \begin{array}{c} 1100 \\ 1001 \end{array} \right ), &
\left ( \begin{array}{c} 0010 \\ 0001 \end{array} \right ), & \left ( \begin{array}{c} 0010 \\ 1101 \end{array} \right ), \\
\left ( \begin{array}{c} 1010 \\ 1001 \end{array} \right ), & \left ( \begin{array}{c} 1010 \\ 1101 \end{array} \right ), & \left ( \begin{array}{c} 0110 \\ 0001 \end{array} \right ), &
\left ( \begin{array}{c} 0110 \\ 1001 \end{array} \right ), & \left ( \begin{array}{c} 1011 \\ 0111 \end{array} \right ).
\end{array} $$
\end{scriptsize}

For instance, $B = \left ( \begin{array}{c} 1100 \\ 1001 \end{array} \right )$ acting multiplicatively on the rows of ${\cal{A}}^T$ generates the maximum commuting
subgroup with elements $\{I \otimes I \otimes I \otimes I, \mz \mp iY \otimes X \otimes Z \otimes I, \mz \mp i Y \otimes I \otimes Z \otimes X, \mz I \otimes X \otimes I \otimes X\}$.
Similarly, $B = \left ( \begin{array}{c} 1011 \\ 0111 \end{array} \right )$ acting multiplicatively on the rows of ${\cal{A}}^T$ generates the maximum commuting
subgroup with elements $\{I \otimes I \otimes I \otimes I, \mz \mp iX \otimes I \otimes Y \otimes Y, \mz \mp i Z \otimes X \otimes X \otimes Y, \mz \mp iY \otimes X \otimes Z \otimes I\}$.
Observe that, in this case, $\mp iY \otimes X \otimes Z \otimes I$ occurs in both subgroups. More generally,  each group element occurs in 3 maximal commutative subgroups.
\end{ex}

\begin{ex} Let $G$ be the graph defining a mixed graph state, $\rho$, stabilised by the rows of ${\cal{A}} = \left ( \begin{array}{c}
X\otimes Z\otimes I \\
\,I\otimes X\otimes Z \\
Z \otimes I \otimes X \end{array} \right )$.
The adjacency matrix of $G_b$ is $\Gamma=\left(\begin{array}{ccc}
0&1&1\\
1&0&1\\
1&1&0
\end{array}\right)$, which has rank 2. Therefore $t = 1$, and $e = \frac{n-t}{2} = 1$, and
the size of a maximal commutative subgroup of $S^{\perp}$ is $2^{n-e}=2^2=4$. By removing one dependent row and corresponding column
of $\Gamma$ (we choose the last row/column) we can obtain a ${\tilde{\Gamma}} =\left(\begin{array}{cc}
0&1\\
1&0
\end{array}\right)$.

From theorem \ref{MaxCommSub} there are $\chi_e = 3$ maximum commuting subgroups of $S^{\perp}$ and these can be generated by extensions of 3 matrices, $B$, satisfying $B{\tilde{\Gamma}}B^T = 0$, where the $B$ are
\begin{scriptsize}
$$ \begin{array}{ccc}
\left ( \begin{array}{c} 10 \end{array} \right ), & \left ( \begin{array}{c} 01 \end{array} \right ), & \left ( \begin{array}{c} 11 \end{array} \right ).
\end{array} $$
\end{scriptsize}

Noting that the last row/column of $\Gamma$ is dependent because $\Gamma_0 + \Gamma_1 + \Gamma_2 = 000$, we obtain the three maximum commuting
subgroups of $S^{\perp}$ via the 3 matrices, $B'$, these being
\begin{scriptsize}
$$ \begin{array}{ccc}
\left ( \begin{array}{c} 100 \\ 111 \end{array} \right ), & \left ( \begin{array}{c} 010 \\ 111 \end{array} \right ), & \left ( \begin{array}{c} 110 \\ 111 \end{array} \right ),
\end{array} $$
\end{scriptsize}

leading to the following three maximum commuting subgroups of $S^{\perp}$, respectively:
$\{I \otimes I \otimes I, X \otimes I \otimes Z, \mp iY \otimes IY\otimes Y, \mp iZ \otimes Y \otimes X\}$,
$\{I \otimes I \otimes I, Z \otimes X \otimes I, \mp iY \otimes IY\otimes Y, \pm iX \otimes Z \otimes Y\}$,
$\{I \otimes I \otimes I, \mp Y \otimes X \otimes Z, \mp iY \otimes IY\otimes Y, I \otimes Z \otimes X\}$.
\end{ex}

\begin{lem}
Let $G$ and $H$ be two mixed graphs on $n$ variables. If they have the same mixed rank $e$, there exists an isomorphism between  their commutative subgroups.
\end{lem}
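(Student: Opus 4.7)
The plan is to reduce the statement to the classification of non-degenerate alternating bilinear forms over $\F_2$. By Lemma \ref{comm}, commuting subgroups of $S^\perp$ correspond bijectively to linear subspaces $L \subseteq \F_2^n$ that are totally isotropic for the form $(v,v') \mapsto v \Gamma v'^T$, with the lattice of such subspaces mirroring the lattice of commutative subgroups. I would first apply Lemma \ref{iso} to each graph, reducing to full-rank representatives $\tilde{\Gamma}_G$ and $\tilde{\Gamma}_H$ of size $2e \times 2e$; the $2^t$-fold redundancy in lifting back to $S^\perp$ is governed only by $t = n - 2e$, which is the same for $G$ and $H$, so the reduction is compatible with the sought isomorphism.

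The key algebraic step is that $\tilde{\Gamma}_G$ and $\tilde{\Gamma}_H$ are symmetric matrices over $\F_2$ with zero diagonal (automatic from $\Gamma = A + A^T$ over $\F_2$) and of full rank $2e$. Over $\F_2$, a symmetric bilinear form is alternating precisely when its matrix has zero diagonal, so both $\tilde{\Gamma}_G$ and $\tilde{\Gamma}_H$ define non-degenerate alternating forms on $\F_2^{2e}$. The classification of symplectic forms then yields $P \in \m{GL}(2e, \F_2)$ with $\tilde{\Gamma}_G = P \tilde{\Gamma}_H P^T$, since both are congruent to the standard symplectic form $\left( \begin{smallmatrix} 0 & I_e \\ I_e & 0 \end{smallmatrix} \right)$.

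From the identity $v \tilde{\Gamma}_G v'^T = (vP) \tilde{\Gamma}_H (v'P)^T$ it follows that the linear bijection $v \mapsto vP$ sends totally isotropic subspaces for $\tilde{\Gamma}_G$ onto those for $\tilde{\Gamma}_H$, preserving dimension and inclusion and hence the whole lattice of commutative subgroups of ${\tilde{S^\perp}}$. Lifting via Lemma \ref{iso} (the uniform $2^t$-fold expansion) carries this to a lattice isomorphism between the commutative subgroups of $S^\perp_G$ and $S^\perp_H$, and in particular pairs up the $\chi_e$ maximum commuting subgroups counted in Theorem \ref{MaxCommSub}.

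The main obstacle is invoking the classification of alternating forms at the correct level of generality. In characteristic $2$ the symmetric and alternating categories genuinely diverge --- a symmetric form is alternating iff its matrix has zero diagonal --- and only the alternating ones admit the symplectic normal form used above. In our setting the zero-diagonal property of $\tilde{\Gamma}$ is automatic from $\Gamma = A + A^T$ over $\F_2$, so the classification applies directly and yields the desired $P$; the rest of the argument is bookkeeping through Lemmas \ref{comm} and \ref{iso}.
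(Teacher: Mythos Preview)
Your argument is correct and is in fact cleaner than the paper's. The paper proceeds by a direct construction: in the full-rank case $n=2e$ it declares the isomorphism that sends the $j$th row of $\Gamma^G$ to the $j$th row of $\Gamma^H$ (both being bases of $\F_2^n$), asserting without further justification that ``their commutative properties are the same''; the degenerate case $n>2e$ is then handled by separately matching up dependent rows after a reordering. Your route instead isolates the structural reason the lemma holds: both $\tilde{\Gamma}_G$ and $\tilde{\Gamma}_H$ are non-degenerate alternating forms on $\F_2^{2e}$, and such forms are all congruent to the standard symplectic form, producing the required $P\in\m{GL}(2e,\F_2)$. This buys you a genuine proof that the induced bijection respects the entire lattice of isotropic (equivalently, commutative) subspaces, whereas the paper's row-to-row map does not obviously preserve the pairing $v\Gamma v'^T$ and the claim ``commutativity is preserved by linearity'' is left unsubstantiated. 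Your handling of the rank-deficient case via Lemma~\ref{iso} and the uniform $2^t$ factor is also more transparent than the paper's ad hoc treatment of dependent rows.
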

\begin{pf} Assume first $n=2e$. Then, the linear group generated by their respective $\Gamma$ is in both cases ${\mathbb{F}}_2^e$. The isomorphism is given by sending the basis elements of $\left<\Gamma^G\right>$ (i.e., the rows of $\Gamma^G$) to the corresponding elements in $\left<\Gamma^H\right>$. Their commutative properties are the same, so this isomorphism preserves commutatitivity.

For $n>2e$, the linearly dependent elements can be expressed as $\Gamma^G_j=\sum_K \Gamma^G_k$, and similarly (after a possible reordering, that is, a permutation of the nodes) $\Gamma^H_j=\sum_{K'} \Gamma^H_k$. The isomorphism for the independent rows is the same as before. For the dependent rows, we send ${\cal{A}}^T_j{\cal{A}}^T_K$ to ${\cal{A}}^T_j{\cal{A}}^T_{K'}$. This sends the row $00\ldots0$ to itself, and commutativity is preserved by linearity. 
\end{pf}
\begin{ex}
Let $G_b$ be the line in 4 variables, with adjacency matrix $\Gamma =   \left(\begin{array}{cccc}
0&1&0&0\\
1&0&1&0\\
0&1&0&1\\
0&0&1&0
\end{array}\right)$,
and the complete graph in 4 variables. Both of them have $e=2$. The group generated by the respective adjacency matrices is $\mathbb{F}_2^4$. We
have the isomorphism $f:S^T_{\m{Line}}\rightarrow S^T_{\m{Complete graph}}$ defined by $f({\cal{A}}^T_0)={\cal{A}}^T_{\{0,2,3\}},\,f({\cal{A}}^T_1)={\cal{A}}^T_{\{0,2\}},\,f({\cal{A}}^T_2)={\cal{A}}^T_{\{1,3\}},\,f({\cal{A}}^T_3)={\cal{A}}^T_{\{0,1,3\}}$. 
%the elements that commute with ${{\cal{A}}^T_0}$ are  $\{I,{\cal{A}}^T_0},{{\cal{A}}^T_2},{{\cal{A}}^T_3},{{\cal{A}}^T_{02}},{{\cal{A}}^T_{03}},{{\cal{A}}^T_{23}},{{\cal{A}}^T_{023}}\}$, which are generated by ${{\cal{A}}^T_0},{{\cal{A}}^T_2},{{\cal{A}}^T_3}$, with corresponding matrix rows $0100,0101,0010$. For the clique these generators are ${{\cal{A}}^T_{023}},{{\cal{A}}^T_{13}},{{\cal{A}}^T_{013}}$, generating an isomorphic group.
In particular, the %commutativity of ${{\cal{A}}^T_0$ on the line is equivalent to the commutativity of ${{\cal{A}}^T_{023}$ on the clique, up to a reordering of the group elements.
commutative subgroup on the line generated by ${\cal{A}}^T_0,{\cal{A}}^T_3$ corresponds to the commutative subgroup on the complete graph generated by   ${\cal{A}}^T_{\{0,2,3\}},{\cal{A}}^T_{\{0,1,3\}}$.
\end{ex}
\begin{ex} let $G$ be the triangle on 3 variables, and $H$ the star graph on 3 variables. For both of them, $e=1$. A possible isomorphism is given by $f:S^T_G\rightarrow S^T_H$, defined by $f({\cal{A}}^T_0)={\cal{A}}^T_0,\,f({\cal{A}}^T_1)={\cal{A}}^T_{1},\,f({\cal{A}}^T_2)={\cal{A}}^T_{\{0,2\}}$. In particular, the  commutative subgroup on  $G$ generated by ${\cal{A}}^T_1,{\cal{A}}^T_{\{0,2\}}$ corresponds to the commutative subgroup on $H$ generated by   ${\cal{A}}^T_{1},{\cal{A}}^T_{2}$. Note that if we take the isomorphism given by $f:S^T_G\rightarrow S^T_H$, defined by $f({\cal{A}}^T_0)={\cal{A}}^T_0,\,f({\cal{A}}^T_1)={\cal{A}}^T_{2},\,f({\cal{A}}^T_2)={\cal{A}}^T_{\{0,1\}}$, we obtain the same commutative subgroups.

\end{ex}

\section{Children of pure graph state parents}\label{sectionchildren}
In this section, given a binary string of length $n$, $j=j_0\cdots j_{n-1}$, we define
%${\cev{s}_L}$ by%: ${\cev{s}_{0\ldots 1\ldots0}}={\cal A}_j^T$, where the 1 is in postion $j$. In general, 
${\cev{s}_j}:={\cal A}_K^T$, where $K=\{i: j_i=1\}$, where ${\cev{s}_{0\ldots 00}} = I$. Remember that, for $i,i' \in K$, $i' > i$, we define ${\cal A}_K^T$ to contain the multiplicative
factor ${\cal A}^T_{i'}{\cal A}^T_i$. However the ordering ${\cal A}^T_i{\cal A}^T_{i'}$ is just as valid and differs by a global multiplicative factor of $-1$ if ${\cal A}^T_i$ and
${\cal A}^T_{i'}$ anti-commute. This situation carries over to the ${\cev{s}_j}$. In theorem \ref{pauli sum}, and for such cases where ${\cev{s}_j}$ contains anti-commuting factors, the $b_j$ are therefore only defined up to $\pm 1$.
%But in theorem \ref{pauli sum} this $\pm 1$ factor is subsumed into the $b_j$ so can ignored in this context.
Let $L_m=\{v : ({\cal{A}}^e)_{v,n+m}=Z\mbox{ or } Y\}$, and let
$J = \{x \in {\mathbb{F}}_2^n \mz | \mz {\cal I}(x) = 1\}$, where ${\cal I}(x) = \prod_{m=0}^{e-1}(1 + \sum_{k\in L_m} x_k) = \prod_{m=0}^{e-1} {\cal I}_m(x)$.

\begin{thm} \label{pauli sum} Given a child, $\rho$,  of a pure graph state parent of a mixed graph, corresponding to a symmetric extension of ${\cal{A}}$ by $e$ columns and rows, ${\cal{A}}^e$, we can write $\rho$ as:
%$$\rho=\sum_{u\in J} \pm \sigma_u$$ where $J$ is the indicator of the Boolean function $\displaystyle\prod_{j=0}^{t-1}(\sum_{k\in L_j} x_k+1)$,  where $L_j=\{v : ({\cal{A}}^e)_{v,n+j}=Z\mbox{ or } Y\}$,   and $\sigma_u$ is the $u$th element in $S^\perp$.

$$\rho= \frac{1}{2^n}\sum_{j\in J}b_j{\cev{s}_j}, \mf \mf b_j \in \{\pm 1, \pm i\}, \forall j, $$%,  \mf
%{\cev{s}_j} \in S^{\perp}, $$
where
%$J$ is the indicator\begin{footnote}{By the indicator of a Boolean function $f$, we mean the set of binary linear strings, $L$, such that $f(L)=1$.%That is, the set of binary linear strings $L\in{\mathbb{F}}_2^n$ where the truth table of the function is 1.
%}\end{footnote}of the Boolean function $\displaystyle\prod_{m=0}^{e-1}(\sum_{k\in L_m} x_k+1)$,  where
%$L_m=\{v : ({\cal{A}}^e)_{v,n+m}=Z\mbox{ or } Y\}$,  %for  a symmetric extension of ${\cal{A}}$ by $e$ columns, 
%$b_j=\pm\frac{1}{2^n},\pm\frac{i}{2^n}$, and
%Furthermore, $\pm,\pm i \{\sigma_u: u\in J\}$  is a commutative subgroup of $S^\perp$, and the sign of summation is given so that if $\sigma_u,\sigma_v$ are present in the sum, so is $\sigma_u\sigma_v$ (and not $-\sigma_u\sigma_v$).
all ${\cev{s}_j},\,j\in J$, commute pairwise, and if ${\cev{s}_j},{\cev{s}_{j'}}$ are present in the sum, so is ${\cev{s}_j}{\cev{s}_{j'}} = {\cev{s}_{j+j'}}$, implying that $\{{\cev{s}_j}: j\in J\}$ 
is a commutative subgroup of $S^\perp$.
\end{thm}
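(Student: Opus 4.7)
The plan is to use the fact that the parent is a pure stabilizer state, giving the expansion $\rho_{\m{par}} = 2^{-(n+e)}\sum_{g \in S_{\m{par}}} g$ over the $2^{n+e}$-element commutative stabilizer group generated by the rows of ${\cal{A}}^e$, and then to partial-trace over the $e$ environmental qubits. Writing $g = g_L \otimes g_E$, one has $\Tr_E(g) = g_L\,\Tr(g_E)$, and a Pauli tensor has nonzero trace only when it equals $\pm I^{\otimes e}$; so the partial trace selects the subgroup $S_0 = \{g \in S_{\m{par}} : g_E = \pm I^{\otimes e}\}$ and yields $\rho = 2^{-n}\sum_{g \in S_0} \epsilon_g g_L$, with $\epsilon_g \in \{\pm 1\}$ absorbed from the environmental trace.

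Next I would identify this selection condition with the definition of $J$. Parametrising $g$ by $K \subseteq \{0,\ldots,n+e-1\}$ (the indices of the rows of ${\cal{A}}^e$ used in the product), the graph form of ${\cal{A}}^e$ forces position $\ell$ of $g_K$ to lie in $\{X,Y\}$ exactly when $\ell \in K$, and in $\{I,Z\}$ otherwise, just as in the proof of Lemma \ref{supp}. Requiring $g_E = \pm I^{\otimes e}$ rules out any $K$ containing an environmental index, so $K \subseteq \{0,\ldots,n-1\}$ is captured by an indicator $x \in \F_2^n$. The remaining requirement is that at each environmental position $n+m$ the $Z$'s contributed by the selected rows cancel pairwise, which reads $\bigoplus_{k \in L_m} x_k = 0$ for every $m$, i.e.\ ${\cal I}(x) = 1$; hence the admissible $K$'s are indexed precisely by $x \in J$.

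The third step verifies that the associated local parts $\{g_L : g \in S_0\}$ form a commutative subgroup of $S^\perp$. Closure and commutativity descend from $S_{\m{par}}$ since trivial environmental parts collapse every group-theoretic relation to the local factors. Membership in $S^\perp$ follows because each parent generator $R_i$ for $i<n$ has the form ${\cal{A}}_i \otimes \m{ext}_i$; commutativity of $S_{\m{par}}$ forces $g_L$ to commute with every ${\cal{A}}_i$, and Theorem \ref{revmix} then identifies $g_L$ (up to an overall $\pm 1$ or $\pm i$) with some $\cev{s}_j$, yielding $\cev{s}_j\cev{s}_{j'} = \pm \cev{s}_{j+j'}$ and the stated subgroup structure.

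The most delicate part is the phase bookkeeping that places each $b_j$ in $\{\pm 1, \pm i\}$ rather than only $\{\pm 1\}$. Two sources contribute: products of rows of ${\cal{A}}^T$ with anti-commuting factors acquire a factor of $i$ once normalised to the lowest-index-on-the-right ordering fixed in the statement, and the environmental trace $\epsilon_g$ supplies its own $\pm 1$. Tracking both through the identification $g_L \leftrightarrow \cev{s}_j$ shows that each coefficient is a fourth root of unity, and Hermiticity of $\rho$ together with Lemma \ref{supp} applied term-by-term (so that each $b_j \cev{s}_j$ is independently Hermitian) pins the coefficients to $\{\pm 1, \pm i\}$.
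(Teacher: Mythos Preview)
Your approach is correct and genuinely different from the paper's. The paper argues via the Boolean-function representation of the parent: it observes that tracing the environmental qubit $x_{n+m}$ kills those first-row entries of $\rho$ at positions where $\sum_{k\in L_m}x_k=1$, then invokes Lemma~\ref{supp} (non-intersecting supports in $S^\perp$) together with a small sublemma locating the unique nonzero first-row entry of each ${\cev{s}_j}$, so that $\rho$ picks out exactly those ${\cev{s}_j}$ whose first-row support survives, i.e.\ $j\in J$. Your route---expand $\rho_{\m{par}}=2^{-(n+e)}\sum_{g\in S_{\m{par}}}g$ and partial-trace, keeping only those $g$ with trivial environmental tensor factor---is the standard stabilizer-formalism computation and is arguably cleaner; it makes the subgroup and commutativity claims immediate, since they descend from $S_{\m{par}}$ rather than being checked by hand as in the paper.

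One point deserves tightening. You use the graph form ${\cal A}^e$ to parametrise $g$ by $K\subseteq\{0,\ldots,n-1\}$, but then appeal to generators of the form ${\cal A}_i\otimes\m{ext}_i$, which are the rows of ${\cal A}'$ before the row-reduction to graph form; the first $n$ rows of ${\cal A}^e$ are generally not of this shape. This is harmless (both generate $S_{\m{par}}$), but you should say so. More substantively, Theorem~\ref{revmix} only places $g_L$ somewhere in $S^\perp$; to land on ${\cev{s}_x}$ with the \emph{same} index $x$ you need one more observation: $g_L$ has $X/Y$ exactly at the positions in $K=\mathrm{supp}(x)$ (from the graph form of ${\cal A}^e$), and by the argument in the proof of Lemma~\ref{supp} the element of $S^\perp$ with that $X/Y$-pattern is unique, namely ${\cev{s}_x}$. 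With that link made explicit, your identification of $J$ and your phase discussion go through.
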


Let $j^{\al}  \in {\mathbb{F}}_2^n$ and $j^{\al,\beta}  \in {\mathbb{F}}_2^n$, $0 \le \al,\beta < n$ be weight-one and weight-two binary vectors with 1's only in positions $\al$, and $\al$ and $\beta$, respectively.

%THIS I CHANGED THE 23.7
{\begin{lem}\label{sign} Given a  child, $\rho$, of a pure graph state parent of a mixed graph, described by ${\cal A}^e$, the coefficients, $b_j$, in the sum of theorem \ref{pauli sum}, are as follows:

\begin{itemize} \item Case $e=1$:

%We decompose the terms in the sum to the ones with
\begin{itemize}
%\item Any term, $\prod_{u\in A}\sigma_u, \,A\subseteq \{0,\ldots,n-1\}$, with commuting $\sigma_u$, will have  coefficient +1.
%\item Any term, $\prod_{u\in A}\sigma_u, \,A\subseteq \{0,\ldots,n-1\}$, where some of the $\sigma_u$ do not commute, will have  coefficient $\pm i$, and the sign will be given by the order of the product $\prod_{u\in A\cap B}\sigma_u\prod_{u\in A\cap C}\sigma_u$, where $A=\{u:({\cal{A}}^e)_{u,n}=Z$
\item If ${\cev{s}_{j^{\al}}}$
% the matrix ${\cal{A}}_i^T, \,0\leq i < n$,
is present in the sum % (that is, if $b_j=\pm\frac{1}{2^n}$), 
then $b_{j_{\al}} = +1$.
\item If ${\cev{s}_{j^{\al,\beta}}}$
%the matrix ${\cal{A}}^T_{\{j,k\}}\,0\leq j,k < n$,
is present in the sum then we have two cases:
\begin{enumerate}
\item if ${\cev{s}_{j^{\al}}}$ and ${\cev{s}_{j^{\beta}}}$ anticommute, then $\exists $ $a,b\in\{\al,\beta\}$ such that $({\cal{A}}^e)_{a,n}=Z$ and  $({\cal{A}}^e)_{b,n}=Y$; furthermore, the corresponding term in the sum will be equal to $\pm i{\cal{A}}^T_{\{\al,\beta\}}=i{\cal{A}}_a^T{\cal{A}}_b^T$, so $b_{j^{\al,\beta}} = \pm i$.%, where the sign  is given by the order of multiplication ${\cal{A}}_a{\cal{A}}_b$.%, by which we mean that we multiply on the left the matrix $\sigma_a$ that has $({\cal{A}}^e)_{a,n}=Z$, and on the right the matrix $\sigma_b$ that has $({\cal{A}}^e)_{b,n}=Y$.
\item if ${\cal{A}}_{\al}^T$ and ${\cal{A}}_{\beta}^T$  commute,   the corresponding term in the sum will be equal to ${\cal{A}}^T_{\{\al,\beta\}}$, and $b_{j^{\al,\beta}} = 1$.%, then its coefficient is always $\frac{1}{2^n}$.
\end{enumerate}
\item Any other matrices, ${\cal{A}}_K^T$, will be products of ${\cal{A}}_{K_j}^T$ also present in the sum, with $K_j$ of size 1 and size 2. Furthermore, their coefficients are given by the multiplication of the coefficients of the terms into which it is decomposed.
%\item For a given child, the sign of the term, $\prod_u\in A$, present in $\rho$, is equal to +1 in all commuting cas

\end{itemize}
\item General $e$: The coefficient of ${\cal{A}}_{K}^T$, with $K$ of size 1 or 2 is given by the multiplication together of the coefficients, one from each extension column. In this case, there might be nondecomposable terms. However, the coefficient of a term will be giving by multiplying the coefficients obtained by taking each column separately.
\item Furthermore, for any $e$,  if we change the sign of a term ${\cal{A}}_k^T$, $0\leq k < n$,  then we add the Boolean linear term $x_k$ to the quadratic Boolean representation of the pure parent graph state, or if we change the sign of ${\cal{A}}_K^T$, for $K$ a set of size $t>1$ present in the sum, we add any of the Boolean linear terms $x_k$ for each $k$ s.t. $({\cal{A}}^e)_{n+j,k}=Y$ or $({\cal{A}}^e)_{n+j,k}=Z$.%; then, the signs of all $\sigma_v$  change sign if supp$(j)\cap\mbox{supp}(v)\neq\emptyset$.
\end{itemize}
\end{lem}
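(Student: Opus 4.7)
The plan is to compute $\rho$ by tracing out the $e$ environmental qubits of the pure parent $\ket{\psi}_e\bra{\psi}_e$ and expressing the result in the ${\cev{s}_j}$ basis. Using the standard stabilizer identity $\ket{\psi}_e\bra{\psi}_e = \frac{1}{2^{n+e}}\sum_{s\in S_e} s$ (with $S_e$ the parent's stabilizer group of size $2^{n+e}$) and the fact that $\mathrm{Tr}_E(P)=2^e$ if the environment part of a Pauli tensor $P$ is $I^{\otimes e}$ and $0$ otherwise, the partial trace gives $\rho = \frac{1}{2^n}\sum_{s\in T} s|_L$ with $T := \{s\in S_e : s|_E = I^{\otimes e}\}$. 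Writing a generic $s$ as $\prod_v ({\cal A}^e)_v^{a_v}$ for $a\in\F_2^{n+e}$, the condition $s|_E = I^{\otimes e}$ forces, for each extension column $n+m$, that $a_{n+m}=0$ (the $(n+m,n+m)$ entry is $X$ or $Y$ and cannot cancel from within the column) and $\sum_{v\in L_m} a_v\equiv 0\pmod 2$ (only $v\in L_m$ contribute a $Z$ in that column, and they must pair up to give $I$), which are precisely the conditions ${\cal I}_m(x)=1$; so the surviving $a$'s project to indices in $J$. Moreover, for any row $r$ of ${\cal A}$ with parent-extension $r_{\mathrm{ext}}\in S_e$, the abelianness of $S_e$ gives $[s|_L, r] = 0$, so $s|_L \in S^\perp$, and $\{s|_L : s\in T\}$ is a commutative subgroup of $S^\perp$ of size $2^{n-e}$, i.e. a maximum one per Theorem~\ref{MaxCommSub}. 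Each $s|_L$ is a Hermitian Pauli tensor with real sign $\epsilon_j\in\{\pm 1\}$, while ${\cev{s}_j} = c_j P_j$ has phase $c_j\in\{\pm 1,\pm i\}$ from anticommuting row-products in ${\cal A}^T_K$; hence $b_j = \epsilon_j/c_j \in \{\pm 1,\pm i\}$.

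For $e=1$ and $|K|=1$: ${\cev{s}_{j^\al}}$ is present iff $\al\notin L_0$, i.e.\ $({\cal A}^e)_{\al,n}\in\{I,X\}$; the corresponding surviving stabilizer is row $\al$ of ${\cal A}^e$ itself (no extension-row multiplication is needed to satisfy the parity), whose local part equals the $\al$-th column of ${\cal A}$ by symmetry of ${\cal A}^e$, so $\epsilon_{j^\al}=c_{j^\al}=+1$ and $b_{j^\al}=+1$. For $|K|=2$: if ${\cal A}^T_\al$ and ${\cal A}^T_\beta$ commute, the product ${\cev{s}_{j^{\al,\beta}}}$ is sign-free and the surviving $s$ also has sign $+1$, so $b_{j^{\al,\beta}}=1$; if they anticommute, the environment-parity condition $\sum_{v\in L_0}a_v=0$ forces both $\al,\beta\in L_0$, so both $({\cal A}^e)_{\al,n},({\cal A}^e)_{\beta,n}\in\{Z,Y\}$, and local anti-commutation then requires exactly one to be $Z$ and the other $Y$ (otherwise the local parts would commute), yielding a factor $\pm i$ from the diagonal $Z\cdot Y$ or $Y\cdot Z$ in the product, hence $b_{j^{\al,\beta}}=\pm i$. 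For $|K|>2$, since $T$ is a group generated by its weight-$\le 2$ elements, every surviving ${\cev{s}_j}$ decomposes as a product of the atomic ones and $b_j$ is multiplicative.

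For general $e$, the extension columns impose independent parity constraints ${\cal I}_m(x)=1$ and contribute independent sign factors, since each column $n+m$ sees only the diagonal entry at $(n+m,n+m)$ and the $L_m$ positions; hence $b_j$ factorizes as a product over $m$ of per-column contributions, each fitting the $e=1$ analysis above. Finally, adding a linear term $x_k$ to $p$ multiplies $\ket{\psi}_e$ componentwise by $i^{x_k}$, which is conjugation of the parent by the local phase gate $S_k=\mathrm{diag}(1,i)$ on qubit $k$; under this conjugation $X_k\leftrightarrow Y_k$ (up to sign), flipping the sign of any ${\cev{s}_j}$ involving ${\cal A}^T_k$, and the analogous claim for ${\cev{s}_{n+j}}$ targets qubits $k$ with $({\cal A}^e)_{n+j,k}\in\{Y,Z\}$ (the non-identity local positions of extension row $n+j$). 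The main obstacle will be the fine-grained sign bookkeeping in the weight-2 anti-commuting case---showing that the $\pm i$ arises precisely from the $Z/Y$ asymmetry on the extension column---and, for general $e$, verifying the per-column factorization even when a single ${\cev{s}_j}$ involves vertices in multiple $L_m$'s simultaneously.
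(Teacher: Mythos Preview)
Your approach is genuinely different from the paper's and, at the structural level, cleaner. The paper argues entry-by-entry through the Boolean function of the parent: it notes that the first row of each measurement branch $\ket{\psi_m}\bra{\psi_m}$ equals $\frac{1}{2^{n/2}}\bra{\psi_m}$, restricts attention to $\bra{\psi_{0\ldots 0}}$, and then for each weight-$1$ and weight-$2$ index reads off the sign by evaluating the quadratic/linear terms of the parent at the relevant binary point and comparing with the unique nonzero first-row entry of ${\cev{s}_j}$. You instead invoke the stabilizer identity $\ket{\psi}_e\bra{\psi}_e=\frac{1}{2^{n+e}}\sum_{s\in S_e}s$ and trace out the environment, which immediately gives $\rho$ as a sum over the subgroup $T=\{s\in S_e:s|_E=I^{\otimes e}\}$, explains in one stroke why the surviving terms commute and lie in $S^\perp$, and makes the per-column factorisation for general $e$ transparent. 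This route subsumes Corollary~\ref{maximal} as a byproduct, which the paper proves separately.

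Two places where your write-up needs tightening. First, the weight-one identification: you assert that the local part of row $\al$ of ${\cal A}^e$ equals ${\cal A}^T_\al$ ``by symmetry of ${\cal A}^e$'', but symmetry alone only tells you that row $\al$ equals column $\al$ of ${\cal A}^e$, whose top $n$ entries are column $\al$ of the \emph{symmetrised} top-left block, not of ${\cal A}$ itself. The correct bridge is the support lemma (Lemma~\ref{supp}): the surviving local part lies in $S^\perp$ and has $X/Y$ only at position $\al$, hence must coincide with ${\cal A}^T_\al$ as a Pauli word (no phase, since both are bare tensor products). Second, in the sign-change clause you apply $S_k=\mathrm{diag}(1,i)$, but $S_kX_kS_k^\dagger=Y_k$ changes the Pauli letter rather than the sign; the intended operation is the addition of $2x_k$ to the $\Z_4$-valued $p$, i.e.\ conjugation by $Z_k$, under which $X_k\mapsto -X_k$ and every ${\cev{s}_j}$ containing the factor ${\cal A}^T_k$ flips sign. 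With these two fixes your argument goes through and yields the same conclusions as the paper's entrywise computation.
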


\begin{ex} Let the directed triangle be defined by the stabilizer basis ${\cal{A}}_0=X\otimes Z\otimes I,\,{\cal{A}}_1=I\otimes X\otimes Z,\,{\cal{A}}_2=Z\otimes I\otimes X$. Then the basis of $S^\perp$ is given by reversing the arrows: ${\cal{A}}_0^T=X\otimes I\otimes Z,\,{\cal{A}}_1^T=Z\otimes X\otimes I,\,{\cal{A}}_2^T=I\otimes Z\otimes X$, so
$S^\perp=\{\cev{s}_{000}=I\otimes I\otimes I,\cev{s}_{100}=X\otimes I\otimes Z,\,\,\cev{s}_{010}=Z\otimes X\otimes I,\,\cev{s}_{110}=-i Y\otimes X\otimes Z,
\,\cev{s}_{001}=I\otimes Z\otimes X,\cev{s}_{101}=i X\otimes Z\otimes Y,\,\cev{s}_{011}=-i Z\otimes Y \otimes X,\,\cev{s}_{111}=-i Y \otimes Y \otimes Y\}$.

One of the two parents is formed by adding the column $(X\ Z\ Y)^T$ to $\cal{A}$, giving the parent \\$i^{2(x_0x_2+x_1x_2+x_1x_3+x_2x_3+x_2)+x_2} = i^{2(x_0x_2+x_1x_2+x_2+({\cal I}(x)+1)x_3)+x_2}$. Here $L_3=\{1,2\}$, and ${\cal I}(x) = x_1+x_2+1$, so $J= < 100, 011> = \{000,100,011,111\}$. We can re-interpret ${\cal I}(x)$ and $J$ as parity and generator matrices, $H$ and
$G$, respectively, where $H = \left ( \begin{array}{c} 011 \end{array} \right )$ and $G = \left ( \begin{array}{c} 100 \\ 011 \end{array} \right )$, and where $G$ generates the binary linear code with
codewords in set $J$.
By tracing over the environmental qubit, $x_3$, we get $$\rho=\frac{1}{8}\left(I\otimes I\otimes I+X\otimes I\otimes Z+Z\otimes Y\otimes X+Y\otimes Y\otimes Y\right)=\frac{1}{8}\left(\cev{s}_{000}+\cev{s}_{100}+i\cev{s}_{011}+i\cev{s}_{111}\right).$$
Note that $X\otimes I\otimes Z$ and $Z\otimes Y\otimes X$ commute, and that $(X\otimes I\otimes Z)\cdot(Z\otimes Y\otimes X)=Y\otimes Y\otimes Y$.
We can obtain another parent for the same child by doing a local complementation (LC) on vertex 3 (i.e. LC$_3$) on the graph described by the parent. This is an LC in the environment. If one describes the $\Z_4$-linear offsets of the parent by black vertices in the graph, then LC$_3$ swaps the vertex neighbours of vertex 3 from black $\leftrightarrow$ white. 

\vspace{2mm}

The other of the two parents is formed by adding the column $(X\ Y\ Z)^T$ to $\cal{A}$, giving the parent $i^{2(x_0x_2+x_1+x_1x_3+x_2x_3)+x_1} = i^{2(x_0x_2+x_1+({\cal I}(x)+1)x_3)+x_1}$. Once again $L_3=\{1,2\}$, ${\cal I}(x) = x_1+x_2+1$, and $J=\{000,100,011,111\}$. By tracing over the environmental qubit, $x_3$, we get $$\rho=\frac{1}{8}\left(I\otimes I\otimes I+X\otimes I\otimes Z-Z\otimes Y\otimes X-Y\otimes Y\otimes Y\right)=\frac{1}{8}\left(\cev{s}_{000}+\cev{s}_{100}-i\cev{s}_{011}-i\cev{s}_{111}\right).$$ $H$ and $G$ are unchanged.
Observe that the second parent is obtained from the first by swapping the positions of $Y$ and $Z$ in rows 1 and 2 of the extra column (column 3). In terms of the Boolean function representation of the parents, this translates to adding the Boolean quadratic term $x_1x_2+x_1+x_2$, and removing $x_2$ and adding $x_1$ $\Z_4$-linear terms. %This is equivalent, graphically, to doing a local complementation (LC) on vertex 3 (i.e. LC$_3$) on the graph described by the parent. This is an LC in the environment. If one describes the $\Z_4$-linear offsets of the parent by black vertices in the graph, then LC$_3$ swaps the vertex neighbours of vertex 3 from black $\leftrightarrow$ white. 
Observe that the coefficients of the Pauli basis terms $Z\otimes Y\otimes X$ and
$Y\otimes Y\otimes Y$ are multiplied by $-1$. This is because both terms are generated from rows 1 and 2 of ${\cal{A}}^T$ which are the rows where $Y$ and $Z$ are swapped in ${\cal{A}}^e$.
\vspace{2mm}

For each parent we can also consider how the addition of binary linear terms (in the lab) affects the resultant child density matrix. We don't currently consider the addition of binary linear terms in the environment (i.e. $x_3$ for this example). For instance, for the addition of column $(X\ Z\ Y)^T$ then
$i^{2(x_0x_2+x_1x_2+x_1x_3+x_2x_3 + x_0+x_2)+x_2}$, i.e. the addition of $x_0$, simply flips the signs of $X\otimes I\otimes Z$ and $Y\otimes Y\otimes Y$ as both terms have row $0$
of ${\cal{A}}^T$ as a factor. However addition of $x_1$ or $x_2$ has the same effect as swapping $Y$ and $Z$ in rows 1 and 2, so swaps between the two parents. Thus the addition of ${\bar{\cal I}}(x) = {\cal I}(x) + 1$ fixes the child density matrix. So we have the following maps for $a \in \{0,1\}$:

\begin{small}
$$ \begin{array}{l|l}
\m{child } = 8\rho & \m{parent} \\ \hline
I\otimes I\otimes I+X\otimes I\otimes Z+Z\otimes Y\otimes X+Y\otimes Y\otimes Y & 2(x_0x_2+x_1x_2+x_2+{\bar{\cal I}}(x)(x_3 + a))+x_2, \\
   & 2(x_0x_2 +x_1+ x_2 + {\bar{\cal I}}(x)(x_3 + a)) + x_1, \\
I\otimes I\otimes I-X\otimes I\otimes Z+Z\otimes Y\otimes X-Y\otimes Y\otimes Y & 2(x_0x_2+x_1x_2+x_0+x_2+{\bar{\cal I}}(x)(x_3 + a))+x_2, \\
   & 2(x_0x_2 + x_0 +x_1+ x_2 + {\bar{\cal I}}(x)(x_3 + a)) + x_1, \\
I\otimes I\otimes I+X\otimes I\otimes Z-Z\otimes Y\otimes X-Y\otimes Y\otimes Y & 2(x_0x_2+x_1 + {\bar{\cal I}}(x)(x_3 + a))+x_1, \\
   & 2(x_0x_2 +x_1x_2+x_1+x_2+ {\bar{\cal I}}(x)(x_3 + a)) + x_2, \\
I\otimes I\otimes I-X\otimes I\otimes Z-Z\otimes Y\otimes X+Y\otimes Y\otimes Y & 2(x_0x_2+x_0+x_1 + {\bar{\cal I}}(x)(x_3 + a))+x_1, \\
   & 2(x_0x_2 + x_1x_2+x_0 +x_1+ x_2+{\bar{\cal I}}(x)(x_3 + a)) + x_2.
\end{array} $$
\end{small}

%Note also that
%$$\rho=\frac{1}{8}\left(I\otimes I\otimes I+X\otimes I\otimes Z-Z\otimes Y\otimes X-Y\otimes Y\otimes Y\right)$$ is a density matrix corresponding to the parent $i^{2(x_0x_2+x_1x_2+x_1x_3+x_2x_3+x_1)+x_2}$, $$\rho=\frac{1}{8}\left(I\otimes I\otimes I-X\otimes I\otimes Z+Z\otimes Y\otimes X-Y\otimes Y\otimes Y\right)$$ is a density matrix corresponding to the parent $i^{2(x_0x_2+x_1x_2+x_1x_3+x_2x_3+x_0)+x_2}$, and $$\rho=\frac{1}{8}\left(I\otimes I\otimes I-X\otimes I\otimes Z-Z\otimes Y\otimes X+Y\otimes Y\otimes Y\right)$$ is a density matrix corresponding to the parent $i^{2(x_0x_2+x_1x_2+x_1x_3+x_2x_3+x_1+x_0)+x_2}$,while for instance $$\rho=\frac{1}{8}\left(I\otimes I\otimes I+X\otimes I\otimes Z-Z\otimes Y\otimes X+Y\otimes Y\otimes Y\right)$$ is not a density matrix.

Other children given by the extensions $(Z\ X\ Y)^T$ and $(Z\ Y\ X)^T$  are respectively
 $$\begin{array}{c}
          %\rho_0=I\otimes I\otimes I+X\otimes I\otimes Z+Z\otimes Y\otimes X+Y\otimes Y\otimes Y\\
          \rho_1=\frac{1}{8}\left(I\otimes I\otimes I+aZ\otimes X\otimes I+bX\otimes Z\otimes Y+cY\otimes Y\otimes Y\right)=\frac{1}{8}\left(\cev{s}_{000}+a\cev{s}_{010}-bi\cev{s}_{101}+ci\cev{s}_{111}\right)\\[3pt]
          \rho_2=\frac{1}{8}\left(I\otimes I\otimes I+aY\otimes X\otimes Z+bI\otimes Z\otimes X+cY\otimes Y\otimes Y\right)=\frac{1}{8}\left(\cev{s}_{000}+ai\cev{s}_{110}+b\cev{s}_{001}+ci\cev{s}_{111}\right) \\
          \hspace{20mm} \m{ with condition } c = ab, \hspace{5mm} a,b \in \{1,-1\} \hspace{5mm} \m{ in both cases}. \\
         \end{array}$$
\end{ex}

\begin{ex} Let a mixed 6-clique graph be defined by the stabilizer basis
${\cal{A}}= \left ( \begin{array}{cccccc}
X &  Z &  Z &  Z &  Z &  Z \\
I &  X &  Z &  Z &  Z &  Z \\
I &  I &  X &  Z &  Z &  Z \\
I &  I &  I &  X &  Z &  Z \\
I &  I &  I &  I &  X &  Z \\
I &  I &  I &  I &  I &  X.
\end{array} \right )$.
Then the basis of $S^\perp$ is obtained from ${\cal{A}}^T$. The parents are obtained by adding $e = 3$ columns to ${\cal{A}}$, with subsequent addition of $e = 3$ rows. For instance, one parent is given by
$$ {\cal{A}}^e= 
\begin{small} \left ( \begin{array}{ccccccccc}
X &  Z &  Z &  Z &  Z &  Z &  X &  I &  X \\
I &  Y &  Z &  Z &  Z &  Z &  Y &  I &  X \\
I &  I &  X &  Z &  Z &  Z &  Z &  I &  X \\
I &  I &  I &  X &  Z &  Z &  I &  X &  Y \\
I &  I &  I &  I &  X &  Z &  I &  Y &  Y \\
I &  I &  I &  I &  I &  X &  I &  Z &  Y \\
I &  Z &  Z &  I &  I &  I &  X &  I &  I \\
I &  I &  I &  I &  Z &  Z &  I &  X &  I \\
I &  I &  I &  Z &  Z &  Z &  I &  I &  X
\end{array} \right ) \end{small} \equiv
\begin{small} \left ( \begin{array}{ccccccccc}
X &  I &  I &  I &  I &  I &  I &  I &  I \\
I &  Y &  I &  I &  I &  I &  Z &  I &  I \\
I &  I &  X &  I &  I &  I &  Z &  I &  I \\
I &  I &  I &  Y &  Z &  Z &  I &  I &  Z \\
I &  I &  I &  Z &  X &  Z &  I &  Z &  Z \\
I &  I &  I &  Z &  Z &  Y &  I &  Z &  Z \\
I &  Z &  Z &  I &  I &  I &  X &  I &  I \\
I &  I &  I &  I &  Z &  Z &  I &  X &  I \\
I &  I &  I &  Z &  Z &  Z &  I &  I &  X.
\end{array} \right ) \end{small}, $$
which represents $i^{2(x_1x_6 + x_2x_6 + x_3x_4 + x_3x_5 + x_3x_8 + x_4x_5 + x_4x_7 + x_4x_8 + x_5x_7 + x_5x_8+x_1+x_3+x_4+x_5) + x_1 + x_3 + x_5}$.
Here $L_3=\{1,2\}$, $L_4=\{4,5\}$, and $L_5=\{3,4,5\}$,
and ${\cal I}(x) = (x_1+x_2+1)(x_4 + x_5 + 1)(x_3 + x_4 + x_5 + 1)$, so $J= < 100000, 011000, 000011 > = \{000000, 100000, 011000, 111000, 000011, 100011, 011011, 111011 \}$.

We can re-interpret ${\cal I}(x)$ and $J$ as parity and generator matrices, $H$ and
$G$, respectively, where $H = \left ( \begin{array}{c} 011000 \\ 000011 \\ 000111 \end{array} \right )$ and
$G = \left ( \begin{array}{c} 100000 \\ 011000 \\ 000011 \end{array} \right )$, and where $G$ generates the binary linear code with
codewords in set $J$.
By tracing over the environmental qubits, $x_6, x_7, x_8$, we get
$$ \begin{array}{ll} \rho  = & (I\otimes I\otimes I \otimes I \otimes I \otimes I+X\otimes Z\otimes Z \otimes Z \otimes Z \otimes Z - I\otimes X\otimes Y \otimes I \otimes I \otimes I \\
  & - X\otimes Y\otimes X \otimes Z \otimes Z \otimes Z - I\otimes I\otimes I \otimes I \otimes X \otimes Y - X\otimes Z\otimes Z \otimes Z \otimes Y \otimes X \\
  & + I\otimes X\otimes Y \otimes I \otimes X \otimes Y + X\otimes Y\otimes X \otimes Z \otimes Y \otimes X). \end{array} $$

\end{ex}
%
%%I think a  commutative subgroup of Pauli can always be embedded in at least a stabilizer for a mixed graph, which would prove the first part of the conjecture.....

\begin{pf}(Theorem \ref{pauli sum}): The qubits present in each $L_m$ will be the ones connected with the environmental qubit $n+m$, implying that the difference between the measurement in $x_{n+m}=0$ and $x_{n+m}=1$ will be the Boolean function $\sum_{k\in L_m} x_k$. The support of $\rho$ will be equal to the binary vectors such that these Boolean functions are 0 for all $m$. %positions $w$ (for convenience, we take $w$ in natural numbers as opposed to the binary expansion), where $\rho_{0,w}=0$, and $\rho_{0,w'}=1$ for any other $w'$.
By lemma \ref{revmix}, $\rho$ can be expressed as the sum of some of the matrices on $S^\perp$; by lemma \ref{supp}, all the matrices in $S^\perp$ have nonintersecting support: $\rho$ will be equal to the sum of the matrices in $S^\perp$ whose support intersects the support of $\rho$; therefore, it is sufficient to find the nonzero entries of the first row of the matrix. It is easy to check that any matrix in $S^\perp$ has only one nonzero entry on the first row. In other words, $\rho$ will be equal to the sum of the matrices %that have  $\rho_{0,w'}=1$, that is, the sum of the matrices
whose  support intersects $\{(0,k), k\in K\}$, where $K=\{\sum_{m=0}^{n-1}2^mb_m, \forall b=b_0b_1\ldots  b_{n-1}\in J\}$, and  $J$ %(taken in natural numbers) 
is the indicator of the Boolean function $\displaystyle\prod_{m=0}^{e-1}(\sum_{k\in L_m} x_k+1)$. To prove that these matrices are the ${\cev{s}_j}$, where $j\in J$, we need the following lemma:

\begin{lem} Let ${\cev{s}_j}\in S^\perp$. Then, $({\cev{s}_j})_{0,t}=1\Leftrightarrow k=\sum_{m=0}^{n-1}2^mj_m$, for $j=j_0j_1\ldots j_{n-1}$. %where $u\in J$, 

\end{lem}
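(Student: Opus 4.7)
The plan is to reduce the claim to a position-by-position analysis of the tensor-factor structure of $\cev{s}_j = {\cal A}^T_K$, with $K = \{i : j_i = 1\}$, and then to identify the unique nonzero column of its first row by examining each single-qubit Pauli factor.

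First I would exploit the defining structure of ${\cal A}$: diagonal entries are $X$ or $Y$ and off-diagonal entries are $I$ or $Z$. Hence each row ${\cal A}^T_k$, viewed as an $n$-fold tensor of Paulis, has $X$ or $Y$ at tensor position $k$ and $I$ or $Z$ at every other tensor position. In the product $\prod_{k \in K} {\cal A}^T_k$, tensor position $p$ receives exactly one $X$ or $Y$ factor, coming from row $p$, whenever $p \in K$, and only $I$ or $Z$ factors otherwise. Using the single-qubit identities $XZ, ZX \in \{\pm i Y\}$, $ZZ = I$, and $II = I$, each position-wise product collapses to a single element of $\{I, X, Y, Z\}$ multiplied by a scalar in $\{\pm 1, \pm i\}$, and crucially the surviving single-qubit factor at position $p$ is $X$ or $Y$ exactly when $p \in K$, and $I$ or $Z$ otherwise.

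Next I would use the observation that the first row of $I$ and of $Z$ is supported only at column $0$, while the first row of $X$ and of $Y$ is supported only at column $1$. Since the first row of a tensor product is the Kronecker product of the factors' first rows, $\cev{s}_j$ has a unique nonzero column $t$ in its first row, and the binary digit $t_p$ equals $0$ when the surviving factor at position $p$ is $I$ or $Z$ and equals $1$ when it is $X$ or $Y$. Combining with the previous step, $t_p = 1$ iff $p \in K$ iff $j_p = 1$, so under the paper's bit-order convention $t = \sum_{m=0}^{n-1} 2^m j_m$, which is the asserted correspondence. Because every nonzero entry of $I, X, Y, Z$ has modulus $1$ and the accumulated global phase lies in $\{\pm 1, \pm i\}$, the surviving first-row entry of $\cev{s}_j$ has modulus $1$, giving the value $1$ in the statement once the scalar phase is absorbed into the coefficients $b_j$ of Theorem \ref{pauli sum}.

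The step I expect to be the main obstacle is not the identification of the column $t$, which is immediate from the support analysis and is also consistent with Lemma \ref{supp}, but the phase bookkeeping: tracking the $\pm 1, \pm i$ factors that accumulate from the $XZ = iY$-type reductions when collapsing each position-wise product, together with the $\pm 1$ ordering ambiguity of $\prod_{k \in K}$ whenever two rows of ${\cal A}^T$ anticommute. Since these phases never affect which column is nonzero, they do not change the column-identification content of the lemma, but they are the reason the \emph{``$=1$''} in the statement should be read modulo the $\{\pm 1, \pm i\}$ factor that is absorbed into $b_j$ in the preceding theorem.
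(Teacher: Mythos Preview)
Your proposal is correct and follows essentially the same approach as the paper's own proof: both reduce to the observation that $X$ and $Y$ share first-row support at column $1$ while $I$ and $Z$ share first-row support at column $0$, and then read off the unique nonzero first-row column from the tensor structure. The paper phrases this as a recursive block decomposition of $M\otimes I$ versus $M\otimes X$, whereas you take the Kronecker product of the first rows directly; these are the same argument in slightly different clothing. If anything, you are a bit more explicit than the paper in justifying why the $X/Y$ positions of $\cev{s}_j$ coincide with $K=\{i:j_i=1\}$, and your remark about the accumulated $\{\pm 1,\pm i\}$ phase (so that ``$=1$'' should be read as ``nonzero of modulus $1$'', with the phase absorbed into $b_j$) is a clarification the paper leaves implicit.
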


\begin{pf} Let ${\cev{s}_j}=\bigotimes_{R_X}X\bigotimes_{R_Y}Y\bigotimes_{R_Z}Z\bigotimes_{R_I}I$.  
Let $M$ be a $t\times t$ matrix. Then, $$M\otimes I=\left(\begin{array}{cc}
M&0\\
0&M\end{array}\right)\mbox{ and }M\otimes X=\left(\begin{array}{cc}
0&M\\
M&0\end{array}\right)$$
Note that $X$ has the same support as  $Y$, and $Z$ has the same support as $I$, so we only have to consider this two cases. This implies that support$(M\otimes I)=$support$(M)$, while $(0,k+2^{t-1})\in$support$(M\otimes X)\Leftrightarrow (0,k)\in$support$(M)$. This implies that $(0,k)\in$support$({\cev{s}_j})\Leftrightarrow k=\sum_{m\in R_X\cup R_Y}^{n-1}2^mm=\sum_{m=0}^{n-1}2^mj_m$, for $j=j_0j_1\ldots j_{n-1}$.

%Since $X$ has the same support as  $Y$, and $Z$ has the same support as $I$, ${\cev{s}_j}$ has the same support as ${\cev{s}_j}'=\bigotimes_{R_X\cup R_Y}X\bigotimes_{R_I\cup R_Z}I$. Note that supp$(j)=R_X\cup R_Y$, since any position in ${\cev{s}_j}\in S^\perp, j\neq0\ldots0$, is given by products of one $X$ and several $Z,I$, and the only matrix with just $Z$ and $I$ in $S^\perp$ is the identity, as the only possibility for obtaning $Z$ or $I$ is that no $X$ is present in the product. We shall prove the lemma  by induction on the size, $\alpha$, of $R_X\cup R_Y$:
%
%$\alpha=0$: Then ${\cev{s}_j}'={\cev{s}_j}=I_n$.  But that implies that  $j=0\ldots0$, and $({\cev{s}_{0\ldots0}})_{0,0}=1$.%that has support in $0,0$.
%
%Suppose it is true for $\alpha$, let us prove it is true for $\alpha+1$:
%
%$$\displaystyle{\cev{s}_j}'=\bigotimes_{R_X\cup R_Y}X\bigotimes_{R_I\cup R_Z}I=\left(\bigotimes_{(R_X\cup R_Y)\setminus\{v\}}X\bigotimes_{R_I\cup R_Z}I\right)\cdot X_v$$
%%Note ${\cev{s}_j}'$ corresponds to a ${\cev{s}_j}$ product of 
%$X_v$ has support in $(0,w)$, where $w=2^v$ (that is, where $x_v=1$ and $x_k=0\ \forall k\neq v$), and $\bigotimes_{(R_X\cup R_Y)\setminus\{v\}}X\bigotimes_{R_I\cup R_Z}I$ has support in $(0,w)$, where $w=\sum_{k\in (R_X\cup R_Y)\setminus\{v\}}2^k$.% (that is, where $x_k=1\ \forall k\in  (R_X\cup R_Y)\setminus\{v\}$, and $x_k=0$ elsewhere. 
%The product will have support in $(0,w)$, where $R_X\cup R_Y$, where $w=\sum_{k\in R_X\cup R_Y}2^k$.% $x_k=1\ \forall k\in  R_X\cup R_Y$, and $x_k=0$ elsewhere.
\end{pf}

To complete the proof of theorem \ref{pauli sum}, there remains to prove that all ${\cev{s}_j},\,j\in J$ commute pairwise, and if ${\cev{s}_j},{\cev{s}_k}$ are present in the sum, so is ${\cev{s}_j}{\cev{s}_k}$:%$\pm,\pm i \{\sigma_j: j\in J\}$  is a commutative subgroup of the Pauli group, and that if we change the sign in $\sigma_j$ from + to -, all the $\sigma_k$ change sign if supp$(j)\cap\mbox{supp}(k)\neq\emptyset$: First, we shall prove that $\pm,\pm i \{\sigma_j: j\in J\}$  is a commutative subgroup:

First, we shall prove that if ${\cev{s}_j},{\cev{s}_k}$ are present in the sum, so is ${\cev{s}_j}{\cev{s}_k}$: this follows from the indicator $J$ being a linear space, since ${\cev{s}_j}{\cev{s}_k}$ corresponds to the sum of the Boolean vectors.

%$J$,the indicator of $\displaystyle\prod_{m=0}^{e-1}(\sum_{k\in L_m} x_k+1)$, is equal to the intersection of the indicators of each of the terms in the product: $J=\cap_{m=0}^{e-1} J_m$, where $J_m$ is the indicator of $\sum_{k\in L_m} x_k+1$, since the product is 1 iff all the terms are equal to 1.  Now, each $J_m=\{j: \mbox{ supp}(j)\cap \mbox{supp}(L_m) \mbox{ is even}\}$. %To see that this implies that the sign of $\sigma_u\sigma_v$ is positive, note that by lemma \ref{sign}, changing the sign would imply adding a Boolean linear term, which couldn't be consistent with the sign of $\sigma_u$ or $\sigma_v$.
%
%Let ${\cev{s}_j},\,{\cev{s}_k}$ such that $j,k\in J$. Then, ${\cev{s}_j}{\cev{s}_k}=\pm{\cev{s}_{j+k}}$, and $j+k\in J$, since $j+k\in J_m\ \forall m$, since if $\mbox{ supp}(j)\cap \mbox{supp}(L_m) \mbox{ is even}$ and $\mbox{ supp}(k)\cap \mbox{supp}(L_m) \mbox{ is even}$,  $\mbox{ supp}(j+k)\cap \mbox{supp}(L_m) \mbox{ is even}\ \forall m$.
%Let ${\cev{s}_j}$ such that $j\in J$. Then, ${\cev{s}_j}^{-1}={\cev{s}_j}$, so the inverse is also in the set. These two together prove that $\pm,\pm i \{{\cev{s}_j}: u\in J\}$  is a  subgroup of the Pauli group\begin{footnote}{We include $\pm i$ so that this is a subgroup of the Pauli group, and not a subgroup of the group of tensor products of $X,Z,(XZ),I$}\end{footnote}.

Now we shall see that  it is commutative:
%$\sigma_j\sigma_k=\sigma_{j+k}=\sigma_{k+j}=\sigma_k\sigma_j$.
Let $j,k\in J$: %We can write $u=\sum a_l e_l$, where $a_l\in\{0,1\}$ and $e_l$ denotes the vector with all entries equal to 0 except for entry $l$, which is equal to 1. Then, an even number of the entries on each extension column are either $Y$ or $Z$
We can write ${\cev{s}_j}=\prod_{a\in A} {\cal{A}}_a^T$, and ${\cev{s}_k}=\prod_{b\in B} {\cal{A}}_b^T$, where ${\cal{A}}_a^T,{\cal{A}}_b^T$ are in the basis of $S^\perp$. Then, the corresponding ${\cal{A}}_a\in S, a\in A $, have an even number of $Y$ and $Z$ in each extension column, and similarly for $B$. Therefore, both ${\cev{s}_j}$ and ${\cev{s}_k}$ have either $X$ or $I$ in all extension columns, and therefore they commute.

\end{pf}

\begin{pf}[Lemma \ref{sign}]
Since all the parent Boolean functions have no constant terms and no linear terms involving the environmental qubits, the first entry in a truth table for each measurement $\ket{\psi_m}$ will always be +1. The first row of the density matrix for each measurement $\ket{\psi_m}$ will therefore be equal to $\frac{1}{2^{n/2}}\bra{\psi_m}$. Also, note that the final $\rho$ will be nonzero only where the entries for all $\ket{\psi_m}$ are equal, and will then be equal to any of them; it is therefore enough to look at $\bra{\psi_{0\ldots0}}$.
%\begin{itemize}
%\vspace{1cm}

{\em Case $e=1$:}
\begin{itemize}
\item Size 1: By the proof of theorem \ref{pauli sum}, any term, $\pm\frac{1}{2^n}{\cal{A}}^T_j,\,0\leq j\leq n-1$ is such that $({\cal{A}}^e)_{j,n}=X$ or $I$. %As we are extending to parent graphs without loops, that is to say, such that the eigenvalue is +1 for all the stabilizing group
Since row $j$ of the stabilizer of the parent graph state, ${\cal{A}}^e$, is equal to ${\cal{A}}\otimes$(extension entries), there is no linear term $x_j$ in the parent graph, so that the entry $0\ldots 1\ldots 0$, where the 1 occurs in position $u$, of $\ket{\psi_{0\ldots0}}$, and therefore of $\bra{\psi_{0\ldots0}}$ of the parent is $+\frac{1}{2^{n/2}}$. As the first row of ${\cal{A}}_j^T$ has $+\frac{1}{2^{n/2}}$ in the same position, the coefficient of ${\cev{s}_j}$ is $+\frac{1}{2^{n}}$.
\item Size 2:  Suppose that the matrix ${\cal{A}}^T_{\{j,k\}}\,0\leq j,k\leq n-1$, is present in the sum. Then:\begin{itemize}
\item  if ${\cal{A}}_j^T$ and ${\cal{A}}_k^T$  anti-commute, the term in the sum will be $\pm \frac{i}{2^{n}}{\cal{A}}^T_{\{j,k\}}\,0\leq j,k\leq n-1$,  since otherwise $\pm \frac{i}{2^{n}}{\cal{A}}_j^T{\cal{A}}_k^T\,0\leq j,k\leq n-1$ would not be Hermitian. Furthermore, neither ${\cal{A}}_j^T$ or ${\cal{A}}_j^T$ are present in the sum, because by theorem \ref{pauli sum} this would imply that both would be present and would therefore be commuting. This implies that  $\exists $ $a,b\in\{j,k\}$ such that $({\cal{A}}^e)_{a,n}=Y$ and  $({\cal{A}}^e)_{b,n}=Z$. %To investigate the sign  of $\pm \frac{i}{2^{n}}{\cal{A}}_{\{j,k\}}^T$, we only have to look at the pair  ${\cal{A}}_j,{\cal{A}}_k\in S$, since t
The first row of $\pm \frac{i}{2^{n}}{\cal{A}}_j^T{\cal{A}}_k^T$ has its only nonzero entry where $x_j=x_k=1, x_u=0\ \forall u\neq j,k$. Therefore, the entry in $\ket{\psi_{0}}$ will be given by the presence or absence of $x_jx_k,x_j,x_k$, in the ANF of the parent graph. Since ${\cal{A}}_j^T$ and ${\cal{A}}_k^T$  anti-commute, the restriction of ${\cal{A}}_j,{\cal{A}}_k$ (that  also anti-commute) to the pair $j,k$ will give $X_\alpha Z_\beta$ and $I_\alpha X_\beta$ for $\alpha,\beta\in\{j,k\}$. This implies that the restriction of ${\cal{A}}_\alpha^T$ and ${\cal{A}}_\beta^T$ are , respectively, $X_\alpha I_\beta$ and $Z_\alpha X_\beta$, so the restriction of $i{\cal{A}}_\alpha^T{\cal{A}}_\beta^T$ is $Y_\alpha X_\beta$. Note that we only need to look at the restriction, since any further Kronecker product will be by $X,\,Z$ or $I$, and these will not change the first non-zero entry.
\begin{itemize}
\item If the extension column gives $({\cal{A}}^e)_{\alpha,n}=Z$ and $({\cal{A}}^e)_{\beta,n}=Y$, we get the term $2(x_\alpha x_\beta+x_\beta)+x_\beta$ in the parent graph, so $\ket{\psi_{0}}$ has entry $i$ for $x_j=x_k=1, x_u=0\ \forall u\neq j,k$, which implies that $\bra{\psi_{0}}$ has entry $-i$, same as the same entry in the first row of the matrix given by the order of multiplication ${\cal{A}}^T_\alpha{\cal{A}}^T_\beta$. %, by which we mean that we multiply on the left the matrix $\sigma_a$ that has $({\cal{A}}^e)_{a,n}=Z$, and on the right the matrix $\sigma_b$ that has $({\cal{A}}^e)_{b,n}=Y$.
\item If the extension column gives $({\cal{A}}^e)_{\alpha,n}=Y$ and $({\cal{A}}^e)_{\beta,n}=Z$, we get the term $2x_\alpha+x_\alpha$ in the parent graph, so $\ket{\psi_{0}}$ has entry $-i$ (so $\bra{\psi_{0}}$ has entry $i$) for $x_j=x_k=1, x_u=0\ \forall u\neq j,k$,% which implies that $\bra{\psi_{0}}$ has entry $i$, for $x_j=x_k=1, x_u=0\ \forall u\neq j,k$, 
same as the same entry in the first row of the matrix given by the order of multiplication ${\cal{A}}^T_\beta{\cal{A}}^T_\alpha$.
\end{itemize}

\item Suppose ${\cal{A}}_j^T$ and ${\cal{A}}_k^T$ commute. Then, the term $\pm \frac{1}{2^{n}}{\cal{A}}^T_j{\cal{A}}^T_k\,0\leq j,k\leq n-1$ is present in the sum, the matrices are commuting since ${\cal{A}}^T_j{\cal{A}}^T_k$ is Hermitian. By inspection, on the pair  ${\cal{A}}_j,{\cal{A}}_k\in S$, $X_\alpha Z_\beta$ and $Z_\alpha X_\beta$ give entry -1 regardless of extension (note that they have the commuting entry in the extension column), same as ${\cal{A}}^T_j{\cal{A}}^T_k$. As for $X_\alpha I_\beta$ and $I_\alpha X_\beta$ give entry +1 regardless of extension, same as ${\cal{A}}^T_j{\cal{A}}^T_k$.
\item  By the proof of theorem \ref{pauli sum}, ${\cal{A}}_K^T, \,K\subseteq \{0,\ldots,n-1\}$, is present in the sum iff the corresponding stabilizer, ${\cal{A}}_K^T$,% \,A\subseteq \{0,\ldots,n-1\}$, 
has either $X$ or $I$ in the extension column. Therefore, there exists a (not necessarely unique) decomposition in size 1 and size 2 terms that have either $X$ or $I$ in the extension column (since $YZ=iX$), and, as the entry $x_j=1\ \forall j\in A, x_j=0 \ \forall j\notin A$ will depend on the sum of the terms for the size 1 and size 2 cases, as the Boolean function has degree at most 2, this entry will depend on the entry of the size 1 and size 2 cases, and therefore the coefficient will be the multiplication of the size 1 and size 2 cases (for any given decomposition).
\end{itemize}
\end{itemize}
{\em General $e$:} Since the case $e=1$ was independent of the actual stabilizers (it only depends on the extension), each new column will modify the Boolean expression accordingly. Therefore, the coefficient of any term is given by the multiplication of the coefficient resultant of each extension column.

{\em Change of sign:} Let us first assume  that $j$ has weight 1, with support in $k$. Note that each ${\cal{A}}_j^T$ has support in $(0,2^k)$. In terms of $\bra{\psi_{0\ldots0}}$, this means that changing the sign of ${\cal{A}}_j^T$ changes the sign in the $j$th element in $\ket{\psi_{0\ldots0}}$ (and therefore in $\bra{\psi_{0\ldots0}}$). This is equivalent to adding  a linear term $x_k$ as long as we change the sign all elements in$\ket{\psi_{0\ldots0}}$ (and therefore in $\bra{\psi_{0\ldots0}}$) with $x_k=1$. %Note that adding a Boolean linear term $x_k$ to a parent that does not have $k$ as a red node gives a parent with a different density matrix, while adding a Boolean linear term $x_k$ to a parent that does have $k$ as a red node gives a parent with the same  density matrix. 
Suppose now that $j$ has weight $>1$. If we  change the sign in $j$, we have to change also the signs of an odd number of its decomposing terms. Consider therefore the smallest terms present in the group. If the sign changed is of weight 1, see above. If the weight is $t>1$, with support in $k_1,\ldots,k_t$, then any of the ${\cal{A}}_{k_i}$ are not elements in the subgroup. Adding any linear term $x_{k_i}$ will give the desired matrix, since it is zero in the places were it might differ. %Therefore, it is also related by linear terms.
%If we don't change all the elements, we are not adding linear but higher degree Boolean terms, and the result matrix will not %be

%This concludes the proof of theorem \ref{pauli sum}.
%\end{itemize}
\end{pf}}

\begin{cor} \label{maximal}The commutative subgroup corresponding to a  child is maximal.%, where $e$ is the minimal number of EPR we have to add.
\end{cor}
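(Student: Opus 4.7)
\begin{pf}[Plan for Corollary \ref{maximal}]
The plan is to compute the size of the commutative subgroup associated to the child and match it against the known upper bound from Theorem \ref{MaxCommSub}. By Theorem \ref{pauli sum}, the commutative subgroup associated to a child is $P = \{\cev{s}_j : j \in J\}$, where $J = \{x \in \F_2^n : {\cal I}(x) = 1\}$ and ${\cal I}(x) = \prod_{m=0}^{e-1}(1 + \sum_{k \in L_m} x_k)$. So $j \in J$ iff $j$ satisfies the $e$ binary linear equations $\sum_{k \in L_m} j_k = 0$ for $m = 0, 1, \ldots, e-1$. Hence $J$ is a linear subspace of $\F_2^n$ of dimension at least $n-e$, and therefore $|P| \geq 2^{n-e}$.

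On the other hand, $P$ is commutative (by Theorem \ref{pauli sum}) and sits inside $S^\perp$. Theorem \ref{MaxCommSub} tells us that every commutative subgroup of $S^\perp$ has size at most $2^{n-e}$. Combining the two inequalities gives $|P| = 2^{n-e}$, so $P$ attains the maximum possible size of any commutative subgroup of $S^\perp$, and is therefore maximal (it cannot be properly contained in a larger commutative subgroup, since no such subgroup exists).

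The one point that requires care is the observation $|J| \geq 2^{n-e}$: this is automatic because $J$ is cut out by $e$ linear forms on $\F_2^n$, so its codimension is at most $e$. In fact the argument also yields as a by-product that the $e$ indicator vectors $v_{L_0}, \ldots, v_{L_{e-1}}$ must be linearly independent over $\F_2$ (else $|J| > 2^{n-e}$, contradicting Theorem \ref{MaxCommSub}). Thus the only real input is the cardinality bound from Theorem \ref{MaxCommSub}, which does the essential work; no separate analysis of the extension columns is needed. I expect the plan to go through directly with no substantial obstacle.
\end{pf}
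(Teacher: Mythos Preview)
Your argument is correct and very close to the paper's, with one tactical difference worth noting. Both proofs need $|J| = 2^{n-e}$ and then invoke Theorem~\ref{MaxCommSub} to conclude maximality. The paper establishes $|J| = 2^{n-e}$ directly by arguing that the $e$ linear constraints coming from the extension columns are nontrivial and independent: a trivial constraint would mean a column containing only $X$ and $I$ (hence superfluous), and a dependent constraint would mean a redundant column, contradicting the minimality of $e$. You instead observe only the trivial lower bound $|J| \ge 2^{n-e}$ (at most $e$ linear conditions) and squeeze against the upper bound $|P| \le 2^{n-e}$ from Theorem~\ref{MaxCommSub}, obtaining the independence of the constraint vectors as a by-product rather than as an input. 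Your route is slightly slicker in that it avoids any analysis of the extension columns, at the cost of leaning a bit harder on Theorem~\ref{MaxCommSub}; the paper's route is more self-contained about why the $e$ constraints genuinely cut the dimension by exactly $e$.
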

\begin{pf}
Each column added gives a affine linear Boolean function $f_i$, which gives a constrain to the indicator $J$ of $f=\prod f_i$. Each constraint, $f_i$, is nontrivial (that is, not a constant), because if this were the case, then the column would only have $X$ and $I$, and would therefore be superfluous, yielding a contradiction. The $f_i$ are all independent, otherwise we get redundant columns, yielding a contradiction with $e$ being minimal. Furthermore, any new independent linear constraint reduces by half the size of the indicator, wich means that this size is equal to $2^{n-e}$, so it is a maximal commutative subgroup.
\end{pf}

NB: Note that allowing superflous constraints $f_i$ will give  {\em  commutative subgroups that are in general not maximal}. In this way, we could also extend graph states in a natural way: the density matrix for the graph state is given by the sum of all the elements of the stabilizer (since it is self-dual), and of course any consistent sign changes that give linear terms (eigenvalue -1). We can however define more density matrices that are stabilized by the stabilizer of the graph state, by allowing also smaller commutative subgroups. For instance, a density matrix associated to the undirected line from 0 to 1 could be then $\rho=a_0\left(I\otimes I\pm X\otimes Z\pm Z\otimes X\pm Y\otimes Y\right)+a_1\left(I\otimes I\pm X\otimes Z\right)+a_2\left(I\otimes I\pm Z\otimes X\right)+a_3\left(I\otimes I\pm Y\otimes Y\right)+a_4 I\otimes I$, where $\sum a_i=1$. Note that in the first term, the signs have to be consistent, changing for instance $X\otimes Z$ to $-X\otimes Z$ forces the change $Y\otimes Y$ to $-Y\otimes Y$.

\section{Open problem: the weighted sum of a maximal commutative subgroup of $S^\perp$ is a density matrix}

\begin{conj}
Any maximal commutative subgroup corresponds to a child of a pure parent graph, so its weighted sum (with appropiate coefficients) is a density matrix.
\end{conj}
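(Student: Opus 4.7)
The plan is to exhibit, for each maximal commutative subgroup $P$ of $S^\perp$, an explicit pure graph parent whose child (after tracing out the environmental qubits) has $P$ as the support of its expansion in the Pauli basis. By Theorem \ref{MaxCommSub} together with Lemma \ref{iso}, $P$ is encoded by an $(n-e)$-dimensional subspace $J' \subseteq \F_2^n$ whose generator $\hat B$ satisfies $\hat B \Gamma \hat B^T = 0$. The task is then to construct extension columns for ${\cal A}$ such that the child's support set, as given by Theorem \ref{pauli sum}, is precisely $J'$.

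First I would pick a basis $\ell_0, \ldots, \ell_{e-1}$ of the standard orthogonal complement $(J')^\perp \subset \F_2^n$ and set $L_m := \{k : (\ell_m)_k = 1\}$ to play the role of the sets $L_m$ appearing in Theorem \ref{pauli sum}. With this choice the indicator ${\cal I}(x) = \prod_m(1 + \sum_{k\in L_m} x_k)$ cuts out exactly $J = \bigcap_m \ker \ell_m$; since the $\ell_m$ are linearly independent this gives $\dim J = n-e = \dim J'$, forcing $J = J'$. Each extension column $n+m$ of ${\cal A}^e$ now has prescribed entry in $\{Y,Z\}$ at rows $k \in L_m$ and in $\{X,I\}$ at rows $k \notin L_m$; the remaining $Y/Z$ and $X/I$ choices must still be fixed so that all rows of ${\cal A}^e$ pairwise commute.

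Writing $Z_b \in \F_2^{n \times e}$ for the matrix whose $m$-th column is the indicator of $L_m$, and $X_b \in \F_2^{n \times e}$ for the (free) matrix recording the $x$-components of the extension entries in the standard symplectic encoding $I=(0,0),\,X=(1,0),\,Z=(0,1),\,Y=(1,1)$, pairwise commutation of the first $n$ rows of ${\cal A}^e$ reduces to the $\F_2$ matrix equation
$$ \Gamma \;=\; X_b Z_b^T + Z_b X_b^T, $$
to be solved in $X_b$. This is where the main obstacle lies. Restricting either side to any $v \in \ker(Z_b^T) = J'$ yields the necessary condition $\Gamma v \in \mathrm{col}(Z_b) = (J')^\perp$, which is exactly $\hat B \Gamma \hat B^T = 0$. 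I would establish sufficiency by a block-form calculation: after a change of basis placing $Z_b$ in the block form $\bigl(\begin{smallmatrix} I_e \\ 0 \end{smallmatrix}\bigr)$, $\Gamma$ decomposes into blocks $\Gamma_{11}, \Gamma_{12}, \Gamma_{22}$, with $\Gamma_{22}=0$ enforced by the hypothesis; the equation then splits into $X_1 + X_1^T = \Gamma_{11}$, solvable because $\Gamma_{11}$ is symmetric with zero diagonal, and $X_2 = \Gamma_{12}^T$, which is a direct read-off.

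Once $X_b$ is obtained, the first $n$ rows of ${\cal A}^e$ form an isotropic subspace of dimension $n$ in the symplectic space $\F_2^{2(n+e)}$; I would then choose the final $e$ rows as any extension to a Lagrangian subspace of dimension $n+e$, which always exists by a standard Witt-type argument for symplectic forms over $\F_2$. Multiplicative row operations then bring ${\cal A}^e$ into the symmetric graph form of a pure $(n+e)$-qubit parent. Tracing out the $e$ environmental qubits yields, by Theorem \ref{pauli sum} and Lemma \ref{sign}, the density matrix $\tfrac{1}{2^n}\sum_{j\in J'} b_j \cev{s}_j$ with support precisely $P$; positivity and unit trace are automatic, since it is the partial trace of a pure state. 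This proves the conjecture.
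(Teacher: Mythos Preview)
Your approach is substantially different from---and more complete than---the paper's. The paper explicitly labels this statement a conjecture and gives an \emph{incomplete} proof: it handles $e=1$, sketches an ad-hoc strategy of assigning $X$ or $I$ to extension-column entries according to commutativity patterns, proves that the indicator always has $e$ factors, and then states ``In general, we still have to prove that this combination is always possible.'' Your reduction of the existence of compatible extension columns to the $\F_2$ matrix equation $\Gamma = X_b Z_b^T + Z_b X_b^T$, together with the block-form solution under the hypothesis $\hat B\,\Gamma\,\hat B^T = 0$, supplies exactly the missing step and works uniformly in $e$; this is the key idea the paper lacks.

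One step needs tightening. You extend the first $n$ isotropic rows to a Lagrangian by ``any'' Witt-type completion and then assert that row operations alone bring ${\cal A}^e$ to graph form. That requires the $(n+e)\times(n+e)$ $X$-part to have full rank, which an arbitrary completion need not have (for $e=1$ one can complete with a purely $Z$-type row $(0,\,X_b\mid 0,\,1)$). The fix is to specify the last $e$ rows: take row $n+m$ to have lab part $(0,\ell_m)$ and environment part $(e_m,0)$. These are pairwise isotropic and isotropic against each of the first $n$ rows (the lab and environment symplectic pairings with row $i$ are both $(Z_b)_{i,m}$ and cancel), and the resulting total $X$-part $\bigl(\begin{smallmatrix} I_n & X_b \\ 0 & I_e \end{smallmatrix}\bigr)$ is invertible. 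Row-reducing to $X$-part $=I_{n+e}$ then leaves row $i$ ($i<n$) with environment part $(0,(Z_b)_i)$, so the graph-form parent has precisely your chosen sets $L_m$, and Theorem~\ref{pauli sum} yields $J=J'$ as desired. Alternatively, keep ``any'' completion but invoke Lemma~\ref{envuni} to allow local Cliffords on the environmental qubits; these do not change the child.
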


\begin{proof}[incomplete]
Let $ M^\perp$ be a maximal  commutative subgroup of $S^\perp$  for a mixed graph G. Then, the corresponding elements in S commute as well (because direction of arrows is not important), and form a commutative subgroup, M. If there is an extension of S (not unique) to the stabilizer of to a pure graph state, the extension has $e$ columns. Since the elements of M commute, for $e=1$, we can assign $X$ for all odd products of elements of the basis $S$, and $I$ to even products. Since it is a subgroup, this is consistent. In general, we can at least always assign either $X$ or $I$ in the corresponding columns of the basis, which respects their commutativity. Not all combinations will be valid (for instance, assigning $I^{\otimes e}$ to a node that does not commute with everybody is not allowed) but there exist at least one such combination, by lemma \ref{3colour}. In general, we still have to prove that this combination is always possible.% (this is the unproven part).
%(here's the point I'm a bit uncertain about, because in principle we could have ``cancelling'' anticommutativities between the elements of M, but I think in that case we should be able %to substitute them with X or I's. Note cancelling wouldn't give me this density matrix, since anything other than X's is not in the final density matrix). 
If it exists, any such combination would have as density matrix the one generated by $M^\perp$.

A possible strategy for finding these extension columns is to write the indicator that will indicate the position of the $Y,Z$ positions. For any nondecomposable ${\cal{A}}_V^T$, we need to impose the condition that  in the indicator function, all the $x_j,\,j\in V$, are equal. This we can achieve by taking the products $\prod(x_j+k_k+1)$. This strategy works for non-intersecting elements. 
\begin{lem} Regardless of whether we get intersecting elements or not, we  always obtain $e$ terms in the final product of the indicator.\end{lem}
\begin{proof} Let $M$ be a maximal commutative subgroup. By theorem \ref{MaxCommSub}, $|M|=2^{n-e}$. Let $C$ be the corresponding binary linear code. A binary linear code with $k=n-e$ independent codewords has a parity-check matrix with $e$ rows: the product of any set of these $e$ parity-check conditions will be our indicator function.

\end{proof}
Note that since $e\leq\frac{n}{2}$ by definition, we have $2e\leq n$, which means that $e\leq n-e$. 
\end{proof}
\begin{ex}
Consider the mixed graph given by $$\left(\begin{array}{c|c}
X\otimes Z\otimes I\otimes I\otimes Z&={\cal{A}}_0\\
I\otimes X\otimes Z\otimes I\otimes Z&={\cal{A}}_1\\
I\otimes I\otimes X\otimes Z\otimes Z&={\cal{A}}_2\\
I\otimes I\otimes I\otimes X\otimes Z&={\cal{A}}_3\\
I\otimes I\otimes I\otimes I\otimes X&={\cal{A}}_4
\end{array}\right)$$ Then, the dual is $$\left(\begin{array}{c|c}
X\otimes I\otimes I\otimes I\otimes I&={\cal{A}}^T_0\\
Z\otimes X\otimes I\otimes I\otimes I&={\cal{A}}^T_1\\
I\otimes Z\otimes X\otimes I\otimes I&={\cal{A}}^T_2\\
I\otimes I\otimes Z\otimes X\otimes I&={\cal{A}}^T_3\\
Z\otimes Z\otimes Z\otimes Z\otimes X&={\cal{A}}^T_4
\end{array}\right)$$The rank of the corresponding adjacency matrix is 4, so $e=2$. Commuting with ${\cal{A}}_0^T$ is the subgroup generated by ${\cal{A}}_0^T,{\cal{A}}_2^T,{\cal{A}}_3^T,{\cal{A}}_{\{1,4\}}^T={\cal{A}}_4^T{\cal{A}}_1^T$. But not all of them commute so we choose another element in this subgroup. Commuting (e.g.) with both ${\cal{A}}_0^T$ and ${\cal{A}}_3^T$ is the subgroup generated by ${\cal{A}}_0^T,{\cal{A}}_3^T,{\cal{A}}^T_{\{1,2,4\}}$.  This is a maximal commutative subgroup. We shall find an extension of the stabilizers such that we obtain a parent pure graph state. 

For any nondecomposable ${\cal{A}}_D^T$, we need to impose the condition that  in the indicator function, all the $x_j,\,j\in D$, are equal. This we can achieve by taking the products $\prod(x_j+k_k+1)$. Here we have the composite ${\cal{A}}^T_{\{1,2,4\}}$, which gives the condition $(x_1+x_4+1)(x_2+x_4+1)$ (or any of the equivalent conditions). We assign then in the first column of the extension the entries $Z,Y$ to positions $1,4$, while the remaining entries of the first column will be $X$ or $I$. Since rows 1 and 4 anti-commute, we assign $Z$ to position 1 and $Y$ to position 4 (note that the other noncommutative choice will give the same commutative subgroup, with a change of sign in ${\cal{A}}^T_{\{1,2,4\}}$). Both ${\cal{A}}^T_0$ and ${\cal{A}}^T_2$ anti-commute with ${\cal{A}}^T_1$ and ${\cal{A}}^T_4$, so we assign $X$ to positions 0 and 2. Finally, we assign $I$ to position 3. Now for the second column: Here we assign $Z$ or $Y$ to positions 2 and 4, while we assign $X$ or $I$ to the remaining nodes.  Since ${\cal{A}}'^T_2$ and ${\cal{A}}'^T_4$ commute, we give them the same entry, say $Z$. Now, ${\cal{A}}'^T_0$ is already commutative with all the other rows, so we assign $I$ to position 0, and $I$ to position 1 for the same reason. We assign $X$ to position 3, since ${\cal{A}}'^T_3$ anti-commutes with both ${\cal{A}}'^T_2$ and ${\cal{A}}'^T_4$. This is now a fully commuting set:
$$\left(\begin{array}{c|cc}
X\otimes Z\otimes I\otimes I\otimes Z&X&I\\
I\otimes X\otimes Z\otimes I\otimes Z&Z&I\\
I\otimes I\otimes X\otimes Z\otimes Z&X&Z\\
I\otimes I\otimes I\otimes X\otimes Z&I&X\\
I\otimes I\otimes I\otimes I\otimes X&Y&Z
\end{array}\right)$$

\end{ex} 
\begin{ex}
Let $G$ be the arrrowed 6-clique given by the following stabilizer basis: 
$$\left(\begin{array}{cccccc}
X&Z&Z&Z&Z&Z\\
I&X&Z&Z&Z&Z\\
I&I&X&Z&Z&Z\\
I&I&I&X&Z&Z\\
I&I&I&I&X&Z\\
I&I&I&I&I&X
\end{array}\right)$$ Then, $e=3$. A maximal commutative subgroup is generated by ${\cal{A}}^T_{0},{\cal{A}}^T_{\{1,2\}},{\cal{A}}^T_{\{3,4\}}$. This gives the indicator $(x_1+x_2+1)(x_3+x_4+1)(x_5+1)$. According to this, we assign in the first extension column the entries $Z$ and $Y$ to positions 1 and 2, respectively since their respective rows anti-commute. The remaining entries on this column have to be either $X$ or $I$. We assign $X$ to positions 0,3,4,5, because their respective rows anti-commute with both. Now for the second column: We assign $Z$ and $Y$ to positions 3 and 4, respectively since their respective rows anti-commute. The remaining entries on this column have to be either $X$ or $I$. We assign $X$ to positions 0 and 5, because their respective rows (still) anti-commute with both, $I$ to positions 1 and 2, which respective rows now commute with all rows. Finally, in the last column we assign $Z$ to position 5. The remaining entries on this column have to be either $X$ or $I$. We assign $X$ to position 0, and $I$ to all the other positions, since their rows all commute pairwise now. This is now a fully commuting set.
$$\left(\begin{array}{cccccc|ccc}
X&Z&Z&Z&Z&Z&X&X&X\\
I&X&Z&Z&Z&Z&Z&I&I\\
I&I&X&Z&Z&Z&Y&I&I\\
I&I&I&X&Z&Z&X&Z&I\\
I&I&I&I&X&Z&X&Y&I\\
I&I&I&I&I&X&X&X&Z
\end{array}\right)$$
\end{ex}

\section{Appendix A}\label{AppA}
%\subsection{Definition of the coefficients in $S^T$}
%\vspace{2mm}

We re-iterate that, as with $S$, if some or all the rows of ${\cal A}$ pairwise anti-commute, then some or all of the members of $S^{\perp}$ are defined up
to a global constant of $\pm 1$. This does not affect the commutation relations between members of $S$ and $S^{\perp}$. But which members of $S^{\perp}$ are defined up to $\pm 1$ and how many of them? Consider the $n$-vertex mixed graph ${\overleftarrow{G}}$ with adjacency matrix $A^T$, derived from ${\cal A}^T$. Then $G_b$ is also the undirected graph associated with ${\overleftarrow{G}}$ (as well as with
$G$), as $G_b$ has adjacency matrix $\Gamma = A + A^T$. Then the members of $S^{\perp}$ associated with a $+1$ coefficient are obtained as the subset of row products of ${\cal A}^T$ indexed by sets representing all possible independent sets in $G_b$. But how to compute these sets? For a set $w$, the {\em power set} of $w$, ${\cal P}(w)$ is the set of all subsets of $w$ including the empty set and $w$ itself.
Define $V$ to be a family of subsets of $\Z_n$, specifically let $V$ represent the maximum independent sets in $G_b$ meaning that, for $v \subset Z_n = \{0,1,\ldots,n-1\}$, $v \in V$ iff $v$ is an independent set in $G_b$ and there does not exist a $v'$ which is an independent set in $G_b$ such that $v \subset v'$, $v \ne v'$. Then $G_b$ specifies $V$ precisely and $V$ specifies $G_b$ precisely. For instance, for $n = 6$ let $G_b$ be defined by the adjacency matrix $\Gamma =
\left ( \begin{array}{cccccc}
0 & 0 & 0 & 1 & 1 & 1 \\
0 & 0 & 0 & 0 & 1 & 0 \\
0 & 0 & 0 & 0 & 0 & 0 \\
1 & 0 & 0 & 0 & 0 & 0 \\
1 & 1 & 0 & 0 & 0 & 1 \\
1 & 0 & 0 & 0 & 1 & 0
\end{array} \right )$. Then $V = \{\{0,1,2\},\{2,3,4\},\{1,2,3,5\}\}$. Observe that $V$ also lists the maximum cliques in the graph complement, ${\overline{G_b}}$ of $G_b$. From $V$ we can compute those row subsets of ${\cal A}^T$ whose product is independent of product order, i.e. the row subsets that are fully commutative. We call this set of subsets ${\cal{E}}(V)$ and a member of ${\cal{E}}(V)$ represents a unique member of $S^{\perp}$ with a $+1$ coefficient.
For our example there are $|{\cal{E}}(V)| = 24$ such row subsets, namely
$$ \begin{array}{ll} {\cal{E}}(V) = & \{\emptyset,\{0\},\{1\},\{0,1\},\{2\},\{0,2\},\{1,2\},\{0,1,2\}, \\
 & \{3\},\{1,3\},\{2,3\},\{1,2,3\},\{4\},\{2,4\},\{3,4\},\{2,3,4\}, \\
 & \{5\},\{1,5\},\{2,5\},\{1,2,5\},\{3,5\},\{1,3,5\},\{2,3,5\},\{1,2,3,5\}\}.
\end{array} $$
Then ${\overline{{\cal{E}}(V)}} = {\cal P}(\Z_n) \setminus {\cal{E}}(V)$ is the family of
subsets of $\Z_n$ that represent non-commuting row subsets of ${\cal A}^T$, i.e. a member
of ${\overline{{\cal{E}}(V)}}$ represents a unique member of $S^{\perp}$ with a $\pm 1$
coefficient. The number of them is $|{\overline{{\cal{E}}(V)}}| = 2^n - |{\cal{E}}(V)|$, this being the number of members of $S^{\perp}$ with a $\pm 1$ coefficient. For our example $|{\overline{{\cal{E}}(V)}}| = 2^6 - 24 = 40$. We can express ${\cal{E}}$ in terms of
power sets as follows,
$$ {\cal{E}}(V) = \bigcup_{w \in V} {\cal P}(w). $$

We now provide a recursive algorithm to compute ${\cal{E}}(V)$, given
${\cal A}^T$ (and therefore $V$). Let $w \subset \Z_n$,
$\hat{V} = \bigcup_{w,w' \in V, w \ne w'} w \cap w'$ and, for $w \in V$, let
$V_w = \bigcup_{w' \in V \setminus w} w \cap w'$. For our example, $V = \{\{0,1,2\},\{2,3,4\},\{1,2,3,5\}\}$,
$\hat{V} = \{\{0,1,2\} \cap \{2,3,4\},\{0,1,2\} \cap \{1,2,3,5\}, \{2,3,4\} \cap \{1,2,3,5\}\} = \{\{2\},\{1,2\},\{2,3\}\}$
and $V_{\{0,1,2\}} = \{\{0,1,2\} \cap \{2,3,4\},\{0,1,2\} \cap \{1,2,3,5\}\} = \{\{2\},\{1,2\}\}$.
Then, recursively,
$$ {\cal{E}}(V) = {\cal{E}}(\hat{V}) \cup \bigcup_{w \in V} {\cal P}(w) \setminus {\cal{E}}(V_w). $$
%where we can simplify the family of sets as we go along, as follows. For $w \in \Z_n$ and $V$ a set family in $\Z_n$, where $w \subseteq v$ for one or more $v \in V$, then the set family
%$V' = w \cup V$ can be simplified to $V' = V$. Moreover
%${\cal{E}}(\{\{w\}\}) = {\cal P}(\{w\})$ and, for $V = \{\{w\}\}$, $\hat{V} = \emptyset$.
So, for our example with
$V = \{\{0,1,2\},\{2,3,4\},\{1,2,3,5\}\}$ we obtain
$$ \begin{array}{rl}
{\cal{E}}(\hat{V}) = & {\cal{E}}(\{\{2\},\{1,2\},\{2,3\}\}) \cup {\cal P}(\{0,1,2\}) \setminus {\cal{E}}(\{\{2\},\{1,2\}\}) \\
 & \cup \mz {\cal P}(\{2,3,4\}) \setminus {\cal{E}}(\{\{2\},\{2,3\}\}) \cup
{\cal P}(\{1,2,3,5\}) \setminus {\cal{E}}(\{\{1,2\},\{2,3\}\}) \\
 = & {\cal{E}}(\{\{1,2\},\{2,3\}\}) \cup {\cal P}(\{0,1,2\}) \setminus {\cal P}(\{1,2\}) \\
 &  \cup \mz {\cal P}(\{2,3,4\}) \setminus {\cal P}(\{2,3\}) \cup
{\cal P}(\{1,2,3,5\}) \setminus {\cal{E}}(\{\{1,2\},\{2,3\}\}) \\
 = & \{\{0\},\{0,1\},\{0,2\},\{0,1,2\}\} \cup \{\{4\},\{2,4\},\{3,4\},\{2,3,4\}\}
 \cup {\cal P}(\{1,2,3,5\}),
\end{array} $$
from which we obtain the family of 24 sets written above.

%\vspace{3mm}
\section{Appendix B}\label{AppB}
\subsection{$e = 1$ qubit extensions}
We consider here the simplest case where $e = \frac{1}{2}{\m{rank}}(\Gamma) = 1$.

\begin{lem}
Let $j$ be a vertex of the mixed graph described by $A$, for which $e = 1$, such that vertex
$j$ is either disconnected or
connected to other vertices only via undirected edges. Then ${\cal A}^e_{j,n} = I$.
\label{ignoreundir}
\end{lem}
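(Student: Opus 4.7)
The plan is to recast the single-column extension as a symplectic factorization of $\Gamma$ over $\F_2$. Encode each extension-column entry $a_k := {\cal A}^e_{k,n}$ in the Pauli group, up to a phase, by a pair $(b_k,c_k)\in\F_2^2$ via $a_k \sim X^{b_k}Z^{c_k}$; two such Paulis anticommute iff the symplectic product $b_jc_k + c_jb_k$ equals $1$. The $\pm 1$/phase ambiguity in the choice of $a_k$ is harmless because every constraint I will impose is about commutation alone. First I would record the fact that two rows $j,k$ of ${\cal A}$ anticommute iff $\Gamma_{jk}=A_{jk}+A_{kj}=1$: only the two diagonal positions $\ell=j$ and $\ell=k$ contribute to the row-level symplectic product, with contributions $A_{kj}$ and $A_{jk}$ respectively, while all off-diagonal positions contain only $I$ and $Z$ and therefore contribute $0$. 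Demanding that the first $n$ rows of ${\cal A}^e$ be pairwise commuting then forces $b_jc_k + c_jb_k = \Gamma_{jk}$ for all $j\neq k$. Stacking the pairs as rows of an $n\times 2$ matrix $M$, this is exactly $M\,\omega\,M^T = \Gamma$ off the diagonal (both diagonals vanish automatically), with $\omega = \bigl(\begin{smallmatrix}0&1\\1&0\end{smallmatrix}\bigr)$.

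Next I would show that $M$ has full column rank $2$. If $\mathrm{rank}(M)\le 1$, every row of $M$ would be an $\F_2$-multiple of a single fixed $v\in\F_2^2$, and since $v\omega v^T = 0$ this would give $M\omega M^T = 0$, contradicting $\mathrm{rank}(\Gamma)=2e=2$ from Lemma \ref{rankres}. Hence the rows of $M$ span all of $\F_2^2$.

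Finally, the hypothesis that vertex $j$ is either isolated in $G$ or incident only to undirected edges is equivalent to row $j$ of $\Gamma$ being identically zero, so $(b_j,c_j)$ is symplectically orthogonal to every row of $M$. Since those rows span $\F_2^2$ and $\omega$ is invertible over $\F_2$, this forces $(b_j,c_j)=(0,0)$, i.e.\ ${\cal A}^e_{j,n}=I$, as claimed. I do not anticipate any serious obstacle; the main conceptual step is simply recognizing the extension problem as a symplectic factorization of $\Gamma$, after which the rank-$2$ structure available when $e=1$ immediately pins down the isolated-in-$G_b$ rows of $M$ to $0$.
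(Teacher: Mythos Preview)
Your argument is correct. The symplectic encoding $a_k\sim X^{b_k}Z^{c_k}$, the identification $M\omega M^T=\Gamma$, the rank-2 argument for $M$, and the conclusion that the zero row of $\Gamma$ forces $(b_j,c_j)=0$ are all sound.

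The paper's own proof reaches the same conclusion by a more bare-hands route: since $e=1$ there exist two anticommuting rows $h,k$ of ${\cal A}$, whose extension entries ${\cal A}^e_{h,n},{\cal A}^e_{k,n}$ must therefore be two \emph{distinct} non-identity Paulis; as row $j$ commutes with both rows $h$ and $k$ in ${\cal A}$, the entry ${\cal A}^e_{j,n}$ must commute with both of those distinct non-identity Paulis, and the only Pauli doing so is $I$. Your proof is really the same logical skeleton recast in linear-algebra language: your ``$\mathrm{rank}(M)=2$'' is the abstract form of ``there exist $h,k$ with anticommuting extension entries'', and your ``$(b_j,c_j)$ is symplectically orthogonal to a spanning set, hence zero'' is the abstract form of ``commutes with two anticommuting Paulis, hence $I$''. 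What your formulation buys is a clean framework that would scale to larger $e$ (the factorization $M\omega M^T=\Gamma$ makes the general extension problem transparent), whereas the paper's version is shorter and requires no setup for the case at hand.
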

\begin{pf}
From the conditions in the lemma, row $j$ of ${\cal A}$ commutes with all other rows in ${\cal A}$.
As $e = 1$ then there are at least two other rows of ${\cal A}$, rows $h$ and $k$, such that
${\cal A}^e_{h,n} \ne {\cal A}^e_{k,n}$, ${\cal A}^e_{h,n},{\cal A}^e_{k,n} \in \{X,Y,Z\}$. As we
require row $j$ of ${\cal A}^e$ to commute with rows $h$ and $k$ of ${\cal A}^e$ and, as row $j$ of
${\cal A}$ commutes with rows $h$ and $k$ of ${\cal A}$, then the only choice for ${\cal A}^e_{j,n}$ is $I$.
\end{pf}

\vspace{2mm}

A consequence of Lemma \ref{ignoreundir} is that, given a mixed graph described by $A$, we may immediately,
and wlog, set all ${\cal A}^e_{j,n} = I$ for unconnected and undirected vertices $j$ and focus on solving the
extension problem for the residual directed graph comprising vertices involved in one or more directed
edges. This is implicitly the case for Theorem \ref{graphparent} below.

\begin{thm}
Let ${\cal A}$ be an $n \times n$ Pauli stabilizer matrix such that, from (\ref{dirrank}), $e = 1$.
Then ${\cal A}$ can always be extended to a fully-commuting stabilizer matrix, ${\cal A}^e$
of the form ${\cal A}^e = i^{I_{n+e} \star A^e}X^{I_{n+e}}Z^{A^e}$. Such a matrix stabilizes a graph
state of $n + e = n + 1$ qubits, and has elements from $\{X,Y\}$ on the diagonal and
elements from $\{I,Z\}$ off it.
\label{graphparent}
\end{thm}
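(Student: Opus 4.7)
The plan is to extend ${\cal A}$ by one column and one row to produce a fully commuting $(n+1) \times (n+1)$ matrix, then reduce it to the required graph form by row multiplications. By Lemma~\ref{ignoreundir}, vertices with no incident directed edge receive $I$ in the new column, so we may restrict attention to the directed support of the graph.

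The algebraic input is that $\Gamma = A + A^T$, being symmetric over $\F_2$ with zero diagonal and rank $2e = 2$, is an alternating form of rank $2$; the standard normal form yields a factorization $\Gamma = uv^T + vu^T$ with $u, v \in \F_2^n$ linearly independent. Setting $c_j := ({\cal A}^e)_{j, n}$ to be the Pauli with symplectic vector $(u_j, v_j)$---that is, $I, X, Z, Y$ for $(0,0), (1,0), (0,1), (1,1)$ respectively---ensures that the symplectic pairing $u_j v_k + v_j u_k$ equals $\Gamma_{j,k}$, so the new entries' anti-commutation exactly cancels that of the original rows of ${\cal A}$, making the $n$ extended rows pairwise commuting.

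Next, choose the new row $(t_0, \ldots, t_{n-1}, \tau)$ with $\tau \in \{X, Y\}$ on the new diagonal and each $t_j \in \{I, Z\}$, determined by one linear $\F_2$-equation per row coming from the commutativity constraints (for example, $t_j = Z$ iff $v_j = 1$ when $\tau = X$). For every $j$ with $c_j \in \{X, Y\}$, multiplying row $j$ by row $n$ brings the column-$n$ entry into $\{I, Z\}$, preserves the $\{X, Y\}$ diagonal at position $j$, and keeps all other off-diagonal entries in $\{I, Z\}$, producing a matrix of the required shape.

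The main obstacle is that these row multiplications accumulate phases in $\{\pm 1, \pm i\}$: the $\pm i$ factors are exactly those captured by the prefactor $i^{I_{n+e} \star A^e}$ on $Y$ diagonals, but a residual $-1$ may appear on certain rows. This is resolved by exploiting the available freedoms---the six factorizations of $\Gamma$ (parametrized by $\mathrm{Sp}(2, \F_2) \cong S_3$), the binary choice of $\tau$, and independently the signs of each $c_j$ (corresponding to selecting among the several valid parent graph states for the given mixed graph)---to zero out each problematic row sign. Since the number of independent sign freedoms matches the number of sign constraints (one per reduced row), a consistent choice exists, producing an ${\cal A}^e$ of exactly the stated form.
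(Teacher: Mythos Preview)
Your first three paragraphs constitute a correct and more explicit construction than the paper's. The paper assumes (via Lemma~\ref{rankres}, quoted from~\cite{Brun}) that some commuting one-column extension exists, then massages it into graph form through conjugation on the environmental column (Lemma~\ref{Xcorner}) and successive row multiplications (Lemma~\ref{IZRestrict} and the paragraph after it). You instead build the extension column directly from the rank-$2$ alternating decomposition $\Gamma = uv^T + vu^T$, which is cleaner and self-contained. One omission: you should state explicitly that the final matrix is symmetric; this is automatic once all rows commute and the Pauli types are $\{X,Y\}$ on the diagonal, $\{I,Z\}$ off it (the paper records this as Lemma~\ref{AeSym}).

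Your final paragraph, however, contains a genuine gap. The sign concern is real: with $\tau = X$, the row multiplication $(\text{row }j)\cdot(\text{row }n)$ picks up a factor $-1$ precisely when $u_j = v_j = 1$ and ${\cal A}_{j,j} = X$ (from $X\!\cdot\!Z = -iY$ at position $j$ and $Y\!\cdot\!X = -iZ$ at position $n$). But your proposed fix does not work. There is no ``sign freedom of each $c_j$'' in your construction --- $c_j$ is determined by $(u_j,v_j)$ --- and the counting claim that the freedoms match the constraints is unsupported. Indeed, when $G_b$ is genuinely tripartite with ${\cal A}_{j,j}=X$ throughout, every one of the six $\mathrm{Sp}(2,\F_2)$ factorisations assigns $Y$ to some nonempty colour class, so some $-1$ survives regardless of how you combine your stated freedoms.

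The correct resolution is that signs are irrelevant to what the theorem actually asserts. Once your procedure determines the Pauli \emph{types}, and hence a symmetric binary matrix $A^e$, simply \emph{define} ${\cal A}^e := i^{I_{n+1}\star A^e}X^{I_{n+1}}Z^{A^e}$. This has the stated form and commuting rows by construction. It is a valid parent because each ${\cal A}_j \otimes c_j$ lies in the group generated by the rows of ${\cal A}^e$ up to a $\pm 1$ factor, and the stabilisation condition $s\rho s^\dagger = \rho$ on the traced-out child is insensitive to such factors. The paper's own proof tracks only Pauli types throughout Lemmas~\ref{Xcorner}--\ref{AeSym} and never mentions signs, so this is the intended reading there as well.
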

\begin{pf} (of Theorem \ref{graphparent})
\begin{lem}
For ${\cal A}^e = \left ( {\cal A}^e_{i,j} \in \{I,X,Z,Y\}, 0 \le i,j \le n \right )$, we can wlog
fix ${\cal A}^e_{n,n} = X$.
\label{Xcorner}
\end{lem}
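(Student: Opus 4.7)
The plan is to combine two operations that each preserve the local physics of the mixed state $\rho$: (a) row multiplications within ${\cal A}^e$, which merely re-choose the generators of the same stabilizer group, and (b) Clifford conjugation applied only to the environmental qubit (the qubit indexed $n$), which, by Lemma \ref{envuni}, leaves the local density matrix on qubits $0,\ldots,n-1$ untouched. Together these give enough freedom to rotate the corner entry to $X$.

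First I would argue that we may assume ${\cal A}^e_{n,n}\ne I$. If the corner were $I$, then row $n$ would act trivially on qubit $n$. Since the only purpose of the extension is to add commutation via the new qubit, at least one entry of column $n$ must be non-$I$; say ${\cal A}^e_{i,n}\ne I$ for some $i<n$. Replacing row $n$ by the product of row $n$ and row $i$ leaves the stabilizer group (and hence the state being stabilised) unchanged, but produces a non-$I$ entry at position $(n,n)$. So we may take ${\cal A}^e_{n,n}\in\{X,Y,Z\}$.

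Next I would invoke Lemma \ref{ConjP}. The group generated by $\{H,N\}$ acts transitively (up to a $\pm 1$ sign) on $\{X,Y,Z\}$ under conjugation, so there exists $U\in\{I,H,N,N^2,HN,NH\}$ with $U{\cal A}^e_{n,n}U^\dag=\pm X$. Applying $U$ only on qubit $n$ modifies only column $n$ of ${\cal A}^e$; in particular the $(n,n)$ entry becomes $\pm X$. If a minus sign appears, an additional Pauli $Z$ conjugation on qubit $n$ flips it, using $ZXZ^\dag=-X$. By Lemma \ref{envuni}, every such conjugation on the environmental qubit leaves $\rho$ invariant, so this normalisation is truly without loss of generality.

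The main subtlety is that the conjugation $U$ simultaneously permutes the off-diagonal entries ${\cal A}^e_{i,n}$ for $i<n$ among $\{I,X,Z,Y\}$, which can temporarily break the graph form in the rest of column $n$. This is not an obstacle for the present lemma, whose claim concerns only the $(n,n)$ entry, but it is what forces the remainder of the proof of Theorem \ref{graphparent} to invoke further row multiplications and local Clifford adjustments on the other qubits to re-establish the $\{X,Y\}$-diagonal/$\{I,Z\}$-off-diagonal graph form. Hence the lemma is established, and may be used freely in what follows.
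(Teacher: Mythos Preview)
Your argument is correct and follows essentially the same route as the paper: first use a row multiplication to eliminate the case ${\cal A}^e_{n,n}=I$ (relying on the fact that some entry of column $n$ must be non-$I$ since $e\ge 1$), and then conjugate on the environmental qubit via Lemma~\ref{ConjP}, invoking Lemma~\ref{envuni} to justify that this is without loss of generality. Your extra care with the residual $\pm$ sign and your remark about the off-diagonal disturbance are refinements the paper either glosses over or defers to later steps, but they do not change the substance of the proof.
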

\par {\addtolength{\leftskip}{5mm}
\begin{pf} (of Lemma \ref{Xcorner})
If ${\cal A}^e_{n,n} = X$ then we are done. If ${\cal A}^e_{n,n} \in \{Z,Y\}$ then,
by Lemma \ref{ConjP}, we can conjugate
${\cal A}^e_{n,n}$ to $X$ and, by Lemma \ref{envuni}, this leaves the density matrix stabilized
by ${\cal A}$ unchanged, and preserves the pairwise commutation properties.
If  ${\cal A}^e_{n,n} = I$ then we multiply row $n$ by any other row, $j$, where
${\cal A}^e_{j,n} \in \{X,Y,Z\}$, i.e.
${\cal A}^e_{n,n} \leftarrow {\cal A}^e_{j,n}{\cal A}^e_{n,n}$. At least one such row, $j$, must
exist for $e \ge 1$. This operation preserves the codespace of ${\cal A}^e$, and results in
${\cal A}^e_{n,n} \in \{X,Y,Z\}$. Then we continue with conjugation if necessary, as discussed
previously.
\end{pf}

Although Lemma \ref{Xcorner} restricts ${\cal A}^e_{n,n} = X$ we emphasise that there is nothing
special about $X$ here, we could've chosen to conjugate to $Y$ or $Z$. But fixing to $X$
simplifies the problem for us without costing anything in terms of generality, as,
by Lemma \ref{envuni}, the associated density matrix is unchanged.
\par }
\begin{lem}
If the rows $i$ and $j$ of ${\cal A}$ are pairwise non-commuting then \\
$$ {\cal A}^e_{i,n}, {\cal A}^e_{j,n} \in \{X,Z,Y\}, \mf {\cal A}^e_{i,n} \ne {\cal A}^e_{j,n}. $$
\label{3colour}
\end{lem}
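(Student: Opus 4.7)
The plan is to reduce the statement to a single-qubit commutation fact, applied at the newly added column. First I recall the standard criterion for tensor products of Pauli matrices: two Pauli strings commute if and only if the number of tensor positions at which their corresponding single-qubit Paulis anticommute is even, and two single-qubit Paulis $P, Q \in \{I,X,Y,Z\}$ anticommute if and only if both are non-identity and $P \ne Q$; otherwise they commute. This is the only algebraic input needed.

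Now suppose rows $i$ and $j$ of ${\cal A}$ are pairwise non-commuting. Counting anticommuting positions across the first $n$ columns, this count is odd. By construction, the extension must yield ${\cal A}^e$ with rows that pairwise commute, so rows $i$ and $j$ of ${\cal A}^e$ must commute. Since $e=1$ means only a single column is appended, the parity of the count over all $n+1$ positions is determined by the parity at position $n$ alone relative to that over the first $n$ positions. To flip the parity from odd to even, the entries ${\cal A}^e_{i,n}$ and ${\cal A}^e_{j,n}$ must anticommute.

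By the single-qubit criterion above, anticommutation of ${\cal A}^e_{i,n}$ and ${\cal A}^e_{j,n}$ forces both to lie in $\{X,Y,Z\}$ and to be distinct, which is exactly the claim. I expect no genuine obstacle here; the statement is essentially a one-line consequence of the symplectic commutation formula applied at the appended column, given that $e=1$ forbids cancellation against other appended columns. The only thing to be careful about is invoking Lemma~\ref{ignoreundir} implicitly, so that we do not worry about rows $i, j$ already commuting within the first $n$ columns — but the hypothesis that rows $i,j$ anticommute already rules this out.
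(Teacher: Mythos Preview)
Your proof is correct and follows essentially the same approach as the paper's own argument: both observe that since rows $i$ and $j$ of ${\cal A}$ anticommute while rows $i$ and $j$ of ${\cal A}^e$ must commute, the single appended entries ${\cal A}^e_{i,n}$ and ${\cal A}^e_{j,n}$ must themselves anticommute, which forces both to be non-identity and distinct. Your version is slightly more explicit in phrasing this via the parity (symplectic) commutation criterion, but the underlying reasoning is identical.
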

\par {\addtolength{\leftskip}{5mm}
\begin{pf} (of Lemma \ref{3colour})
If rows $i$ and $j$ of ${\cal A}$, $i \ne j$, do not pairwise commute then,
if ${\cal A}^e_{i,n} = {\cal A}^e_{j,n}$, or if ${\cal A}^e_{i,n}$ and/or ${\cal A}^e_{j,n}$ equals
$I$, then rows $i$ and $j$ of ${\cal A}^e$ do not pairwise commute either. But we
require ${\cal A}^e$ to be fully pairwise commuting. So ${\cal A}^e_{i,n} \ne {\cal A}^e_{j,n}$ and
${\cal A}^e_{i,n}, {\cal A}^e_{j,n} \in \{X,Z,Y\}$.
\end{pf}
\par }
\begin{lem}
If the rows $i$ and $j$ of ${\cal A}$ are pairwise commuting then \\
$$ {\cal A}^e_{i,n} = {\cal A}^e_{j,n} \mf \m{ or } \mf \m{at least one of
${\cal A}^e_{i,n}$ or ${\cal A}^e_{j,n}$ equals $I$}. $$
\label{comm}
\end{lem}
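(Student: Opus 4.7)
The plan is to reduce the claim to the standard commutation rule for tensor products of Pauli operators and then invoke the elementary fact about when two single-qubit Paulis commute. Since $e=1$, the extended matrix ${\cal A}^e$ has exactly one additional column, namely column $n$, relative to ${\cal A}$. So the only difference between row $i$ of ${\cal A}$ and row $i$ of ${\cal A}^e$ (and similarly for row $j$) is the appended entry ${\cal A}^e_{i,n}$, respectively ${\cal A}^e_{j,n}$.

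First, I would recall the well-known fact that two tensor products $\bigotimes_k P_k$ and $\bigotimes_k Q_k$ of single-qubit Paulis commute if and only if the number of positions $k$ at which $P_k$ and $Q_k$ anticommute is even. Applied to rows $i$ and $j$ of ${\cal A}$, the hypothesis that these rows commute means that the number of anticommuting positions among the first $n$ columns is even. Applied to rows $i$ and $j$ of ${\cal A}^e$, the requirement (by construction of ${\cal A}^e$) that these extended rows pairwise commute means that the analogous count across all $n+1$ positions is also even. Subtracting, position $n$ must contribute $0$ anticommuting pairs, i.e. ${\cal A}^e_{i,n}$ and ${\cal A}^e_{j,n}$ must commute as single-qubit Paulis.

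Second, I would finish by noting the elementary classification: for $P, Q \in \{I,X,Y,Z\}$, $PQ = QP$ iff $P = Q$ or $P = I$ or $Q = I$. Hence either ${\cal A}^e_{i,n} = {\cal A}^e_{j,n}$, or at least one of ${\cal A}^e_{i,n}, {\cal A}^e_{j,n}$ equals $I$, which is exactly the conclusion. There is no real obstacle here; the only thing to be careful about is that $e=1$ guarantees there is a single extra column, so the parity bookkeeping reduces cleanly to a single-position condition. The lemma is essentially the natural converse of Lemma \ref{3colour}, and the same parity-of-anticommuting-positions argument does both jobs.
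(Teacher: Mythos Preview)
Your argument is correct and follows essentially the same approach as the paper: since the extended rows of ${\cal A}^e$ must commute and the original rows of ${\cal A}$ already commute, the appended single-qubit entries ${\cal A}^e_{i,n}$ and ${\cal A}^e_{j,n}$ must themselves commute, which forces them to be equal or one of them to be $I$. The paper's proof is terser (it omits the explicit parity count) but the underlying reasoning is identical.
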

\par {\addtolength{\leftskip}{5mm}
\begin{pf} (of Lemma \ref{comm})
Rows  $i$ and $j$ of ${\cal A}^e$ must pairwise commute and, if rows $i$ and $j$ of ${\cal A}$
pairwise commute, then matrices
${\cal A}^e_{i,n}$ and ${\cal A}^e_{j,n}$ must also pairwise commute, hence the conditions of the
lemma.
\end{pf}
\par }
\begin{lem}
Wlog,
the elements ${\cal A}^e_{n,j}$, $0 \le j < n$ can be restricted to elements from $\{I,Z\}$.
\label{IZRestrict}
\end{lem}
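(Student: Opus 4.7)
The plan is to modify row $n$ of $\mathcal{A}^e$ by row multiplications with rows $0,\ldots,n-1$ so as to push every entry at positions $0,\ldots,n-1$ into $\{I,Z\}$, and then to observe that this is compatible with the corner-$X$ normalization of Lemma \ref{Xcorner}.

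First I would start with any valid extension $\mathcal{A}^e$, and for each $j$ with $0\le j<n$ and $\mathcal{A}^e_{n,j}\in\{X,Y\}$, replace row $n$ by the product $(\text{row }j)(\text{row }n)$. Row $j$ has diagonal entry $\mathcal{A}^e_{j,j}\in\{X,Y\}$ and off-diagonal entries $\mathcal{A}^e_{j,k}\in\{I,Z\}$ for $k\ne j$, $k<n$. So the multiplication has two local effects on positions $0,\ldots,n-1$ of row $n$: at position $j$, the new entry is a product of two elements of $\{X,Y\}$, hence equals $I$ or $\pm iZ$; at each position $k\ne j$ with $k<n$, the old entry is multiplied by an element of $\{I,Z\}$, which preserves the partition $\{I,Z\}$ vs.\ $\{X,Y\}$ at that position. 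Consequently, processing the indices $j$ in any order moves each one into $\{I,Z\}$ without disturbing the cleaning already done at the other positions.

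Since rows $0,\ldots,n-1$ and row $n$ of $\mathcal{A}^e$ all pairwise commute (by construction of the extension), these row multiplications preserve the codespace of $\mathcal{A}^e$ and the pairwise commutation relations. The product of two commuting Pauli tensor products carries only a global $\pm 1$ phase, which is absorbed into the sign ambiguity already tolerated in the stabilizer formalism. After processing every $j$ with an initial $\{X,Y\}$ entry, row $n$ restricted to positions $0,\ldots,n-1$ lies in $\{I,Z\}$.

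The main subtlety, and expected obstacle, is that these operations may alter $\mathcal{A}^e_{n,n}$, requiring Lemma \ref{Xcorner} to be re-applied; I would need to check that doing so does not undo the off-diagonal $\{I,Z\}$ form just achieved. Both tools available in Lemma \ref{Xcorner} preserve it: premultiplying row $n$ by another row $j$ changes each $\mathcal{A}^e_{n,k}$ with $k<n$ by a factor $\mathcal{A}^e_{j,k}\in\{I,Z\}$, which maps $\{I,Z\}$ into itself; and conjugation on column $n$ acts only at tensor position $n$ and leaves $\mathcal{A}^e_{n,j}$ for $j<n$ untouched. So the two normalizations are compatible, completing the argument.
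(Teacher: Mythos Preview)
Your proposal is correct and follows essentially the same approach as the paper: successively left-multiply row $n$ by row $j$ whenever $\mathcal{A}^e_{n,j}\in\{X,Y\}$, then restore the corner using Lemma~\ref{Xcorner}. You are in fact more explicit than the paper about why cleaning position $j$ does not disturb already-cleaned positions and why re-applying Lemma~\ref{Xcorner} preserves the off-diagonal $\{I,Z\}$ form; the paper states the procedure and its compatibility with Lemma~\ref{Xcorner} but does not spell out these stability checks.
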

\par {\addtolength{\leftskip}{5mm}
\begin{pf} (of Lemma \ref{IZRestrict})
The last row, ${\cal A}^e_{n} = ({\cal A}^e_{n,j}, 0 \le j \le n)$, of
${\cal A}^e$, must pairwise commute with each of the $n$ previous rows. We have, wlog, already fixed
${\cal A}^e_{n,n} = X$. We choose elements of ${\cal A}^e_{n,j}$, $0 \le j < n$ from
$\{I,X,Z,Y\}$. For each ${\cal A}^e_{n,j} \in \{X,Y\}$ we successively replace the
$n$th row of ${\cal A}^e$, i.e. ${\cal A}^e_{n}$, with the product of rows $j$ and $n$, i.e.
${\cal A}^e_{n} \rightarrow {\cal A}^e_{j}{\cal A}^e_{n}$ which ensures that, now,
${\cal A}^e_{n,j} \in \{I,Z\}$, and continue in this fashion until
${\cal A}^e_{n,j} \in \{I,Z\}$, $0 \le j < n$. This process preserves the pairwise commuting property
for the rows of ${\cal A}^e$.
\end{pf}
\par }
The row product operations of lemma \ref{IZRestrict} may change ${\cal A}^e_{n,n} = X$ to
${\cal A}^e_{n,n} \in \{Z,Y\}$ in which case, using Lemma \ref{Xcorner},
a further conjugation operation on the $n$th column
of ${\cal A}^e$ is required to change ${\cal A}^e_{n,n}$ back to $X$. Note that, by Lemma
\ref{envuni}, such an operation leaves unchanged the density matrix, $\rho$, described by
${\cal A}$.

At this point ${\cal A}^e$ has been transformed, by row operations and
conjugation of the last column, to a matrix with $X$ or $Y$ on the diagonal, and with $\{I,Z\}$ off
the diagonal, apart from the last column where ${\cal A}^e_{j,n} \in \{I,X,Y,Z\}$, $0 \le j < n$, and
${\cal A}^e_{j,n} = X$. If ${\cal A}^e_{j,n} \in \{X,Y\}$ then we replace the $j$th row thus
${\cal A}^e_{j} \rightarrow {\cal A}^e_{n}{\cal A}^e_{j}$ so that, now,
${\cal A}^e_{j,n} \in \{I,Z\}$.

At this point ${\cal A}^e$ has been transformed, by row operations and
conjugation of the last column, to a matrix with $X$ or $Y$ on the diagonal, and with $\{I,Z\}$ off
the diagonal.

\begin{lem}
${\cal A}^e$ is of the form
${\cal A}^e = i^{I_{n+e} \star A^e}X^{I_{n+e}}Z^{A^e}$,
so must be a symmetric matrix.
\label{AeSym}
\end{lem}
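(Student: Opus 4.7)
The plan is to break the statement into two parts: first, establish that after the row operations and column conjugations performed in the preceding lemmas, every entry of ${\cal A}^e$ lies in the required restricted alphabet per position, so that ${\cal A}^e$ factors as $i^{I_{n+e}\star A^e}X^{I_{n+e}}Z^{A^e}$ for a binary matrix $A^e$; second, derive symmetry of $A^e$ as a direct consequence of the pairwise commutation of rows.

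For the factorisation part, I would simply consolidate what the preceding sequence of reductions (lemmas \ref{Xcorner}, \ref{3colour}, \ref{comm}, \ref{IZRestrict}, together with the subsequent row replacements and column conjugation paragraphs) has already achieved: diagonal entries of ${\cal A}^e$ lie in $\{X,Y\}$ and off-diagonal entries lie in $\{I,Z\}$. Checking the claimed formula pointwise then just amounts to: at position $(j,j)$, $(i^{I_{n+e}\star A^e})_{jj}(X^{I_{n+e}})_{jj}(Z^{A^e})_{jj}$ equals $X$ when $A^e_{jj}=0$ and $iXZ=Y$ when $A^e_{jj}=1$; at position $(j,k)$ with $j\ne k$, it equals $I$ or $Z$ according to $A^e_{jk}$. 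So the alphabet restriction is precisely what is needed for the factorisation to be valid.

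For the symmetry part, I would fix two distinct rows $r$ and $s$ of ${\cal A}^e$ and count, position by position, whether their Pauli entries commute or anti-commute. At every index $k\notin\{r,s\}$ both entries lie in $\{I,Z\}$ and therefore commute. At index $r$, row $r$ carries a diagonal entry from $\{X,Y\}$ and row $s$ carries an off-diagonal entry from $\{I,Z\}$; these anti-commute precisely when $A^e_{sr}=1$. By the same argument at index $s$, the two rows anti-commute at that position precisely when $A^e_{rs}=1$. Since the full rows must commute, the total number of anti-commuting positions must be even, giving $A^e_{sr}\oplus A^e_{rs}=0$, i.e.\ $A^e_{sr}=A^e_{rs}$.

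The one part that requires a little care, though not difficulty, is justifying that the symmetry conclusion is still valid after all the intermediate transformations: row multiplications preserve the codespace and pairwise commutation, and the environmental column conjugation by Lemma \ref{envuni} preserves the density matrix and by Lemma \ref{ConjP} preserves pairwise commutation (being a uniform local unitary on one tensor factor). There is no genuine obstacle; the only subtle point is keeping track that every reduction step preserves the full pairwise commuting property, so that the commutation counting argument applies to the final ${\cal A}^e$ and not merely to the original stabiliser matrix.
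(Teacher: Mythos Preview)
Your proposal is correct and follows essentially the same approach as the paper: the paper's proof is the terse two-line statement that, given pairwise commutation and the $\{X,Y\}$-on-diagonal, $\{I,Z\}$-off-diagonal structure, symmetry is forced, while you spell out explicitly the position-by-position anti-commutation count (only indices $r$ and $s$ can contribute, yielding $A^e_{sr}\oplus A^e_{rs}=0$) that the paper leaves to the reader. Your additional verification of the factorisation formula and your remarks on preservation of commutation under the intermediate transformations are correct elaborations of points the paper also takes as understood.
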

\par {\addtolength{\leftskip}{5mm}
\begin{pf} (of Lemma \ref{AeSym})
All rows of ${\cal A}^e$ pairwise commute. Given that ${\cal A}^e$ has $X$ or $Y$ on the diagonal and
$\{I,Z\}$ off the diagonal, then this is only possible if ${\cal A}^e$ is symmetric.
\end{pf}
\par }

Combining Lemmas \ref{Xcorner}, \ref{3colour}, \ref{IZRestrict}, and \ref{AeSym}, one arrives at
the proof of Theorem \ref{graphparent}.
\end{pf}

\subsection{Conditions for $G$ to have mixed rank $e=1$}
Remember that $G$ is the mixed graph defined by adjacency matrix $A$, and $G_b$ the undirected graph with adjacency matrix $\Gamma=A+A^T$.

\begin{lem}\label{completetripartite}$G$ has mixed rank $e=1$ $\Leftrightarrow $ $G_b$ is a complete tripartite graph\footnote{Thanks to Jon Eivind Vatne for useful discussions about graph theory.}, with  possible isolated vertices, and where one of the partitions can be empty (so that $G_b$ is actually a complete bipartite graph).
\end{lem}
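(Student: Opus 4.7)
The key observation is that, by Lemma \ref{rankres}, $e=1$ is equivalent to $\mathrm{rank}_{\F_2}(\Gamma)=2$. So the claim becomes: $\Gamma$ (symmetric, zero-diagonal, over $\F_2$) has rank $2$ iff $G_b$ is complete tripartite together with (possibly) isolated vertices, allowing one of the three parts to be empty. The plan is to prove each direction by a direct structural analysis of the row space of $\Gamma$.

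For the ($\Leftarrow$) direction I would just check rank by inspection. Partition $V(G_b)=V_0\sqcup V_1\sqcup V_2\sqcup V_3$, where $V_0$ is the set of isolated vertices and $V_1,V_2,V_3$ are the tripartition classes (with $V_3$ allowed to be empty). For $u\in V_i$ ($i\in\{1,2,3\}$) the corresponding row of $\Gamma$ is the indicator vector $\chi_{V\setminus(V_0\cup V_i)}$, and for $u\in V_0$ the row is $0$. Writing $r_i$ for the row attached to $V_i$ one sees $r_3=r_1+r_2$, so the row space is spanned by $\{r_1,r_2\}$; these are linearly independent whenever at least two of $V_1,V_2,V_3$ are non-empty, giving $\mathrm{rank}(\Gamma)=2$.

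For the ($\Rightarrow$) direction, suppose $\mathrm{rank}(\Gamma)=2$, and pick a basis $\{r_1,r_2\}$ of the row space over $\F_2$; set $r_0=0$ and $r_3=r_1+r_2$. Define $V_i=\{u:\text{row }u\text{ of }\Gamma\text{ equals }r_i\}$. The task is to show the $V_i$'s realise the claimed tripartite structure. I would exploit two facts:
\begin{itemize}
\item Zero-diagonal: for $u\in V_i$ we have $0=\Gamma_{u,u}=(r_i)_u$, so $r_i$ vanishes on all of $V_i$.
\item Symmetry: for $u\in V_i,\ v\in V_j$, $(r_i)_v=\Gamma_{u,v}=\Gamma_{v,u}=(r_j)_u$, so each $r_i$ is \emph{constant} on each $V_j$.
\end{itemize}
Combining these, each $r_i$ takes a well-defined value $a_{ij}\in\{0,1\}$ on $V_j$, with $a_{ii}=0$ and $a_{ij}=a_{ji}$. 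Because row $u\in V_0$ is zero, I would first deduce $a_{0j}=a_{j0}=0$, so $V_0$ is isolated in $G_b$. Then the identity $r_3=r_1+r_2$ together with $a_{33}=0$ forces $a_{13}=a_{23}$, and a short case analysis of the remaining $a_{ij}$ (using that $r_1,r_2$ are non-zero and distinct) pins down $a_{ij}=1$ whenever $i,j\in\{1,2,3\}$ and $i\neq j$. This is exactly the complete tripartite pattern on $V_1\cup V_2\cup V_3$, with $V_0$ isolated. The degenerate case where one of $V_1,V_2,V_3$ is empty corresponds to $G_b$ being complete bipartite (plus isolated vertices), which is consistent with the statement.

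The main obstacle, such as it is, is bookkeeping the $a_{ij}$: one must be careful to rule out the spurious possibility that some $a_{ij}=0$ off the diagonal of the $3\times 3$ block, which would collapse the tripartite structure to something smaller. Each such collapse, however, forces either a linear dependence among $r_1,r_2$ or an equality $V_i=V_j$ of the partition classes, contradicting either $\mathrm{rank}(\Gamma)=2$ or the definition of the $V_i$. Once this is pinned down, both directions of the equivalence follow.
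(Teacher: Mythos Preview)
Your argument is correct, and it is genuinely different from the paper's.

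The paper proves the lemma through the stabilizer extension picture: an edge $(i,j)$ in $G_b$ forces the extension entries ${\cal A}^e_{i,n}$ and ${\cal A}^e_{j,n}$ to be distinct elements of $\{X,Y,Z\}$ (Lemma~\ref{3colour}), so $e=1$ requires $G_b$ to be $3$-colourable; completeness of the tripartition is then argued by observing that vertices assigned different colours must anti-commute, hence must be adjacent. The converse is obtained by explicitly colouring the parts by $X,Y,Z$ and checking that one extension column suffices.

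You bypass the Pauli/extension machinery entirely and work directly with $\mathrm{rank}_{\F_2}(\Gamma)=2$: partitioning the vertex set by which of the four row-space elements $0,r_1,r_2,r_1+r_2$ a vertex's row equals, and then using symmetry and the zero diagonal to force the off-diagonal block constants $a_{ij}$ to be $1$. This is cleaner and entirely self-contained --- it does not invoke Lemmas~\ref{ignoreundir} or~\ref{3colour}, nor any quantum interpretation. The paper's route, on the other hand, makes the connection to the extension column explicit, which is what the surrounding subsections actually use (e.g.\ in counting the six parents in Lemma~\ref{6choices}); your proof establishes the graph-theoretic fact but leaves the translation to the $\{X,Y,Z\}$ colouring of ${\cal A}^e_{*,n}$ as a separate (easy) step. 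One small point worth making explicit in your write-up: when one of $V_1,V_2,V_3$ is empty the corresponding $a_{ij}$ are vacuous, so the case analysis ruling out $a_{ij}=0$ should be phrased on the non-empty parts only; you allude to this in the final paragraph, but it would be cleaner to state it outright.
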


\begin{pf} If $(i,j)$ is an edge in $G_b$, then rows $i$ and $j$ of the stabilizer basis are anti-commuting so, by lemma \ref{3colour}, 
the corresponding elements in the extension column of ${\cal A}$, $({\cal A}^e_{i,n}, {\cal A}^e_{j,n})$ can only take the pairs $(X,Z)$, $(X,Y)$, $(Z,X)$, $(Z,Y)$, $(Y,X)$, and
$(Y,Z)$. In other words the edge $(i, j)$ in $G_b$ forces vertices $i$ and $j$ to be associated
with different operators in the extension column ${\cal A}^e_{n}$ - one can think of $X,Z,Y$ as three colours. Therefore $e=1$ is only possible if $G_b$ is three-colourable, which implies that the graph is empty (1-coloured), bipartite (2-coloured) or tripartite (3-coloured). If $G_b=\emptyset$ then $rank(\Gamma)=0$, so we discard this option. Isolated vertices correspond to the rows that commute with all other rows, and can be extended by the column entry $I$, as seen in lemma \ref{ignoreundir}. Let us now consider the connected part of $G_b$:

We are now going to prove that, if $G$ has mixed rank $e=1$, then the connected part of  $G_b$ is  complete tripartite or complete bipartite, with possible isolated vertices; since it has to be 2- or 3-coloured, we have to prove that all the vertices on each partition are connected to all the vertices in the other partition(s).
%Suppose this is not true. Then there is at least one pair of vertices $(u,v)$ such that $u$ and $v$ are in different partitions, and there is no edge between them. The partitions have assigned two different entries for the extension column, in the set $\{1,\omega,\omega^2\}$. \footnote{Not 0 because this is assigned only to the rows that connect with any other rows, and we are considering here the connected part of $G_b$.} But that means that the rows $u$ and $v$ anti-commute, since the entries on the extension column anti-commute. This gives that there must be an edge between $u$ and $v$ in $G_b$, which is a contradiction.
Consider any pair of vertices $(u,v)$ such that $u$ and $v$ are in different partitions. The partitions are assigned two different colours/entries for the extension column, in the set $\{Z,X,Y\}$. \footnote{Note that I is not present because since it does not change the commutativity it can only be assigned to the rows that connect with any other rows (which correspond to isolated nodes on $G_b$), and we are considering here the connected part of $G_b$.} Therefore the rows $u$ and $v$ anti-commute, since the entries in the extension column anti-commute. Therefore there must be an edge between $u$ and $v$ in $G_b$, which proves that the graph is complete bipartite or complete tripartite.

We will now prove that if $G_b$ is a complete bipartite or tripartite graph with possible isolated vertices, then $G$ has mixed rank $e=1$:

In general, the extension of $G$ depends only on $G_b$: We have that $(u,v)\in G_b$ iff row $u$ anti-commutes with row $v$. This is true because  all unitaries $U_j$ that form the stabilizer basis for a mixed graph state have exactly one $X$ in position $j$, and $Z$ and $I$ in the other positions. If $(u,v)\in G_b$, we have in rows $u$ and $v$ the pair $X_uZ_v,I_uX_v$ ($u\rightarrow v$) or $X_uI_v,Z_uX_v$ ($u\leftarrow v$), and since all entries other than $u$ and $v$ commute (the set $\{Z,I\}$ is commutative), then row $u$ anti-commutes with row $v$. On the other hand, if row $u$ anti-commutes with row $v$, then one of the pairs $X_uZ_v,I_uX_v$ or $X_uI_v,Z_uX_v$   must be present in rows $u$ and $v$, because as discussed all entries other than $u$ and $v$ commute, and these pairs correspond to $u\rightarrow v$ and $u\leftarrow v$, so $(u,v)\in G_b$. This means that the choice of the extension column/row will depend solely on $G_b$, since the only factor is the commutativity/anti-commutativity of the rows.

The extension of $G$ by a node  depends on whether we can extend the matrix by one column. We assign to each of the partitions the colours $Z,X,Y$. Since the graph is complete bipartite or tripartite, then any two nodes in the same partition get assigned the same colour/entry, and any two nodes in a different partition get assigned a different colour/entry. In particular, this means that there are no arrowed edges between any two nodes in the same partition, which means that their rows commute. Since they have the same extension entry, their extended rows also commute. This also means that there is an arrowed edge between any two nodes in different partitions, which means that their respective rows anti-commute. Since the extension entry is different for each of these two nodes, and is in the set $\{Z,X,Y\}$, this means that the extension rows now commute. This proves that one extension column is enough.

%Finally, adding isolated vertices does not change the fact that we can extend by one column, since being isolated on $G_b$ implies that they have no arrowed edges on $G$; therefore, their respective rows have to commute with any other rows, and we assign to them the entry $I$ in the extension column, as seen in lemma \ref{ignoreundir}.
\end{pf}

%\begin{lem}
%If $G_b$ is such that $\m{PAR}_H((-1)^b) = 2^{n-2}$, then $G_b$ is a three-colourable graph.
%\label{threecolour}
%\end{lem}
%\begin{pf}
%If $(ij)$ is an edge in $G_b$, then rows $i$ and $j$ of $A_G$ are anti-commuting, so
%$(c_i,c_j)$ can only take the pairs $(X,Z)$, $(X,Y)$, $(Z,X)$, $(Z,Y)$, $(Y,X)$, and
%$(Y,Z)$. In other words the edge $(ij)$ in $G_b$ forces vertices $i$ and $j$ to be associated
%with different operators in $C$ - one can think of $X,Z,Y$ as three colours. Therefore a solution
%for $C$ is only possible if $G_b$ is three-colourable. But we know, by corollary
%\ref{onecolumn}, that a one-column solution for $C$ is only possible if $\m{PAR}_H((-1)^b) = 2^{n-2}$.
%\end{pf}

%\section{Mixed graph states for general $e$}
%In order to be deal with those ${\cal A}$ where $e > 1$ it is convenient to consider a recursive
%`greedy' approach. For instance, for $e = 2$ we must extend ${\cal A}$ by two columns and
%rows so as to obtain ${\cal A}^e$. The first column/row extension must decrement the mixed
%rank, $e$, by 1, otherwise the second column row/extension would have to decrement $e$ by 2, which
%is impossible. Thus we can consider the process in two recursive steps, where each step reduces
%$e$ by 1 and produces 3 parents.
%
%\begin{lem}
%A mixed state described by ${\cal A}$ has $3^e$ parents, and can be written as
%$$ \rho = \sum_{i = 0}^{3^e - 1} c_i\rho_i, $$
%where $\sum_{i = 0}^{3^e - 1} c_i = 1$ and the $3^e$ density matrices $\rho_i$ are obtained from
%the $3^e$ parents, $\ket{\psi_{r,i}}$, by tracing out the $e$ environmental qubits.
%\end{lem}

\subsection{On parents and children for $e = 1$}

In general, as we shall now show, for $e = 1$ we obtain (up to local equivalence) a maximum of three mixed states, obtained from a maximum of three parents (up to local equivalence), $\ket{\psi_e}$, as stabilized by ${\cal A}^e$.
% Remember that $G$ is the mixed graph defined by adjacency matrix $A$, and $G_b$ the undirected graph that has adjacency matrix $\Gamma=A+A^T$.

\begin{lem}\label{6choices} A maximum of 6 distinct children is possible when $e = 1$.\end{lem}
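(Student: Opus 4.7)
The plan is to count the distinct valid extension columns $\mathcal{A}^e_{\cdot,n}$ and then argue that each determines a child uniquely. By Lemma \ref{completetripartite}, when $e=1$ the graph $G_b$ is a complete tripartite (possibly just bipartite) graph together with possible isolated vertices. First I would apply Lemma \ref{ignoreundir} to set $\mathcal{A}^e_{j,n}=I$ for every isolated vertex $j$, which reduces the problem to the connected portion of $G_b$. Then, by Lemma \ref{3colour}, every pair of anti-commuting rows (equivalently, every edge of $G_b$) forces the corresponding two extension entries to be distinct elements of $\{X,Y,Z\}$. Because vertices within the same partition of the complete tripartite $G_b$ have commuting rows, Lemma \ref{comm} forces them to share the same extension entry. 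Hence the extension column is exactly a proper coloring of the partitions of $G_b$ by colors from $\{X,Y,Z\}$.

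Next I would enumerate: if $G_b$ has three non-empty partitions, there are $3!=6$ bijections from the partitions to the three colors. If $G_b$ is complete bipartite (only two partitions), there are $3\cdot 2 = 6$ ordered choices of two distinct colors for the two partitions. In either case there are at most $6$ valid extension columns. The case of a single partition (empty connected part) is excluded by $e=1 \ne 0$.

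Then I would argue that each extension column determines the parent $\ket{\psi_e}$ uniquely up to local Pauli equivalence in the environment, using the proof of Theorem \ref{graphparent}: Lemmas \ref{Xcorner}, \ref{IZRestrict}, and \ref{AeSym} show that the remaining freedom in selecting the extension row and in re-normalizing $\mathcal{A}^e$ can be absorbed into row multiplications on $\mathcal{A}^e$ (which preserve the codespace) and unitary conjugations on the environmental qubit (which, by Lemma \ref{envuni}, leave the child $\rho$ invariant). Tracing out the single environmental qubit from each of the at most $6$ parents then yields at most $6$ distinct child density matrices.

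The main obstacle is ensuring that no additional freedom produces further inequivalent children beyond what the coloring captures. This is the content of the normalization steps in Theorem \ref{graphparent} combined with Lemma \ref{envuni}: any two extensions sharing the same partition-coloring differ only by environment-side operations, so their children coincide. Consequently the total number of distinct children is bounded by the number of colorings, which is at most $6$.
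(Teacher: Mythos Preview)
Your proposal is correct and follows essentially the same colouring argument as the paper: reduce to the connected part of $G_b$, invoke Lemma~\ref{completetripartite} to get a complete tri/bipartite structure, and count the at most $6$ assignments of $\{X,Y,Z\}$ to the partitions. If anything you are more careful than the paper, since you explicitly justify via Theorem~\ref{graphparent} and Lemma~\ref{envuni} that the remaining freedom in the extension row and in $\mathcal{A}^e_{n,n}$ does not produce further children, whereas the paper simply asserts that the six extension columns yield six parents.
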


\begin{pf} Isolated nodes in $G_b$ have the extension $I$, so we consider only non-isolated nodes. If $G_b$ is not connected then there are two or more non-empty subgraphs that are not connected to each other; however,  then $rank(\Gamma)>2$, which contradicts $e=1$. This is because $G_b$ is a nondirected graph for which any non-empty subgraph has rank at least 2, and the rank of $G_b$ is the sum of the ranks of the unconnected subgraphs.

%Empty nodes are ok in $G_b$, and the only choice for non-empty $G_b$ is a 0 (the Pauli matrix I) for these nodes, since they have to commute with everybody.

%So we only have to consider connected $G_b$ for the digraphs extended by only one column.

Lemma \ref{completetripartite} states that $G$ is at most 3-colourable if $e = 1$, as $G_b$ is complete tripartite (or complete bipartite). $G_b$ determines univocally whether two nodes have the same colour or not. There are 3 choices for the first colour, 2 for the second, and 1 for the remaining colour. In total that makes in principle 6 possible extension columns, yielding 6 different parents. Note that we only consider choices for ${\cal{A}}^e_{n,j}$, for $j=0,\ldots,n-1$. The reason for this is that that, as shown previously, we can fix ${\cal{A}}^e_{n,n}=X$.
\end{pf}
%However, a permutation $X\leftrightarrow  Y$ gives the same graph, but for a digit $a_n$ in the diagonal.  HOWEVER, THE CHOICE IS EXCLUDING a_n, SO THIS REDUCES JUST FROM 12 TO 6. The reason for this is that in binary form $(A|B)$ you have to add the last row to all rows having either $ X (\omega)$ or $Y (\omega^2=\omega+1)$ in the extended column. The permutation $\omega \leftrightarrow \omega+1$ will induce the choice $a_n\leftrightarrow  a_n +1$, leaving the rest of the graph invariant.  %So of the 12 parent graphs, 6 will be identical to the %other 6 except for the linear term $i^{x_n}$.
%This means that the choice of extension column reduce to half, that is, to 3.
%We can always add the appending linear term to the resulting parents to get the other 6 parents.

%The resulting parents are not necessarily distinct, depending on the automorphism group of the graph (any permutation in the automorphism group of the graph will give you an automorphism of the parents, for permutations of the choice of colours).
\begin{lem} \label{equiv} When $e=1$ we obtain at most 3 local unitary inequivalent children.
\end{lem}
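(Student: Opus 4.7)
The plan is to leverage Lemma \ref{6choices} (which produces at most 6 candidate children indexed by the 6 ways of assigning the three colors $\{X,Y,Z\}$ to the three partitions of $G_b$) and then pair them off using a single-qubit conjugation on the environmental qubit. Specifically, I would partition the 6 extensions into 3 pairs according to which of the three partitions $P_1,P_2,P_3$ of $G_b$ receives the color $X$ in the extension column; within each such pair, the remaining two partitions are assigned $Y$ and $Z$ in the two possible orders.

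Next, I would invoke Lemma \ref{ConjP} to locate a single-qubit Clifford that fixes $X$ and interchanges $Y$ and $Z$ up to a sign. Reading off the table, the element $NH$ acts as $X\mapsto X$, $Y\mapsto Z$, $Z\mapsto -Y$, which is exactly such a swap. Applying this conjugation to the environmental qubit of the parent stabilizer matrix ${\cal A}^e$ permutes the entries of the extension column by $(Y\,Z)$ (carrying signs) while leaving the first $n$ columns intact, so it converts one extension in the pair into the other up to signs on some entries.

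By Lemma \ref{envuni}, any unitary applied to the environmental qubit leaves the lab density matrix $\rho_L$ unchanged, so the two extensions in each pair give rise to children that agree after tracing out the environment, modulo the extra signs introduced by $NH$. The remaining task is to dispose of these signs. A sign flip in a stabilizer corresponds, in the Boolean-function representation of the parent, to the addition of a linear term $x_k$, which changes $\ket{\psi_e}$ to $P\ket{\psi_e}$ for some tensor product of Pauli matrices $P$; writing $P=P_L\otimes P_E$ and tracing out the environment gives $P_L\rho_L P_L^\dagger$, i.e.\ the new child is local-Pauli equivalent to the old one. Hence the two children in each pair are LU equivalent.

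Combining these three steps, the six extensions cluster into three LU-equivalence classes indexed by the partition carrying color $X$, so there are at most three locally inequivalent children. The main obstacle I anticipate is the careful bookkeeping of the signs produced by $NH$: one needs to verify that the corresponding Pauli factor $P$ genuinely splits as $P_L\otimes P_E$ so that it really amounts to a local Pauli on the lab qubits, rather than an operator that entangles the lab with the environment before the partial trace; this follows because the sign flip lives on a single stabilizer row, whose factor on the environmental qubit is itself a single Pauli.
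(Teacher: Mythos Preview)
Your argument is correct and rests on the same core idea as the paper's proof: pair the six extension columns according to which partition receives the label $X$, and relate the two members of each pair via the $Y\leftrightarrow Z$ swap on the environmental qubit.

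The execution differs slightly. The paper explicitly exhibits a local unitary $Z_K(DN)_n$ on the \emph{parent} (with $Z$ acting on the lab qubits whose extension entry is $Z$, and $DN$ on the environmental qubit), arguing via symmetrisation of ${\cal A}^e$ and local complementation at node~$n$ that this unitary sends one parent to the other; LU-equivalence of the children then follows immediately. Your route is more abstract: you conjugate only the environment by $NH$, invoke Lemma~\ref{envuni} to conclude the child is unchanged, observe that the resulting stabilizer matrix matches the second extension up to signs on some rows, and then absorb those signs into a Pauli $P$ on the parent. Since any Pauli word is already a tensor product of single-qubit operators, the factorisation $P=P_L\otimes P_E$ is automatic (your closing worry is a non-issue), and a second application of Lemma~\ref{envuni} disposes of $P_E$. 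One small point you glossed over: the last row ${\cal A}^e_n$ is the same for both extensions in a pair, because ${\cal A}^e_{n,j}\in\{I,Z\}$ is determined solely by whether ${\cal A}^e_{j,n}\in\{I,X\}$ or $\{Y,Z\}$, a distinction the swap preserves; this is what guarantees the two stabilizer groups really coincide up to signs. With that detail filled in, your argument is complete and arguably more streamlined than the paper's, trading the explicit local-complementation bookkeeping for the standard fact that stabilizer states with the same generators up to signs are Pauli-equivalent.
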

\begin{pf}
Denote the resultant matrix after applying permutation $(Z\ Y)$ on  ${\cal{A}}^e_{n,j}$,  $j=0,\ldots,n-1$, as ${\cal{A}'}^e_{n,j}$. The last rows of ${\cal{A}}$ and ${\cal{A}}'$ are equal.
We can make both ${\cal{A}}$ and ${\cal{A}}'$ symmetric by multiplying by the last row the rows where the last column entry is $X$ or $Y$. %${\cal{A}}^e_{n,j}=X$ or ${\cal{A}}^e_{n,j}=Y$, we
Thus the rows where ${\cal{A}}^e_{n,i}=X$ yield the same result as the rows where ${\cal{A}'}^e_{n,i}=X$, while we multiply the last row (which is equal in both matrices) by the rows where ${\cal{A}}^e_{n,j}={\cal{A}'}^e_{n,k}=Y$. This means that ${\cal{A}}^e_{n,k}={\cal{A}'}^e_{n,j}=Z$, so they do not get multiplied in their respective matrices.  %In the parent graph we get that the new node $n$ is connected to all $j$ and $k$.
The effect of this is to complement the neighbourhood of node $n$, which comprises nodes $j$ and $k$, and to induce the permutation  $(X\ Y)$ in the diagonal position of these rows.

On the other hand, %in the parent graph we get that the new node $n$ is connected to all $j$ and $k$. I
if we perform %ocal Complementation (LC) on node $n$ in the parent corresponding to ${\cal{A}}$, which is given by
the local unitary operation $Z_{K}(DN)_n$, %D_{{\cal{N}}_n}$
where
$Z=\left(\begin{array}{cc}
  1&0\\
  0&-1
  \end{array}\right)$, $D=\left(\begin{array}{cc}
  1&0\\
  0&i
  \end{array}\right)$ and $N=\frac{1}{\sqrt{2}}\left(\begin{array}{cc}
  1&i\\
  1&-i
  \end{array}\right)$, and $K$ is the largest set such that ${\cal{A}}^e_{n,k}=Z\ \forall k\in K$, the effect is the same as described above (see \cite{thesis} for a description of the effect of $N$). This implies that of the 6 parents, only at most 3 are local unitary inequivalent.
\end{pf}

We therefore obtain the following result.

\begin{lem}
Let ${\cal A}$ be such that, from (\ref{dirrank}), $e = 1$. Then ${\cal A}$ represents the mixed
graph state
$$ \rho = \sum_{j=0}^5  c_j\rho_j, \mf \sum_{j=0}^5 c_j = 1, \mf c_j \ge 0, \forall j. $$
%c_0\rho_0 + c_1\rho_1 + c_2\rho_2, \mf c_0 + c_1 + c_2 = 1, c_j\geq0, $$
where
$\rho_j = \ketbra{\phi_{0,j}}{\phi_{0,j}} + \ketbra{\phi_{1,j}}{\phi_{1,j}}$, $0 \le j < 6$, are
the six density matrices obtained from the six parents
$\ket{\psi^e_{j}}$, as stabilized by ${\cal A}^e_{j}$.
\label{mix}
\end{lem}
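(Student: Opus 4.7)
The plan is to assemble the proof from the three ingredients already in hand: Lemma \ref{6choices}, which enumerates the six admissible extensions when $e=1$; Lemma \ref{equiv}, which tells us three of the resulting parents can be locally equivalent to the other three; and the general definition of a mixed graph state as a convex sum over the density matrices arising from all valid parents (introduced via the directed triangle example in the introduction).

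First I would make the enumeration explicit. Since $e=1$, by Lemma \ref{completetripartite} the graph $G_b$ restricted to its non-isolated part is complete tripartite (possibly bipartite), and Lemma \ref{ignoreundir} lets us set ${\cal A}^e_{j,n}=I$ on isolated vertices without loss of generality. Lemma \ref{Xcorner} fixes ${\cal A}^e_{n,n}=X$. The remaining freedom is exactly a choice of how to assign the three colours $\{X,Z,Y\}$ to the (at most three) colour classes of $G_b$, giving $3!=6$ valid extensions ${\cal A}^e_j$, $0\le j<6$, and hence six parent graph states $\ket{\psi^e_j}$.

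Next I would produce the six children. For each $j$, the parent $\ket{\psi^e_j}$ is a pure state on $n+1$ qubits, and tracing out the $(n{+}1)$st (environmental) qubit yields a child density matrix $\rho_j$. Because there is a single environmental qubit, the partial trace decomposes cleanly: writing $\ket{\psi^e_j}$ in the Boolean function form of Section \ref{mixeddef}, the two evaluations at $x_n=0,1$ give the pure vectors $\ket{\phi_{0,j}}$ and $\ket{\phi_{1,j}}$, and $\rho_j = \ketbra{\phi_{0,j}}{\phi_{0,j}} + \ketbra{\phi_{1,j}}{\phi_{1,j}}$. By construction every row of ${\cal A}$ commutes with the extension column, so each row $U$ of ${\cal A}$ satisfies $U\rho_j U^\dagger = \rho_j$; this is exactly the stabilization condition for the mixed graph state associated with ${\cal A}$.

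Finally I would invoke the definition. As spelled out in the introduction (directed triangle example) and in Section \ref{mixeddef}, the mixed graph state is declared to be the convex sum $\rho=\sum_j c_j\rho_j$ over all density matrices $\rho_j$ arising from valid minimum extensions, with $c_j\ge 0$ and $\sum_j c_j=1$. Each $\rho_j$ remains stabilized by the rows of ${\cal A}$, and convexity preserves this property, so $\rho$ itself satisfies $s\rho s^\dagger=\rho$ for all $s\in S$. Even though, by Lemma \ref{equiv}, there are at most three local-unitary inequivalent children among the six, we retain all six in the sum because (as noted after the equivalence lemma in the introduction) replacing one $\rho_j$ by a locally equivalent copy does not in general preserve local equivalence of $\rho$. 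The one point requiring a small verification is that the enumeration of six extensions in Lemma \ref{6choices} is exhaustive up to the fixing choices of Lemmas \ref{Xcorner} and \ref{IZRestrict}; this is the only potentially delicate step, but it is already established by the colour-counting argument in the proof of Lemma \ref{6choices}.
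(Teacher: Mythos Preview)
Your proposal is correct and follows the same approach as the paper: both derive the result from Lemma \ref{6choices}, with the convex-sum definition of the mixed graph state. The paper's own proof is in fact the single line ``Follows immediately from lemma \ref{6choices}'', so your version is a faithful (and more detailed) expansion of that; the one phrase to tighten is ``every row of ${\cal A}$ commutes with the extension column'', which is not quite the right justification for $U\rho_j U^\dagger=\rho_j$ --- that follows instead from the parent being stabilized by the extended rows and then tracing out the environment.
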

\begin{pf} Follows immediately from lemma \ref{6choices}.
%The proof that   there are a maximum of 6 extensions that we need to consider is given in lemma \ref{6choices}.
\end{pf}

Although only at most 3 of them are locally inequivalent, we include them all in the general sum $\rho$, since changing $\rho_j$ for a  locally equivalent $\rho_j'$ does not necessarily give local equivalence in $\rho$.

\end{document}